\algrenewcommand\algorithmicrequire{\textbf{Input:}}
\algrenewcommand\algorithmicensure{\textbf{Output:}}
\algnewcommand{\BlackBox}[1]{%
    \begin{flushleft}
    \hspace{-.7cm}
    \textbf{Available Functions:}
    {\raggedright #1}
    \end{flushleft}
}
\algnewcommand{\Initialize}[1]{%
    \begin{flushleft}
    \hspace{-.7cm}
    \textbf{Initialize:}
    {\raggedright #1}
    \end{flushleft}
}
\newcommand{\ALG@lineautorefname}{Step}
\pgfplotsset{compat=newest,compat/show suggested version=false}
\definecolor{Gray}{gray}{0.9}
\newcommand{\CA}{{\mathcal A}}
\newcommand{\CB}{{\mathcal B}}
\newcommand{\CC}{{\mathcal C}}
\newcommand{\CD}{{\mathcal D}}
\newcommand{\CE}{{\mathcal E}}
\newcommand{\CF}{{\mathcal F}}
\newcommand{\CP}{{\mathcal P}}
\newcommand{\CX}{{\mathcal X}}
\newcommand{\CY}{{\mathcal Y}}
\newcommand{\fpkq}{\CF_{p}(k, q)}
\newcommand{\fpkp}{\CF_{p}(k, p)}
\newcommand{\BC}{{\mathbb C}}
\newcommand{\BF}{{\mathbb F}}
\newcommand{\BR}{{\mathbb R}}
\newcommand{\fp}{\BF_{p}}
\newcommand{\fq}{\BF_{q}}
\newcommand{\fqn}{\BF_{q}^{n}}
\newcommand{\fqx}{\BF_{q}[X]}
\newcommand{\dH}{d_{{\rm H}}}
\newcommand{\dc}{d_{{\rm ch}}}
\newcommand{\rhoh}{\rho_{{\rm H}}}
\newcommand{\rhos}{\rho_{{\rm ch}}}
\newcommand{\rhosbar}{\overline{\rho}_{{\rm ch}}}
\newcommand{\rhohbar}{\overline{\rho}_{{\rm H}}}
\newcommand{\abs}[1]
{{\raisebox{-0.25\depth}{$\biggl\lvert$}}{#1}\raisebox{-0.25\depth}{$\biggr\rvert$}}
\newcommand{\norm}[1]
{\left\|{#1}\right\|}
\newcommand{\bigabs}[1]
{{\raisebox{-0.25\depth}{$\biggl\lvert$}}{#1}\raisebox{-0.25\depth}{$\biggr\rvert$}}
\newcommand{\herm}[1]
{{#1}^{\dagger}}
\DeclareMathOperator{\betaf}{B}
\DeclareMathOperator{\Tr}{Tr}
\DeclareMathOperator{\tr}{tr}
\DeclareMathOperator{\rank}{rank}
\DeclareMathOperator{\Image}{Im}
\DeclareMathOperator{\CCP}{CP}
\DeclareMathOperator{\CRS}{CRS}
\DeclareMathOperator{\RS}{RS}
\DeclareMathOperator{\GRS}{GRS}
\DeclareMathOperator{\Vol}{Vol}
\DeclareMathOperator*{\expec}{\mathbb{E}}
\DeclareMathOperator{\GS}{GS}
\DeclareMathOperator{\GRScover}{GRS-cover}
\DeclareMathOperator{\CRScover}{CRS-cover}
\DeclareMathOperator{\GRSdecode}{GRS-decode}
\DeclareMathOperator{\polylog}{polylog}
\newtheorem{theorem}{Theorem}
\newtheorem{lemma}{Lemma}
\newtheorem{corollary}{Corollary}
\newtheorem{conjecture}{Conjecture}
\theoremstyle{definition}
\newtheorem{definition}{Definition}
\theoremstyle{remark}
\newcommand{\hide}[1]
{{\iffalse #1 \fi}}
\title{Covering in Hamming and Grassmann Spaces: New Bounds and Reed–Solomon-Based Constructions}
\author{
    Samin Riasat, %\IEEEauthorrefmark{1}
    \IEEEmembership{Student Member, IEEE}, and 
    Hessam Mahdavifar, %\IEEEauthorrefmark{1}
    \IEEEmembership{Member, IEEE}
    \thanks{
        Samin Riasat and Hessam Mahdavifar are with the Department of Electrical and Computer Engineering, Northeastern University, Boston, MA 02115, USA
        (email: \{%\href{mailto:gooty.s@northeastern.edu}{gooty.s},
        \href{mailto:riasat.s@northeastern.edu}{riasat.s}, \href{mailto:h.mahdavifar@northeastern.edu}{h.mahdavifar}\}@northeastern.edu). 
    }
    \thanks{
        This paper was presented in part at the IEEE International Symposium on Information Theory in June 2025 \cite{riasat2025covering}. This work was supported by NSF under Grant CCF-2415440 and the Center for Ubiquitous Connectivity (CUbiC) under the JUMP 2.0 program.
    }
}
\date{}
\begin{document}

% abstract and (if needed) index terms
\IEEEtitleabstractindextext{%
    \begin{abstract}
        %``THIS PAPER IS ELIGIBLE FOR THE STUDENT PAPER AWARD."
        We study covering problems in Hamming and Grassmann spaces through a unified coding-theoretic and information-theoretic framework. Viewing covering as a form of quantization in general metric spaces, we introduce the notion of the average covering radius as a natural measure of average distortion, complementing the classical worst-case covering radius. By leveraging tools from one-shot rate-distortion theory, we derive explicit non-asymptotic random-coding bounds on the average covering radius in both spaces, which serve as fundamental performance benchmarks.
        
        On the construction side, we develop efficient puncturing-based covering algorithms for generalized Reed–Solomon (GRS) codes in the Hamming space and extend them to a new family of subspace codes, termed character-Reed–Solomon (CRS) codes, for Grassmannian quantization under the chordal distance. Our results reveal that, despite poor worst-case covering guarantees, these structured codes exhibit strong average covering performance. In particular, numerical results in the Hamming space demonstrate that RS-based constructions often outperform random codebooks in terms of average covering radius. In the one-dimensional Grassmann space, we numerically show that CRS codes over prime fields asymptotically achieve average covering radii within a constant factor of the random-coding bound in the high-rate regime. Together, these results provide new insights into the role of algebraic structure in covering problems and high-dimensional quantization.
    \end{abstract}
    
    \begin{IEEEkeywords}
       Covering problem, covering algorithm, Reed--Solomon codes, subspace codes. 
    \end{IEEEkeywords}
}
\maketitle
\IEEEdisplaynontitleabstractindextext
\begingroup\renewcommand\thefootnote{\textsection}
\endgroup

%\tableofcontents

\section{Introduction}
\label{sec:intro}

Covering problems lie at the heart of several questions in coding theory, information theory, and signal processing, where one seeks structured families of objects that approximate the ambient space with minimal distortion. 
They are mathematically appealing in their own right, but have also found a wide range of technical applications in data compression, signal processing, clustering, sampling, and robust system design, especially in high-dimensional spaces, to name a few~\cite{torquato10, toth22}. % \textcolor{red}{please provide a reference for each of these applications}
In discrete settings, such as the Hamming space, the covering radius of a code captures its ability to represent arbitrary vectors, while in geometric settings like the Grassmann space, analogous notions arise through metrics such as the chordal distance on subspaces. Despite their shared conceptual foundation, these two domains have typically been treated with separate analytic tools and constructions. This work develops a unified perspective on covering in Hamming and Grassmann spaces, focusing on both the algorithmic task of finding nearby codewords and the quantitative task of understanding the distortion inherent to different code families.

The notion of subspace codes and their application in non-coherent communication was first developed in the seminal work of Koetter and Kschischang~\cite{KK} in the context of randomized network coding. Specifically, they developed a coding theory framework for subspace coding over finite fields. Inspired by this, a parallel framework in the field of real/complex numbers was later developed by Soleymani and Mahdavifar~\cite{Hessam22}. They introduced a new algebraic construction for one-dimensional complex subspace codes called \emph{character-polynomial (CP) codes} and later extended it to higher dimensions~\cite{Hessam21}. 

In the Hamming space, we study the covering problem through the lens of Reed--Solomon (RS) codes \cite{reed1960polynomial}. This well-known family of codes is not only theoretically rich but also widely used in practice, ranging from hard-disk drives to distributed storage and computing. 
%They belong to the class of codes known as \emph{maximum distance separable (MDS)} codes, meaning that they achieve the \emph{Singleton bound}~\cite{Huffman03}. 
Although various properties of RS codes have been studied extensively in the literature, they continue to inspire the development of new codes to this day. %One such example is the class of subspace codes known as \emph{character-polynomial (CP) codes} recently introduced by Soleymani and Mahdavifar~\cite{Hessam22, Hessam21}. 
This includes the aforementioned CP codes \cite{Hessam22}, which are essentially the concatenation of a subcode of \emph{generalized} Reed--Solomon (GRS) code, where a certain subset of the coefficients of the message polynomial is set to zeros, with a character function mapping the codeword coordinates to the complex unit circle. In a recent work, Gooty et al.~\cite{gooty2025precodingdesignlimitedfeedbackmiso} used CP codes to construct new precoding designs for multiple-input multiple-output (MIMO) systems with limited feedback, an important problem in wireless communications. %In particular, they established that the covering radius of CP codes can be used to characterize the quantization error of the corresponding quantization problem in the Grassmann space. 

%\todo{Find more applications.}

\subsection{Problem Motivation}

Covering can be viewed as a form of quantization in a general metric space, which is central to problems involving the approximation or representation of continuous (or large discrete) spaces using structured, finite subsets. In a canonical covering problem, one seeks to cover a metric space with the minimum number of spheres of a prescribed radius, referred to as the covering radius. This guarantees that any point in the space can be approximated by mapping it to the center of a nearby sphere within the covering radius.

A natural approach to the covering problem is to leverage well-established code families in the underlying metric space that possess strong minimum-distance properties, while shifting attention to the algorithmic challenge of quantization. Specifically, the goal is to design efficient procedures that map an arbitrary element of the space to a nearby codeword, without requiring optimal (nearest-neighbor) decoding. In the Hamming space, GRS codes are known to meet the redundancy bound~\cite[Corollary~11.1.3]{Huffman03}, implying that their worst-case covering radius is large and therefore unattractive from a classical covering perspective. To the best of our knowledge, this remains the only general covering result available for RS codes.

Despite this unfavorable worst-case guarantee, we show in this paper that the average covering radius of GRS codes is significantly smaller than the redundancy bound. This observation highlights a fundamental gap between worst-case and typical covering behavior and provides strong motivation for developing \textit{efficient covering algorithms} for RS-based constructions. Moreover, since subspace codes can be obtained by mapping RS codes to the complex domain, these insights naturally extend to covering problems in Grassmann spaces, with direct relevance to applications such as limited-feedback wireless communication systems.
%Therefore, while previous work in this line focuses on the worst case, in this work we shift our attention to average-case error estimates. 

\subsection{Related Work}

The theory of covering codes has a rich history in classical information theory. 
Early work in this area includes characterizing upper and lower bounds on the covering radius and its computational complexity~\cite{Cohen85, McLoughlin84}, bounds on the minimal size of a code of a given covering radius~\cite{Cohen86, vanLint88}, the complexity of bounding the covering radius of a binary code~\cite{Cohen97}, the relationship between the covering radius of a code and its subcodes~\cite{Brualdi98}, the covering radius of extended Reed--Solomon codes~\cite{Dur94, Dur91}, among others. We refer the reader to the extensive list of references following \cite[Chapter~15]{Huffman03}. More recently, there has been a renewed interest in determining the covering radius of various classes of codes, including Melas codes \cite{shi2022covering}, Zetterberg type codes \cite{shi2023covering,shi2025determining}, Euclidean codes \cite{shi2025covering}, and extended algebraic-geometry codes \cite{zhu2025dual}, as well as bounds on the expected Hamming distortion of linear codes over $\mathbb{F}_{2}$ and lattices in $\mathbb{R}^{n}$~\cite{ordentlich2026voronoisphericalcdflattices}. However, none of these recent works studied the problem of designing \textit{efficient covering algorithms} and only focused on determining the covering radius. Related works in the subspace domain include asymptotic bounds and constructions~\cite{barg2002bounds, blackburn2011asymptoticbehaviorgrassmanniancodes}, covering bounds in Grassmann spaces over finite fields \cite{qian2022coveringgrassmanniancodesbounds}, efficient decoding algorithms for CP codes~\cite{riasat2024decodinganalogsubspacecodes} and, more recently, sensing subspace codes~\cite{mahdavifar2024subspace,gooty2025efficientdecoderssensingsubspace}.

Despite the extensive early literature and the recent renewed interest in determining the covering radius of specific code families, the design and analysis of \textit{efficient covering algorithms} has remained largely unexplored. Moreover, existing works predominantly focus on worst-case covering guarantees, leaving the typical or average-case behavior of covering schemes less understood. We hope that the present work helps renew interest in the covering problem by shifting attention toward algorithmic and average-case perspectives, particularly in light of its growing relevance to modern applications such as quantization, lossy compression, limited-feedback communication systems, and high-dimensional signal representation.
%and they use CP codes to solve the problem in the Grassman space. 
%Barg and Nogin~\cite{barg2002bounds} derived asymptotic packing and covering bounds for Grassmannian spaces under chordal distance, providing geometric benchmarks for code performance in $G_{1, n}(\BC)$. In particular, any covering result for CRS codes must lie above their Grassmannian sphere covering bound.

\subsection{Our Contribution}

In this paper, we develop a unified algorithmic and information-theoretic framework for covering in both Hamming and Grassmann spaces. On the algorithmic side, we propose an efficient puncturing-based covering algorithm for GRS codes. The algorithm operates by successively puncturing the code and invoking an off-the-shelf decoder on the resulting shortened codes, until a nearby codeword is identified. We show that this procedure efficiently, i.e., with polynomial-time complexity, finds, for any input vector, a codeword within the covering radius of the input vector. We analyze the average-case performance of the proposed algorithm by characterizing the average number of punctures required for success and by estimating the resulting average covering radius, defined as the expected distance between the input vector and the selected codeword. In the Hamming space, we derive theoretical bounds on the average number of punctures and partially resolve a conjecture posed in the preliminary version of this work~\cite[Conjecture~1]{riasat2025covering}. These results demonstrate that, despite their poor worst-case covering guarantees, GRS codes exhibit favorable average covering behavior.

To extend these ideas to subspace covering in the complex domain, we introduce character-Reed–Solomon (CRS) codes, which generalize CP codes by removing structural constraints tailored to packing. We show how the proposed puncturing-based algorithm can be adapted to CRS codes, yielding an efficient covering method for one-dimensional Grassmannian spaces over complex numbers under the chordal distance.

From an information-theoretic perspective, we formalize the notion of average covering radius as a natural measure of average distortion when a code is used for quantization. Building on one-shot rate-distortion results of Elkayam and Feder~\cite{elkayam20}, we derive explicit non-asymptotic random-coding bounds on the average covering radius in both Hamming and Grassmann spaces. A numerical analysis of CRS codes over prime fields suggests that, at extreme high rates, these algebraic constructions asymptotically achieve average covering radii within a constant factor of the random-coding bound in the one-dimensional Grassmann space. Moreover, numerical results in the Hamming space demonstrate that RS-based codes frequently outperform random codebooks with respect to the average covering radius.

The rest of the paper is organized as follows. 
In \autoref{sec:preliminaries} we provide the necessary background on GRS and subspace codes. 
In \autoref{sec:lossy} we define the average covering radius and explore its connection to the average distortion measure in lossy source coding.
%Using results from rate-distortion theory, random coding bounds on the average covering radius in Hamming and Grassmann spaces are derived at the end of this section. 
In \autoref{sec:covering-grs} we present the covering algorithm for GRS codes along with a theoretical analysis of \hide{the average number of punctures and the fraction of coverage of the ambient space,}its performance as well as numerical results. 
In \autoref{sec:CRS-code} we introduce CRS subspace codes and present a covering algorithm for these codes.
Finally, we conclude the paper in \autoref{sec:conclusion} with remarks on possible future directions. 

\section{Preliminaries}
\label{sec:preliminaries}

Throughout this paper, $q$ denotes a positive integer power of a prime number $p$. 
Given a code $\CC \subseteq \fqn$ and a codeword $c \in \CC$, we denote the Hamming sphere of radius $\tau$ centered at $c$ by $B(c, \tau)$ and its volume by $\Vol_{q}(\tau, n) := |B(c, \tau)|$.

\subsection{Reed--Solomon Code}

Fix $k \le n \le q$. Consider the \emph{message space}:% of an $[n, k]_{q}$ Reed--Solomon (RS) code 
\begin{align*}
    \CF(k, q) 
    &:= \{f \in \fqx: \deg(f) < k\}, 
\end{align*}
which consists of all polynomials of degree less than $k$ over $\fq$. 
%We call $\CF(k, q)$ a \emph{message space}. 
The elements of $\CF(k, q)$ are called \emph{message polynomials}, whose coefficients represent message symbols. 
The coefficient of $X^{j}$ in $f \in \CF(k, q)$ is denoted by $f_{j}$. 
    
%\todo{Fix this.} 
Given distinct $\alpha_{1}, \dots, \alpha_{n} \in \fq$ (called \emph{evaluation points}), the encoding of $f \in \CF(k, q)$ 
%, with coefficients as the message symbols,
in the \emph{Reed--Solomon code} $\RS := \RS_{n}(\CF(k, q))$ of length $n$ and dimension $k$ over $\fq$ is defined as follows:
\begin{align}
    \label{eq:rs}
    \RS(f) 
    &:= (f(\alpha_{1}), \dots, f(\alpha_{n})). 
\end{align}
In addition, given arbitrary 
$v_{1}, \dots, v_{n} \in \fq^{\times}$ (called \emph{column multipliers}), the encoding of $f \in \CF(k, q)$ in the \emph{generalized Reed--Solomon code} $\GRS := \GRS_{n}(\CF(k, q))$ of length $n$ and dimension $k$ over $\fq$ is defined as follows:
\begin{align}
    \label{eq:grs}
    \GRS(f) 
    &:= (v_{1} f(\alpha_{1}), \dots, v_{n} f(\alpha_{n})).
\end{align}

%\begin{remark}
%    \label{rem:mds}
%\note{not sure what to explain here}
$\RS$ and $\GRS$ above are well known to be \emph{maximum distance separable (MDS)}, i.e., they are $[n, k, d]_{q}$ codes 
%(i.e. they achieve the Singleton bound). 
with $d := n - k + 1 \ge 1$. 
Note also that $\GRS = \RS$ when $v_{i} = 1$ for all $i$. 

\subsection{Chordal Distance and Subspace Code}

For an ambient vector space $W$, we use $\CP(W)$ and $\CP_{m}(W)$ to denote the set of all subspaces of $W$ and the set of all $m$-dimensional subspaces of $W$, respectively. In particular, $\CP_{m}(\BC^{n})$ is referred to as a \emph{Grassmann space} and is denoted by $G_{m, n}(\BC)$. \hide{The elements of $G_{m, n}(\BC)$ are called \emph{$m$-planes}. }Any \hide{$m$-plane }$U \in G_{m, n}(\BC)$ is equipped with the natural inner product $\langle u, v\rangle := \herm{u} v$ for $u, v \in U$. %Through the rest of this paper, we reserve $W$ to denote $\BC^{n}$. 

\begin{definition}[{\cite[\S~3]{Conway96,Hessam22}}]
    \label{def:chordal}
    Given $U, V \in G_{m, n}(\BC)$\hide{ be $m$-planes}, let $u_i \in U$ and $v_i \in V$ for $i \in \{1, \dots, m\}$ such that $|\langle u_i, v_i\rangle|$ is maximal, subject to the condition that they form orthonormal bases for $U$ and $V$, respectively. The $i$-th \emph{principal angle} $\theta_i$ between $U$ and $V$ is defined as $\theta_i := \arccos|\langle u_{i}, v_{i}\rangle|$. Then, the \emph{chordal distance} between $U$ and $V$ is defined as follows:
    \begin{align}
        \label{eq:dc}
        \dc(U, V) 
        := \sqrt{\sum_{i = 1}^{m} \sin^{2} \theta_{i}}. 
    \end{align}
    We define a \emph{subspace code} (also called \emph{Grassmann code}) to be a collection $\CC \subseteq G_{m, n}(\BC)$ of subspaces with respect to the chordal distance. %When the dimensions of all codeword subspaces are equal, the code is also referred to as a \emph{Grassmann code}. 
\end{definition}

Soleymani and Mahdavifar~\cite{Hessam22} introduced a new class of subspace codes called \emph{character-polynomial (CP) codes} defined as follows. First, we recall the definition of a \emph{character}. %, which are defined as follows. 
%Let $U = \{z \in \BC: |z| = 1\} \le \BC^{\times}$. 

\begin{definition}[Character]
    \label{def:character}
    A homomorphism $\chi$ from the additive group of $\fq$ to the complex unit circle $U$ is called an \emph{(additive) character} of $\fq$. %For us, a \emph{character} is simply an additive character. 
    For $\beta \in \fq$, the character $\chi_{\beta}$ of $\fq$ is defined as follows:
    \begin{align}
        \label{eq:character}
        \chi_{\beta}: \fq \to U, \quad \alpha \mapsto \exp\left(\frac{2 \pi i \Tr(\beta \alpha)}{p}\right),
    \end{align} 
    where $\Tr: \fq \to \fp$ is the \emph{absolute trace function}. 
    It is well-known that every character $\chi$ of $\fq$ is of the form $\chi = \chi_{\beta}$ for some $\beta \in \fq$.
    The character $\chi_{0}$ \hide{when $\beta = 0$ }is called the \emph{trivial character}. 
\end{definition}

\begin{definition}[CP Code {\cite[Definition~6]{Hessam22}}]
    \label{def:cp}
    Fix $k \le n < q$, a non-trivial character $\chi_{\beta}$ of $\fq$, and units $\alpha_{1}, \dots, \alpha_{n} \in \fq^{\times} := \fq \setminus \{0\}$. 
    Define $\fpkq$ to be the set of all polynomials \hide{$f(X) = \sum_{j} f_{j} X^{j}$}$f \in \CF(k, q)$ with $f_{j p} = 0$ for all integers $j \ge 0$\hide{, and put $\fpkq' := \{f(X) / X: f \in \fpkq\}$}. 
    Then, the encoding of $f \in \fpkq$ in $\CCP := \CCP_{n, \beta}(\fpkq) \subseteq G_{1, n}(\BC)$ is defined as follows:
    \begin{align}
        \label{eq:cp}
        \CCP(f) 
        &:= (\chi_{\beta}(f(\alpha_{1})), \dots, \chi_{\beta}(f(\alpha_{n}))), 
    \end{align}
    where $\CCP(f)$ is identified with the one-dimensional subspace $\langle \CCP(f) \rangle$. 
\end{definition}

%\subsection{Sphere Packing Versus Sphere Covering}
%\label{sec:packing-covering}

Note that the structure of the message space $\fpkq$ is crucial for orthogonality and distance properties that make CP codes useful for packing subspaces in $G_{1, n}(\BC)$~\cite{Hessam22, Hessam21}. 
Recall that in the \emph{sphere packing} problem, the goal is to fit in the space pairwise disjoint spheres centered at the codewords. This is fundamentally related to decoding, where, given a point in the space, the objective is to uniquely determine its nearest codeword. In particular, given a packing of the space where the spheres have radius $\tau < d / 2$, where $d$ is the minimum distance of the code, one can uniquely decode any message from $\tau$ errors, and the goal is to maximize the decoding radius $\tau$. The dual problem to sphere packing is \emph{sphere covering}, 
which is the subject of our discussion in this work. 
In the sphere covering problem, the goal is to cover the space using minimally overlapping spheres. 
This is fundamentally related to quantization, which is the process of approximating any given point in the space by its closest codeword.

\section{Covering Radius as a Measure of Distortion\hide{ and Comparison with Random Coding Bound}}
\label{sec:lossy}

\subsection{Average Covering Radius}
\label{sec:covering-radius}

The \emph{(Hamming) covering radius} of a block code $\CC \subseteq \fq^{n}$ is defined as follows~\cite{Huffman03}:
\begin{align}
    \label{eq:rho}
    \rhoh(\CC) 
    &:= \max_{y \in \fqn} 
    \min_{c \in \CC} 
    \dH(y, c), 
\end{align}
where $\dH$ denotes Hamming distance. 
For instance, it is well known that $\rhoh(\GRS) = d - 1$. 
This means, in particular, that any $y \in \fq^{n}$ is at distance at most $d - 1$ from a $\GRS$ codeword. 

Given a vector $y \in \fq^{n}$ and a code $\CC \subseteq \fq^{n}$, the \emph{covering problem} asks to find a codeword $c \in \CC$ such that $\dH(y, c) \le \rhoh(\CC)$. 
An algorithm that solves the covering problem for $\CC$ is called a \emph{covering algorithm} for $\CC$. 

Generally speaking, the covering radius of a code represents the maximum error when the code is used for the quantization of its ambient space\hide{ as described in \autoref{sec:intro}}. This is because the covering radius characterizes the worst-case scenario of the quantization process. In practice, however, given a certain distribution over the space, the average quantization error becomes more relevant. This motivates us to modify \eqref{eq:rho} to the following definition. 

%For an analog subspace code, we can analogously define the average chordal covering radius. 

\begin{definition}[Average Covering Radius]
    \label{def:mean-covering-radius}
    For a block code $\CC \subseteq \fqn$ and a distribution $\CD$ on $\fqn$, the \emph{average (Hamming) covering radius} of $\CC$ over $\CD$ is defined as follows: 
    \begin{align}
        \label{eq:rhobar}
        \rhohbar(\CC) 
        &:= \expec_{X \sim \CD} 
        \left[\min_{c \in \CC} 
        \dH(X, c)\right]. 
    \end{align}
    Similarly, for a subspace code $\CC \subseteq G_{m, n}(\BC)$ and a distribution $\CD$ on $G_{m, n}(\BC)$, the \emph{average (chordal) covering radius} of $\CC$ over $\CD$ is defined as follows: 
    \begin{align}
        \label{eq:rhosbar}
        \rhosbar(\CC) 
        &:= \expec_{U \sim \CD} 
        \left[\min_{V \in \CC} 
        \dc(U, V)\right].
    \end{align}
    %where $\dc$ is the chordal distance defined in \autoref{def:chordal}.
\end{definition}
%This notion can capture the average beamforming gain. Note also that $\rhosbar(\CC) \le \rhos(\CC)$.

Note that \autoref{def:mean-covering-radius} can be generalized naturally to codes over other metric spaces, e.g., rank-metric codes. Henceforth, \emph{covering radius} will be understood to mean \emph{Hamming covering radius} for the Hamming space, and \emph{chordal covering radius} for the Grassmann space. 

\subsection{Average Distortion of Random Codes\hide{in Lossy Source Coding}}

%\todo{(fix this) The single shot approach aims to find informational quantities that govern the optimal performance of an operational problems of interest, e.g., channel coding and lossy compression. In both cases, the problem settings pose a random object that we want to control. In the channel coding problem this is the random channel which abstracts the medium we want to use to enable a reliable communication. In the lossy compression this is the source we want to compress to a minimum number of bits subject to a distortion constraint. The single shot approach tries to solve the problem by providing achievable and converse bounds without any assumption on the random object.~\cite{elkayam20}}
In the single-shot approach to the channel coding (or lossy compression) problem, the goal is to control a random object by providing achievable bounds without any assumption on the object itself. In particular, let $\CX$ be a set of \emph{input symbols} and $\CY$ a set of \emph{reproduction symbols}. Given a code $\CC \subseteq \CY$ and distribution $\CD_{\CX}$ on $\CX$, Elkayam and Feder~\cite{elkayam20} defined the \emph{average distortion} associated with $\CC$ as follows: 
\begin{align}
    \label{eq:average-distortion-def} 
    D(\CC) 
    &:= \expec_{X \sim \CD_{\CX}} 
    \left[\min_{c \in \CC} d(X, c)\right], 
\end{align}
where $d: \CX \times \CY \to \BR^{+}$ is the \emph{distortion function}. 
They gave the following explicit formula for the average distortion of a \emph{random code} $\CC_{\CY} := \{Y_{j} \sim \CD_{\CY}: 1 \le j \le M\}$ consisting of $M$ codewords $Y_{j}$ drawn independently from a distribution $\CD_{\CY}$ on $\CP(\CY)$~\cite[Theorem~2]{elkayam20}: 
\begin{align}
    \label{eq:elkayam-formula}
    &\hspace{-2mm}\expec_{\CC_{\CY}} [D(\CC_{\CY})] 
    = \int_{0}^{1} 
    \Tilde{D}(w, \CD_{\CY}) G_{M}'(w) \, d w, %\\ 
    %\label{eq:elkayam-ub}
    %&\le %\resizebox{.83\linewidth}{!}{$\displaystyle
    %\Tilde{D}\left(\frac{e^{\lambda}}{M - 1}\right) + \left[\Tilde{D}(1) - \Tilde{D}\left(\frac{e^{\lambda}}{M - 1}\right)\right] \cdot e^{- e^{\lambda}} (e^{\lambda} + 1)%$} 
\end{align} 
where 
\begin{align}
    \label{eq:D-tilde} 
    &\hspace{-5mm} \Tilde{D}(w, \CD_{\CY}) := \\
    &\hspace{.1\linewidth} \frac{1}{w} \cdot \expec_{(X, Y) \sim \CD_{\CX} \times \CD_{\CY}}[d(X, Y) \cdot \mathbf{1}_{\{p_{X, Y, U} \le w\}}], \notag \\ 
    \label{eq:pxyu} 
    &\hspace{-5mm} p_{x, y, U} 
    := \Pr_{Y \sim \CD_{\CY}}[d(x, Y) < d(x, y)] \\ 
    &\hspace{.27\linewidth} + U \cdot \Pr_{Y \sim \CD_{\CY}}[d(x, Y) = d(x, y)], \notag \\ 
    \label{eq:g} 
    &\hspace{-5mm} G_{M}(w) 
    := - (1 - w)^{M - 1} 
    \cdot [1 + (M - 1) w], 
\end{align}
and $U$ is a uniform random variable in $[0, 1]$. 
%When $\CD_{\CY}$ is fixed, we simply write $\Tilde{D}(w)$ for $\Tilde{D}(w, \CD_{\CY})$. 
Henceforth, to simplify notation, we will usually omit the distribution under $\Pr$ or $\expec$ when it is clear from the random variable argument(s). 

The distance function of a code serves as a natural measure of distortion when the code is used for the quantization of its ambient space. Under this identification, the average distortion $D(\CC)$, defined in \eqref{eq:average-distortion-def}, of a code $\CC$ coincides with the average covering radius of $\CC$ defined in \autoref{def:mean-covering-radius}, i.e., $\rhohbar(\CC)$ in \eqref{eq:rhobar} for the Hamming space ($\CX = \CY = \fqn$, $d = \dH$), and $\rhosbar(\CC)$ in \eqref{eq:rhosbar} for the Grassmann space ($\CX = \CY = G_{1, n}(\BC)$, $d = \dc$). 
In particular, we can use \eqref{eq:elkayam-formula} to estimate the average covering radius of random codes in both spaces as shown by the next two theorems. 

\begin{theorem}
    \label{thm:random-hamming}
    Let $\CD_{\CY}$ denote the uniform distribution on the Hamming space $\CY = \fqn$. 
    Then, the expected average covering radius \hide{over $\CD_{\CX}$ }of a random code $\CC_{\CY}$ consisting of $M$ codewords drawn independently from $\CD_{\CY}$ %and showed\hide{~\cite[Corollary~3]{elkayam20}} that it satisfies the upper bound \eqref{eq:elkayam-ub}he expected average Hamming covering radius of a uniformly random code $\CC \subseteq \fqn$ of size $M$ 
    is given by:
    \begin{align}
        \label{eq:random-hamming}
        &\expec_{\CC_{\CY}} [\rhohbar(\CC_{\CY})] = \\ 
        &\quad M (M - 1) \sum_{j = 1}^{n} \bigg[j \left(A_{M, j} - A_{M - 1, j}\right) 
        + A_{M - 1, j} \sum_{t = 0}^{j - 1} V_{t}\bigg], 
        %\bigg[j \left(A_{M, j} - A_{M - 1, j}\right) \\ 
        %+ \bigg(j V_{j - 1} - \sum_{t = 1}^{j - 1} %\frac{t}{q^{n}} \binom{n}{t} (q - 1)^{t}
        %t (V_{t} - V_{t - 1})\bigg) A_{M - 1, j}\bigg], 
        \notag 
    \end{align} 
    where 
    \begin{align}
        \label{eq:Vjdef}
        &V_{j} 
        := \frac{\Vol_{q}(j, n)}{q^{n}}, \\ 
        \label{eq:Adef} 
        &A_{i, j} 
        := \frac{(1 - V_{j})^{i} - (1 - V_{j - 1})^{i}}{i}. 
    \end{align}
\end{theorem}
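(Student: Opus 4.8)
The plan is to specialize the Elkayam–Feder formula \eqref{eq:elkayam-formula} to the Hamming setting $\CX = \CY = \fqn$ with $\CD_\CX = \CD_\CY$ the uniform distribution and $d = \dH$, and then evaluate the resulting integral explicitly by exploiting the fact that $\dH$ takes only the integer values $0, 1, \dots, n$. First I would compute the quantities $p_{x,y,U}$ of \eqref{eq:pxyu}: by uniformity and translation invariance of the Hamming metric, $\Pr_Y[\dH(x,Y) < j] = V_{j-1}$ and $\Pr_Y[\dH(x,Y) = j] = V_j - V_{j-1}$ whenever $\dH(x,y) = j$, so $p_{x,y,U} = V_{j-1} + U(V_j - V_{j-1})$, which is independent of the particular $x, y$ and depends only on $j = \dH(x,y)$ and the auxiliary uniform $U$. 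This reduces everything to a sum over the distance level $j$.

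Next I would rewrite $\Tilde D(w, \CD_\CY)$ from \eqref{eq:D-tilde}. Decomposing the expectation over $(X,Y)$ according to $j = \dH(X,Y)$, the event $\{p_{X,Y,U} \le w\}$ becomes $\{V_{j-1} + U(V_j - V_{j-1}) \le w\}$; averaging over $U \sim \mathrm{Unif}[0,1]$ gives, for each $j$, the clipped probability $\min\{1, \max\{0, (w - V_{j-1})/(V_j - V_{j-1})\}\}$. Since $\Pr[\dH(X,Y) = j] = V_j - V_{j-1}$, the weight $(V_j - V_{j-1})$ cancels against the denominator on the "interior" range $V_{j-1} \le w \le V_j$, leaving a clean piecewise-linear expression. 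Thus $\Tilde D(w,\CD_\CY) = \frac1w \sum_{j=1}^n j \cdot \big[(V_j \wedge w) - (V_{j-1} \wedge w)\big]_+$ (with the obvious reading of the clipping), i.e. a piecewise-linear function of $w$ whose breakpoints are exactly the $V_j$'s.

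Then I would plug this into $\int_0^1 \Tilde D(w,\CD_\CY)\, G_M'(w)\, dw$. From \eqref{eq:g}, $G_M'(w) = M(M-1)\, w\, (1-w)^{M-2}$, and the factor $w$ cancels the $1/w$ in $\Tilde D$, so the integrand becomes $M(M-1)(1-w)^{M-2}$ times a piecewise-linear function with breakpoints $V_j$. I would split the integral at the breakpoints and, on each piece, integrate a linear-in-$w$ function against $(1-w)^{M-2}$; the antiderivatives produce terms of the form $(1-V_j)^M$, $(1-V_j)^{M-1}$, which is precisely where the quantities $A_{i,j} = \big[(1-V_j)^i - (1-V_{j-1})^i\big]/i$ in \eqref{eq:Adef} come from (the $1/i$ being the result of integrating $(1-w)^{i-1}$). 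Carefully tracking the contribution of the level-$j$ slope — which contributes both a term proportional to $j$ and, from the constant part $V_{j-1} - \text{(lower clip)}$, the telescoping inner sum $\sum_{t=0}^{j-1} V_t$ — and collecting, I expect to land on exactly \eqref{eq:random-hamming}.

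The main obstacle is purely bookkeeping: correctly handling the clipping in $\Tilde D$ at the endpoints $w = 0$ and $w = 1$ (equivalently $V_0 = 1/q^n$ and $V_n = 1$), making sure the piecewise integration picks up the right boundary terms, and reorganizing the double sum over $(j, t)$ so that the $A_{M,j}$ versus $A_{M-1,j}$ split and the inner sum $\sum_{t=0}^{j-1} V_t$ emerge in the stated form rather than some algebraically equivalent but messier one. There is no conceptual difficulty beyond the faithful specialization of \eqref{eq:elkayam-formula}; the work is in the careful evaluation of a one-dimensional integral of a piecewise-linear function against a Beta-type density.
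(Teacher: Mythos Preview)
Your proposal is correct and follows essentially the same approach as the paper: compute $p_{x,y,U}$ using the uniform ball-volume fractions $V_j$, derive the piecewise-linear form of $\Tilde D(w,\CD_\CY)$ with breakpoints at the $V_j$'s, cancel the $1/w$ against the $w$ in $G_M'(w) = M(M-1)w(1-w)^{M-2}$, and integrate piece by piece to recover the $A_{M,j}$ and $A_{M-1,j}$ terms together with the inner sum $\sum_{t=0}^{j-1} V_t$. The only minor difference is that the paper carries out the computation for an arbitrary $\CD_\CX$ (and observes the answer is independent of it), whereas you fix $\CD_\CX$ uniform from the outset; this has no effect on the argument.
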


\begin{proof}
    Note that $\Pr[\dH(x, Y) \le j] = V_{j}$\hide{ for $Y \sim \CD_{\CY}$} for any $x \in \fqn$ by \eqref{eq:Vjdef}. 
    Hence, 
    by \eqref{eq:pxyu}, 
    \begin{align}
        p_{x, y, U} 
        &= V_{t - 1} + U (V_{t} - V_{t - 1}), 
        \label{eq:pxyu-hamming}
    \end{align}
    where $t = \dH(x, y)$ and $U$ is uniform in $[0, 1]$. 
    
    For $(X, Y) \sim \CD_{\CX} \times \CD_{\CY}$, 
    where $\CD_{\CX}$ is any distribution on $\CX = \CY$, 
    let $E_{t}$ denote the event $\dH(X, Y) = t$. 
    Then, 
    by \eqref{eq:D-tilde}, 
    \begin{align}
        \notag 
        \Tilde{D}(w, \CD_{\CY}) 
        &= \frac{1}{w} \sum_{t = 1}^{n} t \Pr[\dH(X, Y) \cdot {\bf 1}_{\{p_{X, Y, U} \le w\}} = t] \\ 
        \notag 
        &= \frac{1}{w} \sum_{t = 1}^{n} t \Pr[E_{t} \hbox{  and  } p_{X, Y, U} \le w] \\ 
        \label{eq:D-tilde-thm-1}
        &= \frac{1}{w} \sum_{t = 1}^{n} t \Pr[E_{t}] \cdot \Pr[p_{X, Y, U} \le w | E_{t}].  
    \end{align}
    Now, 
    by \eqref{eq:Vjdef}, 
    \begin{align}
        \label{eq:pret}
        \hspace{-2mm}
        \Pr[E_{t}] 
        &= \expec[\Pr[E_{t} | X]] %\\ 
        = \expec [V_{t} - V_{t - 1}]%\left[\frac{\binom{n}{t} (q - 1)^{t}}{q^{n}}\right] 
        = V_{t} - V_{t - 1}%\frac{\binom{n}{t} (q - 1)^{t}}{q^{n}}
    \end{align} 
    and, 
    by \eqref{eq:pxyu-hamming}, 
    \begin{align}
        \notag 
        \hspace{-2mm}
        \Pr[p_{X, Y, U} \le w | E_{t}] 
        &= \Pr\left[U \le \frac{%q^{n} (
        w - V_{t - 1}%)
        }{%\Vol_{q}(t, n) - \Vol_{q}(t - 1, n)
        %\binom{n}{t} (q - 1)^{t}} 
        V_{t} - V_{t - 1}}\bigg| E_{t}\right] \\ 
        \label{eq:pret-cond}
        &= \begin{cases}
            0 &\hbox{if } w \le V_{t - 1}%\frac{\Vol_{q}(t - 1, n)}{q^{n}}
            \\ 
            \dfrac{%q^{n} (
            w - V_{t - 1}%)
            }{%\Vol_{q}(t, n) - \Vol_{q}(t - 1, n)
            %\binom{n}{t} (q - 1)^{t}} 
            V_{t} - V_{t - 1}}
            & \hbox{if } V_{t - 1} < w \le V_{t}%\frac{\Vol_{q}(t - 1, n)}{q^{n}} \le w \le \frac{\Vol_{q}(t, n)}{q^{n}}
            \\ 
            1 & \hbox{if } w > V_{t}.%\frac{\Vol_{q}(t, n)}{q^{n}}
        \end{cases}
    \end{align}
    Substituting \eqref{eq:pret} and \eqref{eq:pret-cond} into \eqref{eq:D-tilde-thm-1} yields: 
    \begin{align}
        \label{eq:D-tilde-hamming-uniform}
        \Tilde{D}(w, \CD_{\CY}) 
        = t_{0} \left(1 - \frac{V_{t_{0} - 1}}{w}\right) 
        + \sum_{t = 1}^{t_{0} - 1} \frac{t}{w} (V_{t} - V_{t - 1})% q^{n}} \binom{n}{t} (q - 1)^{t} 
        %= \frac{1}{w q^{n}} \left[t_{0} \left(w q^{n} - \Vol_{q}(t_{0} - 1, n)\right) + \sum_{t = t_{0} + 1}^{n} t \binom{n}{t} (q - 1)^{t}\right]$} 
        ,
    \end{align}
    where $t_{0} := t_{0}(w) \in \{1, \dots, n\}$ such that 
    $V_{t_{0} - 1} < w \le V_{t_{0}}$. 
    %\begin{align}
    %    \label{eq:t0-def}
    %    {\Vol_{q}(t_{0} - 1, n) \le w q^{n} \le \Vol_{q}(t_{0}, n)}. 
    %\end{align}
    Substituting \eqref{eq:D-tilde-hamming-uniform} and $G'(w) = M (M - 1) w (1 - w)^{M - 2} d w$ into \eqref{eq:elkayam-formula} gives: 
    \begin{align*}
        &\expec_{\CC_{\CY}} [\rhohbar(\CC_{\CY})] 
        = \int_{0}^{1} 
        \Tilde{D}(w, \CD_{\CY}) G'(w) d w \\ 
        %&= \int_{0}^{1} \bigg[t_{0} \left(1 - \frac{V_{t_{0} - 1}}{w}\right) + \sum_{t = 1}^{t_{0} - 1} \frac{t}{w% q^{n}
        %} (V_{t} - V_{t - 1})%\binom{n}{t} (q - 1)^{t}
        %\bigg] G'(w) d w 
        %\\
        %&\hspace{.485\hsize} 
        %M (M - 1) w (1 - w)^{M - 2} d w \\ 
        &= M (M - 1) \sum_{j = 1}^{n} \bigg[j \int_{V_{j - 1}}^{V_{j}} 
        \bigg(1 - \frac{V_{j - 1}}{w}\bigg) w (1 - w)^{M - 2} d w  \\ 
        &\hspace{2.40cm} + \sum_{t = 1}^{j - 1} t (V_{t} - V_{t - 1}) \int_{V_{j - 1}}^{V_{j}} (1 - w)^{M - 2} d w\bigg] \\ 
        &= M (M - 1) \sum_{j = 1}^{n} \bigg[j \underbrace{\int_{V_{j - 1}}^{V_{j}} w (1 - w)^{M - 2} d w}_{A_{M, j} - A_{M - 1, j}} \\ 
        &\hspace{5mm}- \bigg(j V_{j - 1} - \sum_{t = 1}^{j - 1} t (V_{t} - V_{t - 1})\bigg) \underbrace{\int_{V_{j - 1}}^{V_{j}} (1 - w)^{M - 2} d w}_{A_{M - 1, j}}\bigg] \\ 
        &= M (M - 1) \sum_{j = 1}^{n} \bigg[j \left(A_{M, j} - A_{M - 1, j}\right) + A_{M - 1, j} \sum_{t = 0}^{j - 1} V_{t}\bigg], 
    \end{align*}
    where we have used \eqref{eq:g} and \eqref{eq:Adef}. 
\end{proof}

To obtain the analog of \autoref{thm:random-hamming} for the Grassmann space $\CY = G_{1, n}(\BC)$, 
we first introduce the following notation. 
For $V = \langle v \rangle \in \CY$ with $\norm{v} = 1$, the orthogonal projection operator onto $V$ is denoted by $P_{V} := v v^{\dagger}$. 
%Then, $\cos^{2} \theta = \Tr(P_{U} P_{V})$, where $\theta$ is the principal angle between $U$ and $V$. 
%Then observe that
Some key observations are: 
\begin{itemize}
    \item For any $x, y \in G_{1, n}(\BC)$, 
    \begin{align}
        \label{eq:tr-chordal}
        \dc(x, y) 
        &= \sqrt{1 - \tr(P_{x} P_{y})}.  
    \end{align}
    \item $p_{x, y, U} = \Pr[\dc(x, Y) < \dc(x, y)]$ for any continuous distribution $\CD_{\CY}$ by \eqref{eq:pxyu}, which is independent of $U$, and 
    \item $\tr(P_{x} P_{Y}) \hide{= \tr(P_{\langle e_{1} \rangle} Q P_{Y} Q^{\dagger}) }= \tr(P_{Q x} P_{Q Y}) = |\hat{Y}_{1}|^{2} \sim {\rm Beta}(1, n - 1)$ by unitary invariance, %$\tr(P_{x} P_{Y})$ follows the same distribution as $\tr(P_{x'} P_{Y})$ by unitary invariance, for any $x, x', Y \in \CY$ with $Y$ Haar-uniform on $\CY$. $ 
    for any fixed $x := \langle \hat{x} \rangle \in \CY$, $Y$ Haar-uniform on $\CY$, and unitary $Q$ with $Q \hat{x} = \hat{e}_{1}$, 
    so that the distributions of $Y$ and $Q Y := \langle \hat{Y} \rangle$ are equal~\cite{kus88}. %it follows that $\tr(P_{x} P_{Y})\hide{$ follows the distribution of $\tr(P_{e_{1}} P_{Y})$, which is $}
\end{itemize}
We then have the following result. 

% https://mathoverflow.net/questions/208937/distribution-of-dot-product-of-two-unit-random-vectors
\begin{theorem}
    \label{thm:random-chordal} 
    Let $\CD_{\CY}$ denote the uniform distribution on the Grassmann space $\CY = G_{1, n}(\BC)$. 
    Then, the expected average covering radius \hide{over $\CD_{\CX}$ }of a random code $\CC_{\CY}$ consisting of $M$ codewords drawn independently from $\CD_{\CY}$ %and showed\hide{~\cite[Corollary~3]{elkayam20}} that it satisfies the upper bound \eqref{eq:elkayam-ub}he expected average Hamming covering radius of a uniformly random code $\CC_{\CY} \subseteq \fqn$ of size $M$ 
    is given by:
    \begin{multline}
        \label{eq:random-chordal}
        \expec_{\CC_{\CY}} [\rhosbar(\CC_{\CY})] = \\ 
        \frac{2 (n - 1)}{2 n - 1} M (M - 1) \betaf\left(1 + \frac{2 n - 1}{2 (n - 1)}, M - 1\right), 
    \end{multline} 
    where $\betaf(\cdot, \cdot)$ is the beta function. 
\end{theorem}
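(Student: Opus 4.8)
The plan is to evaluate the Elkayam--Feder identity \eqref{eq:elkayam-formula} for the Grassmann instance $\CX = \CY = G_{1,n}(\BC)$, $d = \dc$, with $\CD_{\CY}$ the Haar-uniform law, following the same route as in the proof of \autoref{thm:random-hamming}. Differentiating \eqref{eq:g} gives $G_M'(w) = M(M-1)\,w\,(1-w)^{M-2}$ exactly as there, so the only quantity left to compute is $\tilde D(w,\CD_{\CY})$ from \eqref{eq:D-tilde}.

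The first step is to reduce everything to a single scalar random variable. Fix any $x \in \CY$ and take $Y$ Haar-uniform; by the third observation preceding the theorem, $T := \tr(P_x P_Y) = |\hat{Y}_{1}|^{2} \sim {\rm Beta}(1,n-1)$, and this law does not depend on $x$. Hence for $(X,Y) \sim \CD_{\CX}\times\CD_{\CY}$ (with $\CD_{\CX}$ arbitrary) the variable $T = \tr(P_X P_Y)$ is ${\rm Beta}(1,n-1)$-distributed, with density $(n-1)(1-t)^{n-2}$ on $[0,1]$, and by \eqref{eq:tr-chordal} we have $\dc(X,Y) = \sqrt{1-T}$. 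Next I would compute $p_{x,y,U}$: since $\CD_{\CY}$ is continuous, the second observation gives $p_{x,y,U} = \Pr[\dc(x,Y) < \dc(x,y)] = \Pr[\,T > \tr(P_x P_y)\,] = (1 - \tr(P_x P_y))^{n-1}$, using the ${\rm Beta}(1,n-1)$ tail $\Pr[T>s] = (1-s)^{n-1}$. Therefore the event $\{p_{X,Y,U}\le w\}$ is precisely $\{(1-T)^{n-1}\le w\} = \{T \ge 1 - w^{1/(n-1)}\}$, so that
\[
\tilde D(w,\CD_{\CY}) = \frac{1}{w}\,\expec\!\left[\sqrt{1-T}\;\mathbf{1}_{\{T \ge 1 - w^{1/(n-1)}\}}\right].
\]
Substituting $u = 1-T$, whose law is ${\rm Beta}(n-1,1)$ with density $(n-1)u^{n-2}$, turns the right-hand side into $\frac{n-1}{w}\int_0^{w^{1/(n-1)}} u^{\,n-3/2}\,du$, and an elementary integration gives $\tilde D(w,\CD_{\CY}) = \frac{2(n-1)}{2n-1}\,w^{\frac{2n-1}{2(n-1)}-1}$, where the constant $\frac{2(n-1)}{2n-1}$ is just $\frac{n-1}{\,n-1/2\,}$.

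The final step is to feed this into \eqref{eq:elkayam-formula}: the integral becomes $\frac{2(n-1)}{2n-1}M(M-1)\int_0^1 w^{\frac{2n-1}{2(n-1)}}(1-w)^{M-2}\,dw$, which is a Beta integral equal to $\frac{2(n-1)}{2n-1}M(M-1)\,\betaf\!\bigl(1 + \tfrac{2n-1}{2(n-1)},\,M-1\bigr)$, matching \eqref{eq:random-chordal}. I do not anticipate a genuine obstacle; the argument is essentially bookkeeping once the three observations above are in hand. The two points requiring care are (i) using the continuity of $\CD_{\CY}$ to drop the $U$-dependence (and the tie term) in $p_{x,y,U}$, and (ii) correctly propagating the Beta parameters through the change of variables $u = 1-T$ and then matching the exponents of $w$ and $1-w$ in the Elkayam--Feder integral to the beta arguments $1 + \frac{2n-1}{2(n-1)}$ and $M-1$. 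The one non-elementary ingredient, $\tr(P_X P_Y) \sim {\rm Beta}(1,n-1)$, is already supplied by the unitary-invariance observation stated just before the theorem, so no new geometric input is needed.
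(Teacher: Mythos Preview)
Your proposal is correct and follows essentially the same route as the paper's proof: both compute $p_{x,y,U}$ via the ${\rm Beta}(1,n-1)$ law of $\tr(P_x P_Y)$, reduce $\tilde D(w,\CD_{\CY})$ to a one-dimensional integral yielding $\frac{2(n-1)}{2n-1}w^{1/(2(n-1))}$, and then recognize the Elkayam--Feder integral as a Beta integral. The only cosmetic difference is that the paper parametrizes directly by the chordal distance $T=\dc(X,Y)=\sqrt{1-Z}$ and uses its density $f_T(t)=2(n-1)t^{2n-3}$, whereas you stay with $u=1-\tr(P_XP_Y)$; the two substitutions are equivalent and lead to the identical exponent $\frac{2n-1}{2(n-1)}-1=\frac{1}{2(n-1)}$.
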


\begin{proof}
    For $(X, Y) \sim \CD_{\CX} \times \CD_{\CY}$, where $\CD_{\CX}$ is any distribution on $\CX = \CY$, let $Z := \tr(P_{X} P_{Y})\hide{|\langle \hat{x}, \hat{y} \rangle|^{2}$ for unit-norm representatives  $\hat{x} \in X$ and $\hat{y} \in Y}$.  
    Then, for any $x \in \CX$, we have $Z \mid X = x \sim {\rm Beta}(1, n - 1)$, 
    %Since $Y$ is Haar-uniform\hide{ on $G_{1, n}(\BC)$}, it is unitarily invariant\hide{, i.e., $\hat{y}$ is uniformly distributed on the unit sphere up to phase}. 
    so that $Z \sim {\rm Beta}(1, n - 1)$. Therefore, 
    by \eqref{eq:tr-chordal},
    \begin{align}
        p_{x, y, U} 
        &= \Pr[\dc(X, Y) < t] \notag \\ 
        %&= \expec_{\CD_{\CX}}[\Pr_{\CD_{\CY}}[\dc(X, Y) < t | X]] \notag \\ 
        &= \Pr[Z > 1 - t^{2}] \notag \\ 
        %&= \expec_{\CD_{\CX}}[t^{2(n - 1)}] \notag \\ 
        \label{eq:pxyu-chordal}
        &= t^{2 (n - 1)}, 
    \end{align} 
    where $t = \dc(x, y)$. 
    Letting $T := \dc(X, Y)\hide{ = \sqrt{1 - Z}}$ therefore gives: 
    \begin{align}
        \Tilde{D}(w, \CD_{\CY}) 
        &= \frac{1}{w} \expec[\dc(X, Y) \cdot {\bf 1}_{\{p_{X, Y, U} \le w\}}] \notag \\ 
        %&= \frac{1}{w} \expec[\dc(x, Y) \cdot {\bf 1}_{\{p_{x, Y, U} \le w\}}] \label{eq:expec-chordal} \\ 
        &= \frac{1}{w} \expec[T \cdot {\bf 1}_{\{T^{2 (n - 1)} \le w\}}] \notag \\ 
        &= \frac{1}{w} \int_{0}^{w^{\frac{1}{2 (n - 1)}}} t f_{T}(t) d t \notag \\ 
        &= \frac{2 (n - 1)}{w} \int_{0}^{w^{\frac{1}{2 (n - 1)}}} t^{2 n - 2} d t \notag \\ 
        %&= \frac{2 (n - 1) w^{\frac{2 n - 1}{2 n - 2}}}{(2 n - 1) w} \notag \\ 
        &= \frac{2 (n - 1)}{2 n - 1} w^{\frac{1}{2 (n - 1)}}, 
        \label{eq:D-tilde-chordal}
    \end{align}
    which follows using \eqref{eq:pxyu-chordal}. 
    Thus, by \eqref{eq:elkayam-formula}, 
    \begin{align*} 
        &\expec_{\CC_{\CY}}[\rhosbar(\CC_{\CY})] 
        = \int_{0}^{1} \Tilde{D}(w, \CD_{\CY}) G'(w) d w \\ 
        &= \frac{2 (n - 1)}{2 n - 1} M (M - 1) \int_{0}^{1} w^{1 + \frac{1}{2 (n - 1)}} (1 - w)^{M - 2} d w \\ 
        &= \frac{2 (n - 1)}{2 n - 1} M (M - 1) \betaf\left(1 + \frac{2 n - 1}{2 (n - 1)}, M - 1\right), 
    \end{align*}
    which follows using \eqref{eq:g} and \eqref{eq:D-tilde-chordal}, as desired. 
\end{proof}

It is important to note that the conclusions of \autoref{thm:random-hamming} and \autoref{thm:random-chordal} do not depend on the source distribution $\CD_{\CX}$ over which the average covering radius of the random code $\CC_{\CY}$ is defined. 

\section{Covering Using GRS Codes}
\label{sec:covering-grs}

%\todo{Explain why covering is more difficult than decoding. E.g. spheres of radius $d / 2$ don't cover most of the space.} 
Decoding algorithms have been the subject of study for many decades, and several decoding algorithms for GRS codes are known~\cite{Huffman03, Guruswami06}. 
In contrast, to the best of our knowledge, no explicit covering algorithms are known for GRS codes.  
Note that covering involves solving a more intricate minimax problem, which is computationally harder due to the need to analyze the worst-case scenario over the entire space. 
Although decoding algorithms may be used for covering, their performance is bound to be subpar as spheres of radius $\tau < d / 2$ do not cover the space well. 
Therefore, as the first algorithm of its kind to the best of our knowledge, we propose \autoref{alg:GRS-cover} and analyze its performance in this section. 

\subsection{Algorithm}

Given an $[n, k, d]_{q}$ GRS code $\CC$ and an input vector $y$ in its ambient space, the algorithm first tries to decode $y$. If the decoder is successful, then we have found the closest codeword to $y$. If not, it repeatedly (up to $d - 1$ times) punctures the code at the last coordinate and tries to decode the corresponding substring of $y$ until the decoder is successful,
%\textcolor{red}{please explain the algorithm in words here, a bit more extensive compared to what is done in the intro}
%\todo{re-encoding?}
when it finally returns the re-encoding of the decoder output in the original code. 
Note that in an actual implementation of the algorithm, it may be desirable to have the message polynomial returned directly instead of its re-encoding, as we will see in \autoref{sec:covering-cp}. 

\begin{algorithm}[!htbp]
    \caption{$\GRScover(\CC, y)$} 
    \label{alg:GRS-cover}
    \begin{algorithmic}[1]
        \Require{$[n, k, d]_{q}$ $\GRS$ code $\CC$, arbitrary vector $y \in \fq^{n}$}
        \Ensure{A codeword $c \in \CC$ with $\dH(y, c) \le d - 1$}
        \BlackBox{$\GRSdecode$: a $\GRS$ decoder}
        %\State{$i \gets 0$}
        \State{$\CC_{0} \gets \CC$}
        \For{$i = 1, \dots, d-1$}
            \State $f \gets \GRSdecode(\CC, y)$
            \label{line:rsdecode}
            \If {$f$ is not null}
                \State{\Return $\CC_{0}(f)$} 
                \label{line:ret}
            \Else 
                \State{puncture $\CC$ at coordinate $n - i + 1$}
                \label{line:punc}
                %\State{$i \gets i + 1$} 
                \State{$y \gets y[1..n - i]$} 
                \label{line:y}
            \EndIf
        \EndFor
    \end{algorithmic}
\end{algorithm}

\autoref{thm:grs-cover} below proves the correctness of \autoref{alg:GRS-cover}. 

\begin{theorem}
    \label{thm:grs-cover}
    Given an $[n, k, d]_{q}$ GRS code $\CC$ and $y \in \fq^{n}$, $\GRScover(\CC, y)$ returns a codeword $c \in \CC$ with $\dH(y, c) \le d - 1$. 
\end{theorem}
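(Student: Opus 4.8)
The plan is to combine one structural fact about puncturing GRS codes with two short estimates: one that bounds the output distance whenever the embedded decoder succeeds, and one that forces the decoder to succeed before the loop runs out.

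\textbf{Step 1 (puncturing keeps the code GRS).} First I would record that deleting the last coordinate of an $[n', k, n'-k+1]_q$ GRS code with $k < n' \le n$ produces an $[n'-1, k, n'-k]_q$ GRS code: one drops the evaluation point $\alpha_{n'}$ and the column multiplier $v_{n'}$ while keeping the same message space $\CF(k, q)$, and the result is again MDS. Hence when iteration $i$ begins, the current code $\CC_{i-1}$ is an $[n-i+1, k, d-i+1]_q$ GRS code; in particular $\CC_{i-1}(f) = \CC_0(f)|_{[1..n-i+1]}$ for every message $f$, the call $\GRSdecode(\CC_{i-1}, \cdot)$ returns a codeword within Hamming distance $\lfloor (d-i)/2\rfloor$ of its length-$(n-i+1)$ input whenever one exists (and null otherwise), and the covering-radius identity $\rhoh(\GRS) = d-1$ applied to $\CC_{i-1}$ gives $\rhoh(\CC_{i-1}) = d-i$.

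\textbf{Step 2 (a successful decode yields a close codeword).} Suppose $\GRSdecode$ succeeds at iteration $i$ and returns $f$, so $\dH(\CC_{i-1}(f),\, y[1..n-i+1]) \le \lfloor (d-i)/2\rfloor$. Since $\CC_0(f)$ agrees with $\CC_{i-1}(f)$ on the first $n-i+1$ coordinates and can differ from $y$ on at most $i-1$ of the remaining coordinates,
\[
\dH(y,\, \CC_0(f)) \le \left\lfloor \frac{d-i}{2}\right\rfloor + (i-1) \le \frac{d+i}{2} - 1 \le d-1,
\]
the last inequality using that $i$ never exceeds $d$. So whenever the algorithm returns, it returns a codeword within distance $d-1$ of $y$; the bound is attained only at the final stage, which is precisely where the ``$d-1$'' in the statement comes from.

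\textbf{Step 3 (the loop terminates with a success).} It remains to show the decode step cannot fail at every stage. The cleanest route is to examine the last stage: after $d-1$ punctures the code has length $n-(d-1)=k$ and dimension $k$, i.e., it is all of $\fq^{k}$, so $\GRSdecode$ succeeds there trivially (every length-$k$ vector is a codeword), returning the interpolant of the corresponding length-$k$ prefix of $y$; combining with Step 2 finishes the proof. I expect the main thing to get right here to be the bookkeeping that links the loop index, the number of punctures performed, and the parameters of $\CC_{i-1}$, so that the terminal length-$k$ stage is actually reached. Beyond that no deep argument is needed; one could alternatively phrase Step 3 monotonically, noting that $\dH(y[1..n-i+1], \CC_{i-1})$ is non-increasing in $i$ (restricting codewords cannot increase distances) while the decoding radius $\lfloor (d-i)/2\rfloor$ stays nonnegative, and that at the terminal stage the former distance is $0$.
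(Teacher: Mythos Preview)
Your argument follows the same three-step skeleton as the paper's: puncturing a GRS code yields a GRS code with both length and minimum distance reduced by one; after enough punctures the code is all of $\fq^{k}$, so the decoder must eventually succeed; and a successful decode at any stage yields a codeword within $d-1$ of the original $y$. The difference lies in how that last point is justified. The paper records only the monotonicity $\dH(y^{(i+1)},\CC_{i+1}(f))\le\dH(y^{(i)},\CC_i(f))$ and then asserts $\dH(y,\CC(f))\le d-1$ without further detail; you instead give the explicit count $\lfloor(d-i)/2\rfloor+(i-1)\le d-1$, which is more transparent and also shows the bound is strict except possibly at the terminal stage. The bookkeeping concern you flag in Step~3 is well placed: as printed, the loop runs for $i=1,\dots,d-1$ and so performs at most $d-2$ punctures before its final decode, one short of the length-$k$ stage on which both arguments rely; this is an off-by-one in the pseudocode that the paper's proof shares, and repairing it is a one-line change to the loop bound.
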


\begin{proof}
    Puncturing an $[n, k]$ GRS code at any coordinate gives an $[n - 1, k]$ GRS code. 
    Denote by $y^{(i)}$ and $\CC_{i}$ the %result of the puncturing on line \ref{line:punc} and $y[1..n-i]$
    values of $y$ and $\CC$ on lines \ref{line:y} and \ref{line:punc}, respectively, at step $i$. %-th iteration. 
    Plainly,
    for any $i$, 
    \begin{align}
        \dH(y^{(i+1)}, \CC_{i + 1}(f)) 
        &\le \dH(y^{(i)}, \CC_{i}(f)). 
        \label{eq:reduction}
    \end{align}
    %where $\CC$ is an $[n - i, k]$ $\GRS$ and $\CC'$ is the puncturing of $\CC$ at coordinate $n - i$. 
    If $\GRSdecode(\CC_{i}, y^{(i)})$ %on line~\ref{line:rsdecode} 
    is successful, then 
    by repeated application of \eqref{eq:reduction}, 
    \begin{align*}
        \dH(y^{(i)}, \CC_{i}(f)) \le \dH(y, \CC(f)) \le d - 1. 
    \end{align*}
    It remains to show that $\GRSdecode(\CC_{i}, y^{(i)})$ does indeed succeed for some $i < d$. 
    To see this, note that, 
    for each $i$, 
    \begin{align}
        d_{\min}(\CC_{i + 1}) = d_{\min}(\CC_{i}) - 1.
        \label{eq:dmin-reduction}
    \end{align}
    Hence, $\GRSdecode(\CC_{i}, y^{(i)})$ will succeed in at most $d - 1$ steps, since $d_{\min}(\CC_{i}) = 1$ when $i = d - 1$, at which point any $y^{(i)}$ becomes a valid codeword.
    %it will eventually be dominated by the decoding radius of $\GRSdecode$, whence it will succeed. 
\end{proof}

We emphasize that \autoref{thm:grs-cover} does not claim that \autoref{alg:GRS-cover} returns the closest codeword to the input vector, but rather guarantees returning of a codeword within the covering radius of the code (there might be multiple codewords within the covering radius of the input vector). This distinction is essential: finding the closest codeword (i.e., solving the minimum-distance decoding problem) is NP-hard in general, whereas the proposed covering algorithm achieves its guarantee with polynomial-time complexity. A detailed analysis of the complexity of \autoref{alg:GRS-cover} will be provided later in this section. 

\subsection{Improvement Using List Decoding}

Consider the following modification of \autoref{alg:GRS-cover}, where $\GRSdecode$ is the Guruswami--Sudan list decoder (GS) \cite{Guruswami06} and returns the best candidate in the list (i.e., the codeword with the shortest Hamming distance from $y$). 
There are several advantages of doing this. 
First, it was shown by McEliece \cite{McEliece03, McEliece03-1} that GS almost always returns a list of size $1$, whereby implying that it is essentially a unique decoder. 
Next, the decoding radius of GS is given by: 
\begin{align}
    \label{eq:tauGS}
    \tau_{\GS} 
    &:= n - 1 - \lfloor \sqrt{(k - 1) n} \rfloor. 
\end{align}
This implies, in particular, that \autoref{alg:GRS-cover} will now return a $\GRS$ codeword $c$ within distance $\tau_{\GS}$ of $y$, i.e., with $\dH(y, c) \le \tau_{\GS}$, if such a codeword exists~\cite{McEliece03-1}. 
This potentially reduces the average number of punctures, thereby improving the average covering radius. 

\hide{
\begin{algorithm}[!htbp]
    \caption{$\GRScover(\CC, y)$} 
    \label{alg:GRS-cover-2}
    \begin{algorithmic}[1]
        \Require{$[n, k, d]_{q}$ $\GRS$ code $\CC$, arbitrary vector $y \in \fq^{n}$}
        \Ensure{A codeword $c \in \CC$ with $\dH(y, c) \le d - 1$}
        \BlackBox{$\GRSdecode$: a $\GRS$ decoder}
        %\State{$i \gets 0$}
        \State{$\CC_{0} \gets \CC$}
        \For{$i = 1, \dots, d-1$}
            \State{{\tt min} $\gets d - 1$}
            \State{{\tt ret} $\gets$ {\tt NUL}}
            \State $L \gets \GRSdecode(\CC, y)$
            \label{line:rsdecode}
            \If {$L \neq \emptyset$}
                \For {$f \in L$}
                    \State{{\tt cur} $\gets C(f)$}
                    \State{{\tt temp} $\gets \dH(${\tt cur}$, y)$}
                    \If {{\tt temp} $<$ {\tt min}}
                        \State{{\tt ret} $\gets$ {\tt cur}}
                        \State{{\tt min} $\gets$ {\tt temp}}
                    \EndIf
                \EndFor
                \State{\Return ret} 
            \Else 
                \State{puncture $\CC$ at coordinate $n - i + 1$}
                \label{line:punc}
                %\State{$i \gets i + 1$} 
                \State{$y \gets y[1..n - i]$} 
                \label{line:y}
            \EndIf
        \EndFor
    \end{algorithmic}
\end{algorithm}
}

\subsection{Complexity Analysis}

Plainly, the worst-case running time of \autoref{alg:GRS-cover} is $(n - k) \cdot T(n)$, where $T(n)$ is the running time of $\GRSdecode$. 

\begin{itemize}
    \item For unique decoding, algorithms like Berlekamp--Welch (BW) are known with $T(n) = O(n^{3})$~\cite[Theorem~15.1.4]{ecc} as well as more efficient ones with $T(n) = O(n^{2})$ and $T(n) = O(n \polylog(n))$~\cite[\S~15.4]{ecc}. 
    \item For list decoding, there are known implementations of $\GS$ with $T(n) = O(s^{4} n^{2})$~\cite[\S~VII]{McEliece03}, \cite[\S~VII]{McEliece03}, \cite{Shokrollahi00}, \cite[\S~V]{Roth00},\hide{solving the interpolation problem 
    %(\autoref{alg:gs}, Line~\ref{alg:gs:interpolation}) 
    in GS takes $O(n^{3})$ time using the Feng--Tzeng algorithm \cite[\S~VIII]{McEliece03} or naive Gaussian elimination. This was improved by K\"{o}tter to $O(s^{4} n^{2})$ \cite[\S~VII]{McEliece03}, which is the most efficient known solution~%to the interpolation problem 
    \cite[\S~4]{McEliece03-1}. %and by Roth and Ruckenstein \cite[\S~V]{Roth00} to $O(\ell (n - k)^{2})$. 
    %$O(\ell n (\log n)^{2}) + O(\ell \tau_{s}^{2})$.% \cite[\S~IX]{McEliece03}. 
    
    %{\bf Factorization:} 
    The most efficient known solution to the factorization problem %(\autoref{alg:gs}, Line~\ref{alg:gs:factorization})
    of GS is due to Gao and Shokrollahi \cite{Shokrollahi00} with a time complexity of $O(\ell^{3} k^{2})$, %where $b$ is the degree of $Q$ in the variable $Y$), 
    although the Roth--Ruckenstein algorithm \cite[\S~V]{Roth00} is quite competetive \cite[\S~4]{McEliece03-1} %with time complexity $O((\ell \log^{2} \ell) k (n + \ell \log q))$ 
    (see also \cite[\S~IX]{McEliece03}). 
    Here $\ell$ is a design parameter, typically a small constant \cite{Roth00}, which is an upper bound on the size of the list of decoded codewords, which is bounded above by the degree of the interpolation polynomial in the second variable.%as a polynomial in $Y$ 
    ~\cite{Guruswami06}. 
    
    In general, $\ell \le \lfloor c / (k - 1) \rfloor%$, where $c = \lfloor \sqrt{(k - 1) n s (s + 1)} \rfloor$. Hence $\ell
    = O(s \sqrt{n / k})$ (see also \cite[Theorem~4.8]{Guruswami06}), which gives $O(\ell^{3} k^{2}) = O(s^{3} n \sqrt{k n})$, yielding a $O(s^{4} n^{2})$ time complexity for \autoref{alg:GRS-cover}.}
    where $s > 0$ is an integer parameter known as the \emph{interpolation multiplicity}\hide{, $\GS(y, s)$ 
    returns all $f \in \CF(k, q)$ such that $\dH(y, \GRS(f)) \le n - \tau_{s}$, 
    %with possibly $\ge d / 2$ errors 
    where $\tau_{s} := \lfloor c / s \rfloor + 1$ and $c := \lfloor \sqrt{(k - 1) n s (s + 1)} \rfloor$}. 
\end{itemize}

%Note that \autoref{conj:punctures} would imply that the average covering radius using \autoref{alg:GRS-cover} is a constant fraction (less than one) of the covering radius, which can happen in the worst case, when a unique decoder is deployed. Furthermore, when the GS list decoder is deployed, then the average covering radius is within a constant from $\tau_{\GS}$. 

\subsection{Average Number of Punctures}

In practical settings, the average-case running time of an algorithm is generally of more interest rather than the worst-case. Note that the average-case running time of \autoref{alg:GRS-cover} depends on the average number of punctures needed for $\GRSdecode$ to succeed. %which appears to be around $c (n - k)$ for unique decoding, and $c$ for list decoding, for some constant $c < 1$, e.g., see 
Based on the data in \autoref{tab:punctures}, we made the following conjecture in an earlier version of this work \cite{riasat2025covering}. 

\begin{conjecture}
    \label{conj:punctures}
    For an $[n, k, d]_{q}$ GRS code, the average number of punctures needed for \autoref{alg:GRS-cover} to succeed in returning a codeword within the covering radius is less than 
    \begin{itemize}
        \item $c_{1} (d - 1)$ for $\GRSdecode$ a unique decoder, and 
        \item $c_{2}$ for $\GRSdecode$ the $\GS$ list decoder, 
    \end{itemize}
    for some positive constants $c_{1} <1$ and $c_2$. 
    %\todo{re-write maybe with two constants, split}
\end{conjecture}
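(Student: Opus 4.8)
I read the ``average number of punctures'' as the expectation $\expec[N]$, where $N$ is the number of puncturing steps performed by $\GRScover(\CC, y)$ before $\GRSdecode$ first succeeds and $y$ is drawn uniformly from $\fqn$ (the natural model for average-case running time). For $j \ge 0$ let $\CC_j$ be the $[n-j,\,k,\,d-j]_q$ code obtained from $\CC$ by puncturing $j$ trailing coordinates, let $y^{(j)} := y[1..n-j]$, and let $S_j$ be the event that $\GRSdecode$ succeeds on $(\CC_j, y^{(j)})$, i.e.\ some codeword of $\CC_j$ lies within the decoder's radius of $y^{(j)}$. Then $N = \min\{\,j \ge 0 : S_j\,\}$, and since the fully punctured code $\CC_{d-1} = [k,k,1]_q = \fq^{k}$ is the whole space, $S_{d-1}$ always holds, so
\[
    \expec[N] \;=\; \sum_{j=1}^{d-1} \Pr[\bar S_0 \cap \bar S_1 \cap \cdots \cap \bar S_{j-1}].
\]
Because $y$ is uniform, $y^{(j)}$ is uniform on $\fq^{\,n-j}$; and since $\CC_j$ is MDS, the balls of radius $\tau_j := \lfloor (d-j-1)/2\rfloor$ about its $q^{k}$ codewords are pairwise disjoint, so for a unique decoder
\[
    \Pr[S_j] \;=\; \frac{q^{k}\,\Vol_q(\tau_j,\,n-j)}{q^{\,n-j}} \;=\; \frac{\Vol_q(\tau_j,\,n-j)}{q^{\,d-1-j}} \;=:\; p_j ,
\]
an explicit quantity that increases with $j$ (punctured codes are easier to decode).

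\textbf{Unique-decoder bullet.} Two structural observations guide the argument. First, since deleting one coordinate changes any Hamming distance by at most $1$, comparing $\tau_{j+1}$ with $\tau_j$ shows that $\bar S_j \Rightarrow \bar S_{j+1}$ whenever $d-j$ is odd; hence $N$ can only take values $j$ with $d-j$ odd, and roughly half of the factors $\bar S_i$ in the intersection above are redundant. Second, bounding each intersection by its last factor gives the baseline estimate $\expec[N] \le (d-1) - \sum_{j=0}^{d-2} p_j$, so the first bullet would follow once one shows $\sum_{j=0}^{d-2} p_j \ge c\,(d-1)$ for an absolute $c>0$, with $c_1 = 1-c$. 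The plan for this lower bound is to control a constant fraction of the $p_j$ --- those with $j$ in the last $\Theta(d)$ indices, where $\tau_j$ is small --- using $\Vol_q(\tau_j, n-j) \ge \binom{n-j}{\tau_j}(q-1)^{\tau_j}$ and comparing the single term $\binom{n-j}{\tau_j}(q-1)^{\tau_j}$ with $q^{\,d-1-j}$, so that a constant fraction of these terms are $\Omega(1)$.

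\textbf{List-decoder bullet.} The same framework applies with $\tau_j$ replaced by the Guruswami--Sudan radius $(n-j) - 1 - \lfloor\sqrt{(k-1)(n-j)}\rfloor$; now the decoding balls overlap, so $\Pr[S_j]$ must instead be bounded below by the probability that some degree-$<k$ polynomial agrees with $y^{(j)}$ in at least $1 + \lfloor\sqrt{(k-1)(n-j)}\rfloor$ positions. This probability equals $1$ once the rate $k/(n-j)$ is close to $1$ --- precisely the regime where $d-j$ is already small --- because a codeword of an $[m,k]$ GRS code can be forced to agree with any fixed vector on any prescribed set of $k$ coordinates, and the probability grows quickly as the rate decreases. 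A case split on the rate of $\CC$ should then bound $N$ by an absolute constant $c_2$.

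\textbf{Main obstacle.} The crux in each case is the volume/agreement estimate that closes it. A pure union bound yields only $c_1 = 1 - p_0$, which tends to $1$ once $d$ is large relative to $\log q$; obtaining a genuine constant-factor improvement forces one to show that a constant fraction of the punctured codes $\CC_j$ have decoding balls (respectively agreement sets) filling a constant fraction of their ambient spaces. I expect this to go through cleanly, giving a full proof, in the high-rate regime $k/n \to 1$, and to require an additional idea or a restriction on the parameter range in general --- consistent with the statement being only partially resolved in the present work.
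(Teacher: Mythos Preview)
The fundamental problem is that you are trying to \emph{prove} a statement whose first bullet the paper in fact \emph{refutes}. Right after stating the conjecture, the paper shows (Theorem~4 and Corollary~2) that for a unique decoder $\overline{N}_{\mathrm{punc}}/(d-1)\to 1$ as $q\to\infty$, so no universal $c_{1}<1$ can work; the ``partial resolution'' you allude to in your last paragraph is a disproof, not a proof. Your plan hinges on $\sum_{j=0}^{d-2} p_{j}\ge c\,(d-1)$ for an absolute $c>0$, and this is exactly what fails. The simplest witness is an $[n,n-1,2]_{q}$ GRS code: the only term is $p_{0}=q^{n-1}/q^{n}=1/q$, so $\mathbb{E}[N]=1-1/q$ and $\mathbb{E}[N]/(d-1)=1-1/q$ can be pushed arbitrarily close to $1$. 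This example already sits in the high-rate regime $k/n\to 1$, so your expectation that the argument ``goes through cleanly'' there is backwards---it is large $q$, not low rate, that kills the bound. More generally, your mechanism of lower-bounding $p_{j}$ for the last $\Theta(d)$ indices via $\binom{n-j}{\tau_{j}}(q-1)^{\tau_{j}}/q^{\,d-1-j}$ cannot yield a $q$-free constant: at $j=d-2$ one has $\tau_{j}=0$ and $p_{d-2}=1/q$, and for fixed $n,k$ every $p_{j}$ is $O\bigl(q^{-\lceil (d-1-j)/2\rceil}\bigr)\to 0$. The parity observation $\bar S_{j}\Rightarrow\bar S_{j+1}$ when $d-j$ is odd is correct but only forces $N$ to skip roughly every other value; it cannot manufacture a constant-factor gap below $d-1$.

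For the list-decoder bullet the paper does not settle the question either: Theorem~4 gives a two-sided bound on $\overline{N}_{\mathrm{punc}}$ in terms of $\mathrm{Vol}_{q}(\tau_{i},n-i)$ and the maximum list sizes $L_{i}$, and the text merely remarks that a sharper accounting of ball overlaps might resolve it. Your sketch there is a reasonable heuristic but not a proof; the assertion that the agreement probability ``grows quickly as the rate decreases'' is precisely the missing estimate, and the unique-decoder case already warns that such growth need not be uniform in $q$.
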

However, it turns out that the correct behavior of \autoref{alg:GRS-cover} is given by the following theorem. %\todo{check this!}

\begin{theorem}
    %\todo{statement} 
    \label{thm:punctures-list}
    Suppose that $\GRSdecode$ in \autoref{alg:GRS-cover} is a list decoder that returns all codewords within its decoding radius. Then, averaged uniformly on all inputs $y \in \fqn$, the number $N_{{\rm punc}}$ of punctures needed for $\GRScover(\CC, y)$\hide{ takes $d - 2$ punctures on average to find a codeword within the covering radius for sufficiently large $k$.} to succeed for an $[n, k, d]_{q}$ GRS code $\CC$ satisfies the following inequalities:%is at most 
    \begin{multline}
        \label{eq:punc-list}
        d - 1 - \frac{1}{q^{d - 1}} 
        \sum_{i = 0}^{d - 2} 
        q^{i} \Vol_{q}(\tau_{i}, n - i) 
        \le \overline{N}_{{\rm punc}} 
        \\ 
        \le d - 1 - \frac{1}{q^{d - 1}} 
        \sum_{i = 0}^{d - 2} 
        \frac{q^{i} \Vol_{q}(\tau_{i}, n - i)}{L_{i}},
    \end{multline} 
    where 
    $\tau_{i}$ and $L_{i}$ are respectively the decoding radius and the maximum list size of $\GRSdecode(\CC_{i}, \cdot)$. 
\end{theorem}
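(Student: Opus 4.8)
The plan is to write $\overline{N}_{{\rm punc}}$, the uniform average of $N_{{\rm punc}}$ over $y \in \fqn$, as a sum of tail probabilities, identify each tail probability with the density $|S_i|/q^{n-i}$ of a union of decoding balls, and then bound that density by a double count involving the list size.

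First I would fix notation. For $0 \le i \le d-1$, let $\CC_i$ be $\CC$ punctured at its last $i$ coordinates, an $[n-i,k,d-i]_q$ GRS code, and for $y \in \fqn$ write $y^{(i)} := y[1..n-i] \in \fq^{n-i}$. Let $S_i := \bigcup_{c \in \CC_i} B(c, \tau_i) \subseteq \fq^{n-i}$ be the union of the decoding balls of $\GRSdecode(\CC_i, \cdot)$; since $\GRSdecode$ returns all codewords within radius $\tau_i$, its output on $z \in \fq^{n-i}$ is non-empty exactly when $z \in S_i$. Writing $A_i$ for the event $\{y^{(i)} \in S_i\}$ that $\GRSdecode$ succeeds at step $i$, we have $N_{{\rm punc}} = \min\{i \ge 0 : A_i \text{ occurs}\}$, and since $\CC_{d-1}$ equals the whole space $\fq^{k}$ the event $A_{d-1}$ is certain, so $0 \le N_{{\rm punc}} \le d-1$. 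As $y$ is uniform on $\fqn$, the prefix $y^{(i)}$ is uniform on $\fq^{n-i}$, so $\Pr[A_i] = |S_i|/q^{n-i}$.

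The key step is that a successful decode is preserved under further puncturing: if $c \in \CC_i$ satisfies $\dH(y^{(i)}, c) \le \tau_i$, then its puncturing $c' \in \CC_{i+1}$ satisfies $\dH(y^{(i+1)}, c') \le \dH(y^{(i)}, c) \le \tau_i$, so $y^{(i+1)} \in S_{i+1}$; hence $A_0 \subseteq A_1 \subseteq \cdots \subseteq A_{d-1}$ and $\{N_{{\rm punc}} \le i\} = A_i$ for every $i$. Feeding this into the tail-sum identity $\overline{N}_{{\rm punc}} = \sum_{i=1}^{d-1} \Pr[N_{{\rm punc}} \ge i] = (d-1) - \sum_{i=0}^{d-2} \Pr[N_{{\rm punc}} \le i]$ yields
\[
\overline{N}_{{\rm punc}} = (d-1) - \sum_{i=0}^{d-2} \frac{|S_i|}{q^{n-i}}.
\]

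It remains to sandwich $|S_i|$ by a double count over incidences: $q^{k}\Vol_{q}(\tau_i, n-i) = \sum_{c \in \CC_i} |B(c, \tau_i)| = \sum_{z \in S_i} |\{c \in \CC_i : \dH(z, c) \le \tau_i\}|$, where each inner cardinality lies in $\{1, \dots, L_i\}$ (the decoder returns every codeword within $\tau_i$, and its list has size at most $L_i$); hence $q^{k}\Vol_{q}(\tau_i, n-i)/L_i \le |S_i| \le q^{k}\Vol_{q}(\tau_i, n-i)$. Substituting into the displayed identity and using $q^{k}/q^{n-i} = q^{i}/q^{n-k} = q^{i}/q^{d-1}$ (as $d-1 = n-k$) gives the lower bound of the theorem from $|S_i| \le q^{k}\Vol_{q}(\tau_i, n-i)$ and the upper bound from $|S_i| \ge q^{k}\Vol_{q}(\tau_i, n-i)/L_i$. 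The main obstacle is the monotonicity $A_i \subseteq A_{i+1}$ used in the reduction: it amounts to the decoding radii being non-decreasing along the puncturing chain, which is clear when $\GRSdecode$ applies a common radius to all the $\CC_i$ but otherwise needs to be checked for the radius actually produced on each punctured code.
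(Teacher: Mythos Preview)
Your proof is correct and follows essentially the same approach as the paper: both express $\overline{N}_{\rm punc}$ as $d-1$ minus a sum of success probabilities via Abel summation, then sandwich each probability by double-counting incidences $(z,c)$ with $\dH(z,c)\le\tau_i$ against the maximum list size $L_i$. The paper carries everything in $\fqn$ (its $V_i$ equals your $q^i|S_i|$, its $S_i$ is the incidence set of pairs), and its assertion that $V_i$ counts inputs succeeding \emph{within} $i$ punctures is exactly the monotonicity $A_i\subseteq A_{i+1}$ you flag---the paper uses it without comment.
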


\begin{proof}
    Using the notation from the proof of \autoref{thm:grs-cover}, let 
    \begin{align}
        \label{eq:Sdef}
        S_{i} 
        &:= \{(y, c) \in \fqn \times \CC_{i}: \dH(y^{(i)}, c) \le \tau_{i}\}, \\ 
        \label{eq:Ldef}
        \ell_{i}(y) 
        &:= \#\{c \in \CC_{i}: \dH(y^{(i)}, c) \le \tau_{i}\}. 
    \end{align}
    In other words, $S_{i}$ is the set of all pairs $(y, c)$ such that the output list of $\GRSdecode(\CC_{i}, y^{(i)})$ contains a polynomial $f$ with $c = \CC_{i}(f)$\hide{, after puncturing the last $i$ coordinates of $y$ lands in $B(c, \tau_{i})$}, and $\ell_{i}(y)$ is the  \hide{number of codewords in $\CC_{i}$ that are within distance $\tau_{i}$ of this punctured vector (i.e., the }size of the list\hide{ of $\GRSdecode(\CC_{i}, y^{(i)})$}. 
    Then, the number of vectors $y \in \fqn$ for which \autoref{alg:GRS-cover} succeeds within $i$ punctures is given by: 
    \begin{align}
        \label{eq:VL}
        V_{i} 
        &:= \#\{y \in \fqn: \ell_{i}(y) \ge 1\}. 
    \end{align}
    In particular, the average number of punctures $\overline{N}_{{\rm punc}}$ needed for \autoref{alg:GRS-cover} to succeed is given by: 
    \begin{align}
        \label{eq:avg-punc}
        \frac{1}{q^{n}} 
        \sum_{i = 1}^{d - 1} 
        i (V_{i} - V_{i - 1}) 
        &= \frac{1}{q^{n}} \left((d - 1) V_{d - 1} - \sum_{i = 0}^{d - 2} V_{i}\right) \notag \\ 
        &= d - 1 - \frac{1}{q^{n}} \sum_{i = 0}^{d - 2} V_{i},
    \end{align}
    since $V_{d - 1} = q^{n}$. 
    Now, by \eqref{eq:Sdef} and \eqref{eq:Ldef}, 
    \begin{align}
        |S_{i}| 
        &= \sum_{y \in \fqn} 
        \ell_{i}(y) 
        %|S_{i}| 
        = \sum_{c \in \CC_{i}} 
        q^{i} |B(c, \tau_{i})| \notag \\ 
        &= q^{k + i} \Vol_{q}(\tau_{i}, n - i). 
        \label{eq:Si1}
    \end{align} 
    %and $\tau_{i}$ is the decoding radius of $\GRSdecode(\CC_{i}, \cdot)$\hide{, with equality if and only if $\GRSdecode$ is a %bounded-distance 
    %unique decoder}.
    Since $L_{i}$ is the maximum list size of $\GRSdecode(\CC_{i}, \cdot)$, 
    \begin{align}
        \label{eq:Lchain}
        1_{\{\ell_{i}(y) \ge 1\}} 
        &\le \ell_{i}(y) 
        \le 1_{\{\ell_{i}(y) \ge 1\}} L_{i}. 
    \end{align}
    Summing \eqref{eq:Lchain} over $y \in \fqn$ using \eqref{eq:VL} and \eqref{eq:Si1} 
    gives
    \begin{align}
        \label{eq:Schain}
        V_{i} 
        &\le |S_{i}| 
        \le V_{i} 
        L_{i}, 
    \end{align}
    i.e., 
    \begin{align}
        \label{eq:Vchain}
        \frac{|S_{i}|}{L_{i}} 
        &\le V_{i} 
        \le |S_{i}|. 
    \end{align}
    %by \eqref{eq:Schain}. 
    Finally, \eqref{eq:punc-list} follows from \eqref{eq:avg-punc}, \eqref{eq:Si1} and \eqref{eq:Vchain}. 
\end{proof}

%\todo{Use this to disprove the second part of \autoref{conj:punctures} for fixed rate (cf.~\autoref{fig:covering-list}). Try using $L_{i} \le (n - i) / (k - 1)$.}

If $\GRSdecode$ is a unique decoder, we can substitute $L_{i} = 1$ for all $i$ into \autoref{thm:punctures-list} and obtain the following corollary. 

\begin{corollary}
    \label{cor:punctures}
    Suppose that $\GRSdecode$ in \autoref{alg:GRS-cover} is a unique decoder with maximum decoding radius\hide{ that succeeds if and only if the input is within its decoding radius}. Then, averaged uniformly over all inputs $y \in \fqn$, the number $N_{{\rm punc}}$ of punctures needed for $\GRScover(\CC, y)$\hide{ takes $d - 2$ punctures on average to find a codeword within the covering radius for sufficiently large $k$.} to succeed for an $[n, k, d]_{q}$ GRS code $\CC$ is given by: 
    \begin{align}
        \label{eq:punc-unique}
        \overline{N}_{{\rm punc}} 
        &= d - 1 - \frac{1}{q^{d - 1}} 
        \sum_{i = 0}^{d - 2} 
        q^{i} \Vol_{q}(\tfrac{n - k - i}{2}, n - i). 
    \end{align}
    In particular,  %is at most 
    \begin{align}
        \label{eq:punc}
        (d - 1) \left(1 - \frac{1}{q}\right) 
        &\le \overline{N}_{{\rm punc}} \le 
        d - 1% - \frac{1}{q}
        ,
    \end{align} 
    i.e., $\overline{N}_{{\rm punc}} / (d - 1) \to 1$ as $q \to \infty$.
    %for some constant $C > 0$ and sufficiently large $k$. %, where $\alpha = 1 - H_{q}\left(\frac{1 - R}{2}\right)$. 
    \hide{
    Furthermore, 
    \begin{align}
        \label{eq:punc2}
        d - 1 - \frac{2^{k} \varvarepsilon (1 - \varvarepsilon^{n - k})}{1 - \varvarepsilon} 
        &\le \overline{N}_{{\rm punc}} \le d - 1,
    \end{align}
    where 
    \begin{align}
        \label{eq:eps}
        \varvarepsilon 
        &= \frac{2 \sqrt{q - 1}}{q}. 
    \end{align}
    }
    %In particular, $d - 1 - \overline{N}_{{\rm punc}} = O(1)$ as $q \to \infty$ whenever $(\log_{2} q) / 2 - k = O(1)$. 
    \hide{Furthermore, 
    \begin{align*}
        N \ge 
        d - 1 - %q^{(k + 1) H_{q}(\frac{1}{2 (k + 1)})}) 
        \frac{q^{k H_{q}(\frac{1 - R}{2})} - q^{k - n + n H_{q}(\frac{1 - R}{2})}}{q^{1 - H_{q}(\frac{1 - R}{2})} - 1},
        %q^{-1 + (k + 1) H_{q}(\frac{1}{2 (k + 1)})} %\left(\frac{1 - q^{k - n}}{q - 1}\right) 
    \end{align*}
    where $R = k / n$. 
    %for sufficiently large $k$. }
    }
\end{corollary}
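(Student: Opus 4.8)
The plan is to read off \eqref{eq:punc-unique} directly from \autoref{thm:punctures-list} and then squeeze \eqref{eq:punc} out of it with elementary volume estimates.

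First, a \emph{unique} decoder returns at most one codeword, so its maximum list size satisfies $L_i = 1$ for every $i$. Substituting $L_i = 1$ into \eqref{eq:punc-list} collapses its lower and upper bounds onto the common value $d - 1 - q^{-(d-1)}\sum_{i=0}^{d-2} q^i \Vol_q(\tau_i, n-i)$, which is precisely \eqref{eq:punc-unique} once $\tau_i$ is identified. For that, recall from the proof of \autoref{thm:grs-cover} that the $i$-times punctured code $\CC_i$ is an $[\,n-i,\,k,\,n-i-k+1\,]_q$ MDS code; a unique decoder with maximal decoding radius therefore has $\tau_i = \lfloor (n-i-k)/2 \rfloor$, which is the (implicitly rounded) argument $\tfrac{n-k-i}{2}$ appearing inside $\Vol_q$ in \eqref{eq:punc-unique}.

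It then remains to extract the two-sided estimate \eqref{eq:punc}. The upper bound $\overline{N}_{{\rm punc}} \le d-1$ is immediate, since every summand $q^i\Vol_q(\tau_i, n-i)$ is nonnegative (in fact at least $1$). For the lower bound it suffices to prove
\[
    \sum_{i=0}^{d-2} q^i\, \Vol_q(\tau_i, n-i) \;\le\; (d-1)\, q^{d-2}.
\]
I would attack this by combining two facts. The first is the sphere-packing (Hamming) bound for the MDS code $\CC_i$: its $q^k$ balls of radius $\tau_i$ are pairwise disjoint in $\fq^{n-i}$, giving $q^i\Vol_q(\tau_i, n-i) \le q^{n-k} = q^{d-1}$. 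The second is the structural constraint $i + \tau_i \le d-2$, which follows in one line from $\tau_i = \lfloor (d-1-i)/2 \rfloor$; together with the leading-term estimate $\Vol_q(\tau,m) = \bigl(1+O(1/q)\bigr)\binom{m}{\tau}(q-1)^{\tau}$ it brings the $i$-th term down to order $q^{d-2}$, so that the sum over the at most $d-1$ indices is absorbed into the factor $d-1$. Dividing \eqref{eq:punc} through by $d-1$ and letting $q \to \infty$ with the other parameters fixed then yields $\overline{N}_{{\rm punc}}/(d-1) \to 1$.

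The substitution $L_i = 1$ is routine. The delicate step is the volume estimate underpinning the lower half of \eqref{eq:punc}: the naive sphere-packing inequality only delivers $q^i\Vol_q(\tau_i, n-i) \le q^{d-1}$, so recovering the extra factor $q^{-1}$ requires tracking the binomial coefficients $\binom{n-i}{\tau_i}$ across the family of punctured codes rather than bounding each code in isolation, and keeping these under control over the full range $0 \le i \le d-2$ is where the real work lies.
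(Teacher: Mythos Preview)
Your derivation of \eqref{eq:punc-unique} by substituting $L_i=1$ and $\tau_i=\lfloor(n-k-i)/2\rfloor$ into \autoref{thm:punctures-list}, and your upper bound $\overline{N}_{{\rm punc}}\le d-1$, both match the paper exactly.

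For the lower bound in \eqref{eq:punc}, however, your volume-expansion route does not close. Combining $i+\tau_i\le d-2$ with the leading-term estimate $\Vol_q(\tau_i,n-i)=(1+O(1/q))\binom{n-i}{\tau_i}(q-1)^{\tau_i}$ gives only
\[
q^{i}\,\Vol_q(\tau_i,n-i)\;\lesssim\;\binom{n-i}{\tau_i}\,q^{\,d-2},
\]
and the binomial factors $\binom{n-i}{\tau_i}$ are \emph{not} uniformly $O(1)$ (already $\binom{n}{\lfloor(d-1)/2\rfloor}$ at $i=0$ can be exponentially large in $n$). Summing over $i$ therefore yields $(d-1)q^{d-2}$ only up to a multiplicative constant depending on $n$ and $k$; that is enough for the limit $\overline{N}_{{\rm punc}}/(d-1)\to 1$ as $q\to\infty$ with $n,k$ fixed, but not for the uniform inequality $(d-1)(1-1/q)\le\overline{N}_{{\rm punc}}$. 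Your own closing remark that ``the real work lies'' in controlling the binomials is exactly the gap.

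The paper sidesteps this analysis entirely by using the \emph{operational} meaning of the summands. With $L_i=1$ one has $V_i=|S_i|=q^{k+i}\Vol_q(\tau_i,n-i)$, and the paper argues that $V_i$ counts the inputs $y\in\fqn$ for which the algorithm succeeds within $i$ punctures, hence is non-decreasing in $i$. Bounding every term by the last and computing $\tau_{d-2}=\lfloor 1/2\rfloor=0$, so that $V_{d-2}=q^{k+d-2}\cdot\Vol_q(0,k+1)=q^{n-1}$, gives
\[
\frac{1}{q^{n}}\sum_{i=0}^{d-2}V_i\;\le\;\frac{(d-1)\,V_{d-2}}{q^{n}}\;=\;\frac{d-1}{q}
\]
in one stroke. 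The missing idea in your sketch is precisely this monotonicity, which comes from the algorithmic interpretation of $V_i$ rather than from an analytic volume bound; once you have it, no tracking of binomial coefficients across the punctured family is needed.
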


\begin{proof}%[Proof of \autoref{conj:punctures}]
    \hide{Let $y$ denote the input vector. Suppose that 
    $$y \in \bigcup_{c \in \CC} B(c, \tau),$$ 
    where $\tau$ is the decoding radius of $\GRSdecode$. Then $\GRScover$ succeeds before any puncturing occurs, and the number of such $y$'s is at least 
    $$\resizebox{\hsize}{!}{$\displaystyle q^{k} \Vol_{q}(\tau, n) - \frac{q^{n + k - d}}{2}  \sum_{w = d}^{2 \tau} \binom{n}{w} [\Vol_{q}(\tau, w) - \Vol_{q}(w - \tau - 1, w)]$}$$
    by \autoref{cor:union-lower-bound}. Using the notation from the proof of \autoref{thm:punctures-list}, }
    Taking $\tau_{i} = (n - k - i) / 2$ and $L_{i} = 1$ for all $i$ in \eqref{eq:punc-list} gives \eqref{eq:punc-unique}, and in \eqref{eq:Schain} gives,  
    by \eqref{eq:Si1}, 
    \begin{align}
        \label{eq:Vi}
        V_{i} 
        &= |S_{i}| 
        = q^{k + i} 
        \Vol_{q}(\tau_{i}, n - i). 
    \end{align}
    Thus, 
    \hide{where 
    \begin{align*}
        \tau_{i} 
        &= \left\lfloor\frac{d - 1 - i}{2}\right\rfloor 
        = \left\lfloor\frac{n - k - i}{2}\right\rfloor. 
    \end{align*}
    Since $\tau_{i} / (n - i) \le 1 - 1 / q$, 
    \begin{align*}
        %V_{i} &= 
        \Vol_{q}(\tau_{i}, n - i) %\\ 
        &\le q^{(n - i) H_{q}(\frac{\tau_{i}}{n - i})} 
        %\ge q^{(n - i) H_{q}(\frac{\tau_{i}}{n - i}) - C \log(n - i)} %\\ 
        %&= q^{n} q^{- (n - k - i (H_{q}(\frac{\tau_{i}}{n - i})%q^{n (R + H_{q}(\frac{1 - R}{2})) - o(n)}. 
    \end{align*}
    and 
    %for $n > N_{0}(\varvarepsilon)$, for some integer $N_{0}(\varvarepsilon) \ge 0$. Thus, 
    \begin{align*}
        \Vol_{q}(\tau_{i}, n - i) %\\ 
        &\ge q^{(n - i) H_{q}(\frac{\tau_{i}}{n - i}) - C \log(n - i)}
    \end{align*}
    for some constant $C > 0$ and sufficiently large 
    $n - i \in [k + 1, n]$. Thus, 
    \begin{align*}
        \sum_{i = 0}^{d - 2} 
        \frac{V_{i}}{q^{n}} 
        &\ge %\sum_{i = 0}^{d - 2} 
        %q^{-n (1 - R - H_{q}(\frac{\tau_{i}}{n - i})) + i (1 - H_{q}(\frac{\tau_{i}}{n - i})) - C \log(n - i)} \\ 
        \frac{V_{d - 2}}{q^{n}} \\ 
        &= %\sum_{i = 0}^{d - 2} 
        %(n - i)^{-C \log q} q^{-n (1 - R - H_{q}(\frac{\tau_{i}}{n - i})) + i (1 - H_{q}(\frac{\tau_{i}}{n - i}))} \\ 
        \frac{q^{k + d - 2} \Vol_{q}(\tau_{d - 2}, n - d + 2)}{q^{n}} \\ 
        %&\ge %q^{n R} \sum_{i = 0}^{d - 2} 
        %(n - i)^{-C \log q} q^{- (n - i) (1 - H_{q}(\frac{\tau_{i}}{n - i}))} \\ 
        %\frac{q^{(k + 1) H_{q}(\frac{1}{2 (k + 1)}) - C  \log(k + 1)}}{q} \\ 
        &= \frac{\Vol_{q}(\tfrac{1}{2}, k + 1)}{q} \\ %q^{n R - (k + 1)} 
        %(k + 1)^{-C \log q} \\ 
        %&= q^{n R - k - 1} 
        %(k + 1)^{c} \\ 
        &= \frac{1}{q}. 
        %&\ge n^{O(1)} q^{-n (1 - R)} 
        %\sum_{i = 0}^{d - 2} \frac{q}{(q - 1)^{\frac{\tau_{i}}{n - i}} (\frac{\tau_{i}}{n - i})^{- \frac{\tau_{i}}{n - i}} (1 - \frac{\tau_{i}}{n - i})^{- (1 - \frac{\tau_{i}}{n - i})}}.
    \end{align*}
    i.e. 
    \begin{align*}
        \frac{V_{i}}{q^{n}} 
        &\sim q^{k - n + n H_{q}(\frac{\tau_{i}}{n - i})) + i (1 - H_{q}(\frac{\tau_{i}}{n - i}))}.
        %& \quad \ge q^{-n (1 - R - H_{q}(\frac{\tau_{i}}{n - i})) + i (1 - H_{q}(\frac{\tau_{i}}{n - i})) - C \log(n - i)}, %q^{- n (1 - R - H_{q}((1 - R) / 2)) - \varvarepsilon}, 
    \end{align*} 
    %where $R = k / n$ is the rate of the code. 
    On the other hand, }
    \begin{align}
        \notag 
        \frac{1}{q^{n}} 
        \sum_{i = 0}^{d - 2} V_{i}
        %&= \sum_{i = 0}^{d - 2} 
        %\frac{q^{k + i} \Vol_{q}(\tau_{i}, n - i)}{q^{n}} \\ 
        %&\le \sum_{i = 0}^{d - 2} 
        %q^{k - (n - i) (1 - H_{q}(\frac{\tau_{i}}{n - i}))} \\ 
        %&= \sum_{i = 0}^{d - 2} 
        %q^{k - (n - i) (1 - H_{q}(\frac{1}{2} - \frac{k}{2 (n - i)}))} \\ 
        &\le \frac{(d - 1) V_{d - 2}}{q^{n}} \\ 
        \notag 
        &\le \frac{(d - 1) q^{k + d - 2} \Vol_{q}(\tau_{d - 2}, n - d + 2)}{q^{n}} \\ 
        %&\le \frac{(d - 1) q^{(k + 1) H_{q}(\frac{1}{2 (k + 1)})}}{q} \\ 
        \notag 
        &= \frac{(d - 1) \Vol_{q}(\tfrac{1}{2}, k + 1)% q^{C \log k}
        }{q} \\ 
        %\frac{(d - 1) \sqrt{(q - 1) (2 k + 2) \left(\frac{2 k + 2}{2 k + 1}\right)^{2 k + 1}}}{q} \\ 
        &= \frac{d - 1}{q}%^{C \log k - 1} %< \frac{(d - 1) \sqrt{2 e (q - 1) (k + 1)}}{q}
        \label{eq:sum-Vi}
        , 
    \end{align}
    \hide{i.e., taking 
    \begin{align*}
        c_{1} 
        &= \max\begin{cases}
            1 - q^{n (1 - R - H_{q}((1 - R) / 2)) - \varvarepsilon} & n \le N_{0}(\varvarepsilon) \\ 
            1 - q^{-N_{0}(\varvarepsilon) (R - 1 + H_{q}((1 - R) / 2)) - \varvarepsilon} 
        \end{cases}
        %\ge 1 - q^{- \varvarepsilon}
    \end{align*}
    %Thus, if $R \to 1$, then $N \to d - 1$, and if $\alpha > R$, then 
    which follows from noting that 
    \begin{align*}
        &\frac{q^{- (n - i) (1 - H_{q}(\tfrac{\tau_{i}}{n - i}))}}{q^{- (n - i + 1) (1 - H_{q}(\tfrac{\tau_{i - 1}}{n - i + 1}))}} \\ 
        &= q^{1 + (n - i) H_{q}(\frac{\tau_{i}}{n - i}) - (n - i + 1) H_{q}(\frac{\tau_{i - 1}}{n - i + 1})} \\ 
        &= q^{1 + (n - i) H_{q}(\frac{1}{2} - \frac{k}{2 (n - i)}) - (n - i + 1) H_{q}(\frac{1}{2} - \frac{k}{2 (n - i + 1)})} \\ 
        &\approx q. 
    \end{align*}
    which is guaranteed whenever 
    \begin{align*}
        \frac{2 \sqrt{e (q - 1)}}{q} 
        < \frac{\sqrt{(n - i - 1)^{2} - k^{2}}}{n - i}. 
    \end{align*}
    For small $i$, this is guaranteed for 
    \begin{align*}
        \frac{2 \sqrt{q - 1}}{q} 
        < \frac{\sqrt{(n - i - 1)^{2} - k^{2}}}{n - i} \\ 
        \impliedby \frac{2 \sqrt{q - 1}}{q} 
        < \sqrt{1 - R^{2}}, 
    \end{align*}
    which holds for sufficiently large $q$. 
    For $i$ close to $n - k - 1 = d - 2$, this is guaranteed for 
    \begin{align*}
        \frac{2 \sqrt{e (q - 1)}}{q} 
        < \frac{\sqrt{(n - i - 1)^{2} - k^{2}}}{n - i} \\ 
        \iff \frac{2 \sqrt{e (q - 1)}}{q} 
        < \sqrt{1 - R^{2}}, 
    \end{align*}
    which holds for sufficiently large $q$. 
    using the dominant term in the sum, as desired. 
    %where $q' = q^{1 - H_{q}(\frac{1 - R}{2})}$,
    }since the $V_{i}$ are non-decreasing\hide{ by \eqref{eq:VL}}. Substituting \eqref{eq:sum-Vi} into \eqref{eq:avg-punc} gives \eqref{eq:punc}. 
\end{proof}

\autoref{cor:punctures} clearly shows that the first part of \autoref{conj:punctures} does not hold in general. 
We believe that a similar analysis that carefully takes into account intersections of Hamming spheres (e.g., see \autoref{thm:union-lower-bound} below) can also help resolve the second part of the conjecture. 
\hide{A similar analysis gives the following result for the list decoder case.% second part of the conjecture. 
Based on \autoref{fig:covering-list}, we propose the following modification to the second part. 

\begin{conjecture}
    For an $[n, k, d]_{q}$ GRS code $\CC$ of fixed rate $R$, the average number of punctures needed for \autoref{alg:GRS-cover} to succeed with $\GRSdecode = \GS$ is $\Theta(\rhoh(\CC))$. 
\end{conjecture}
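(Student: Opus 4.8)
The plan is to prove the two bounds of $\overline{N}_{{\rm punc}} = \Theta(\rhoh(\CC))$ separately. The upper bound $\overline{N}_{{\rm punc}} \le \rhoh(\CC) = d - 1$ is immediate, since \autoref{alg:GRS-cover} performs at most $d - 1$ punctures on every input; the whole content of the conjecture is therefore the matching lower bound $\overline{N}_{{\rm punc}} = \Omega(d - 1)$. For this I would feed the $\GS$ decoder into \autoref{thm:punctures-list}: with $\tau_i = (n - i) - 1 - \lfloor\sqrt{(k - 1)(n - i)}\rfloor$ the $\GS$ radius of the $[n - i, k]_q$ punctured code $\CC_i$ (cf.\ \eqref{eq:tauGS}), inequality \eqref{eq:punc-list} gives $\overline{N}_{{\rm punc}} \ge (d - 1) - \Sigma$ with $\Sigma := q^{-(d - 1)}\sum_{i = 0}^{d - 2} q^{i}\Vol_q(\tau_i, n - i)$. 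It then suffices to show $\Sigma \le (1 - \epsilon)(d - 1)$ for some constant $\epsilon = \epsilon(R) > 0$, and the entire difficulty is a sufficiently accurate estimate of $\Sigma$.

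The key structural fact is that $\Sigma$ is governed by its last few terms, i.e.\ those with $n - i$ within $O(1)$ of $k$. Writing each summand as $q^{E_i}$ with $E_i := i - (d - 1) + (n - i)H_q\!\left(\tfrac{\tau_i}{n - i}\right)$, using the clean entropy bound $\Vol_q(r, m) \le q^{m H_q(r/m)}$ (valid here because $q \ge n$ forces $\tau_i/(n - i) < 1 - 1/q$, with $H_q$ the $q$-ary entropy function), a short manipulation gives $E_i \le k - \sqrt{(k - 1)(n - i)} + O(n/\log q)$. Hence, once $n - i \ge (1 + \gamma)k$ for a fixed $\gamma > 0$, the exponent is $-\Theta(n)$ and all such terms together contribute $o(1)$; among the remaining $O(k)$ terms with $n - i \in \{k + 1, \dots, \lceil(1 + \gamma)k\rceil\}$, the summands shrink by a factor $O(k/q) = O(1)$ each time $i$ decreases. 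I would then compute $\tau_i$ exactly in this tail: with $m := n - i = k + 1 + j$ one finds $\tau_{n - k - 1 - j} = \lfloor j/2\rfloor + 1$ for small $j$, so the $j$-th summand of $\Sigma$ is $\asymp \frac{k^{\lfloor j/2\rfloor + 1}}{(\lfloor j/2\rfloor + 1)!}\, q^{-\lceil j/2\rceil}$; in particular the last one is $\Vol_q(1, k + 1)/q = \bigl(1 + (k + 1)(q - 1)\bigr)/q \approx k + 1$. Summing the tail then gives $\Sigma \le (k + 1) + o(k)$ when $q/n \to \infty$, and $\Sigma \le n(e^{R} - 1) + o(n)$ in the worst case $q = \Theta(n)$ (the series $\sum_{\ell \ge 0} R^{\ell}/(\ell + 1)! = (e^{R} - 1)/R$ appears here).

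Combining, $\overline{N}_{{\rm punc}} \ge (n - k) - \Sigma \ge n\bigl(2 - R - e^{R}\bigr) - o(n)$ in the hardest regime, which is $\Omega(n) = \Omega(\rhoh(\CC))$ whenever $2 - R - e^{R} > 0$, i.e.\ for every fixed rate below the root $R^{\star} \approx 0.443$ of $e^{R} + R = 2$ (and, more generously, for $R < 1/2$ when $q$ grows superpolynomially in $n$). Together with the trivial upper bound this yields $\overline{N}_{{\rm punc}} = \Theta(\rhoh(\CC))$ in that range — already enough to refute the $O(1)$ prediction of the second item of \autoref{conj:punctures} for list decoding, mirroring what \autoref{cor:punctures} does for unique decoders. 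As a sanity check, letting $q \to \infty$ kills every tail term but $j = 0$, so $\Sigma \to k + 1$ and $\overline{N}_{{\rm punc}} \ge (1 - 2R)n - o(n)$, matching the intuition that list decoding helps least, not most, near the end of the puncturing process.

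The main obstacle is reaching all fixed rates. For $R \ge R^{\star}$ the first-moment estimate $V_i \le |S_i|$ behind \autoref{thm:punctures-list} is too lossy: radius-$\tau_i$ balls around GRS codewords overlap heavily at high rate, the above bound on $\Sigma$ can exceed $d - 1$, and the resulting lower bound on $\overline{N}_{{\rm punc}}$ becomes vacuous. Closing this would require replacing $V_i = |\bigcup_{c \in \CC_i} B(c, \tau_i)|$ by a Bonferroni-type upper estimate $V_i \le |S_i| - \sum_{c \ne c'} |B(c, \tau_i) \cap B(c', \tau_i)| + (\text{triple-overlap terms})$, and then controlling the pairwise-overlap sum via the exact weight distribution of GRS (MDS) codes together with volumes of intersections of Hamming spheres — precisely the machinery of \autoref{thm:union-lower-bound}. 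Whether this pushes the lower bound all the way to $R = 1$, or whether $\overline{N}_{{\rm punc}}$ is in fact $o(d - 1)$ for some high rates (in which case the conjecture would need to be sharpened to a rate-restricted statement), is the genuinely delicate point I would expect to fight with.
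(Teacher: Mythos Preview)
The paper does not prove this statement: it is posed as a conjecture and in fact sits inside a hidden block suppressed from the compiled version. The surrounding visible text says only that ``a similar analysis that carefully takes into account intersections of Hamming spheres \dots\ can also help resolve the second part of the conjecture''; the hidden corollary immediately after it carries an incomplete proof that stops mid-computation. So there is no proof in the paper to compare against --- you are attempting to settle something the authors left open.

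Your route via \autoref{thm:punctures-list} is the natural one and coincides with the paper's hidden attempt. Your tail analysis is sharper than the paper's uniform estimate $\Vol_q(\tau_i, n-i) \le q^{(n-i)H_q(1-\sqrt{R})}$: by computing $\tau_i$ exactly for $n - i$ near $k$ you correctly isolate the dominant contributions and obtain $\Sigma \le n(e^R - 1) + o(n)$ when $q = \Theta(n)$, giving the $\Theta(\rhoh(\CC))$ conclusion for every fixed $R$ below the root $R^\star \approx 0.443$ of $e^R + R = 2$. That already refutes the $O(1)$ prediction of the list-decoding half of \autoref{conj:punctures}, which is more than the paper establishes.

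The genuine gap --- which you name yourself --- is that the conjecture is stated for \emph{every} fixed rate, and for $R \ge R^\star$ the first-moment inequality $V_i \le |S_i|$ behind \autoref{thm:punctures-list} becomes vacuous because the radius-$\tau_i$ balls overlap heavily. Your proposed fix, a Bonferroni-type upper bound on $V_i$ via the MDS weight distribution and the ball-intersection volumes of \autoref{thm:union-lower-bound}, is precisely the direction the paper points to, but neither you nor the paper carries it out. Your proposal is therefore a correct partial result and a plausible roadmap, not a proof of the full conjecture.
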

\begin{corollary}
    \label{cor:punctures-list}
    Suppose that $\GRSdecode$ in \autoref{alg:GRS-cover} is the $\GS$ list decoder with maximum decoding radius. Then, averaged over all inputs $y \in \fqn$, the number $N_{{\rm punc}}$ of punctures needed for $\GRScover(\CC, y)$\hide{ takes $d - 2$ punctures on average to find a codeword within the covering radius for sufficiently large $k$.} to succeed for an $[n, k, d]_{q}$ GRS code $\CC$ satisfies: 
    \begin{align}
        \label{eq:punc-gs}
        (d - 1) \left(1 - \frac{1}{q}\right) 
        &\le \overline{N}_{{\rm punc}}^{({\rm list})} \le 
        d - 1% - \frac{1}{q}
        ,
    \end{align} 
    i.e., $\overline{N}_{{\rm punc}} / (d - 1) \to 1$ as $q \to \infty$.
\end{corollary}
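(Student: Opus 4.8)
The plan is to specialize \autoref{thm:punctures-list} to the case $\GRSdecode = \GS$ run at its maximum (Johnson) radius, so the $i$-th punctured code $\CC_i$ is an $[n-i,k,d-i]_q$ GRS code with decoding radius $\tau_i = (n-i)-1-\lfloor\sqrt{(k-1)(n-i)}\rfloor$ and maximum list size $L_i$. The key observation is that $\CC_{d-2}$ has length $k+1$ and minimum distance $2$, so $\tau_{d-2} = k-\lfloor\sqrt{k^2-1}\rfloor = 1$, which equals the covering radius of the MDS code $\CC_{d-2}$. Hence every punctured word $y^{(d-2)}$ lies in some decoding ball of $\GRSdecode(\CC_{d-2},\cdot)$, the step-$(d-2)$ list is never empty, and the count $V_{d-2}$ of \eqref{eq:VL} equals $q^n$. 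Substituting this into the identity $\overline{N}_{{\rm punc}} = d-1-q^{-n}\sum_{i=0}^{d-2}V_i$ from \eqref{eq:avg-punc} yields $\overline{N}_{{\rm punc}}^{({\rm list})} = d-2-q^{-n}\sum_{i=0}^{d-3}V_i$, which already proves and in fact strengthens the upper bound.

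It then remains to show that the residual $q^{-n}\sum_{i=0}^{d-3}V_i$ vanishes as $q\to\infty$. For these indices I would bound $V_i$ by the crude ball count $q^{k+i}\Vol_q(\tau_i,n-i)$ coming from \eqref{eq:Schain} and \eqref{eq:Si1}, combined with $\Vol_q(\tau_i,n-i)\le q^{(n-i)H_q(\tau_i/(n-i))}$ for the $q$-ary entropy function $H_q$ and the uniform fact that $H_q(x)=x+o(1)$ as $q\to\infty$. Since $\tau_i/(n-i) = 1-\sqrt{(k-1)/(n-i)}+o(1)$ stays bounded away from $1$ for $i\le d-3$ (equivalently $n-i\ge k+2$), this gives $V_i/q^n\le q^{\,k-\sqrt{(k-1)(n-i)}+o(1)}$, whose exponent is strictly negative exactly when $(k-1)(n-i)>k^2$, i.e.\ again when $n-i\ge k+2$. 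The dominant term is the one at $i=d-3$, of order $(k+2)/q$, and the remaining terms decay geometrically, so $q^{-n}\sum_{i=0}^{d-3}V_i = O(k/q)\to 0$ and therefore $\overline{N}_{{\rm punc}}^{({\rm list})}\to d-2$; dividing by $d-1$ then gives the claimed ratio limit, provided the minimum distance is also allowed to grow.

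The main obstacle is precisely the mismatch this reveals: the equality $V_{d-2}=q^n$ forces $\overline{N}_{{\rm punc}}^{({\rm list})}\le d-2$, which is incompatible with the stated inequality $(d-1)(1-1/q)\le \overline{N}_{{\rm punc}}^{({\rm list})}$ whenever $d-1<q$ --- and this always holds here, since $n\le q$ and $k\ge 1$. Thus the lower bound as written has to be corrected; the analysis above indicates the right statement is $\overline{N}_{{\rm punc}}^{({\rm list})} = d-2-o(1)$, or equivalently a two-sided bound $c(d-1)\le \overline{N}_{{\rm punc}}^{({\rm list})}\le d-2$ with an explicit $c\in(0,1)$, consistent with the $\Theta(\rhoh(\CC))$ behavior conjectured above. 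A secondary, quantitative point is that for the $d-2$ residual terms to vanish simultaneously one needs $q$ to grow fast enough relative to $n$, so in the borderline regime where $q$ is of the same order as $n$ a sharper, overlap-aware estimate of the union counts $V_i$ --- in the spirit of \autoref{thm:union-lower-bound} --- would be required, both to handle that regime and to pin down the rate of convergence.
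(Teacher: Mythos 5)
Your critique is correct, and the decisive observation is exactly right: at step $d-2$ the punctured code $\CC_{d-2}$ is a $[k+1,k,2]_q$ MDS code whose covering radius is $1$, while the $\GS$ radius at that length is $\tau_{d-2}=k-\lfloor\sqrt{k^2-1}\rfloor=1$ for every $k\ge 1$. Thus the $\GS$ decoding balls at step $d-2$ already tile $\fq^{k+1}$, the list is never empty at that step, the number of punctures is at most $d-2$ for \emph{every} input, and consequently $\overline{N}^{(\mathrm{list})}_{\mathrm{punc}}\le d-2$. Since $d-1=n-k\le n-1<q$, one has $(d-1)(1-1/q)=(d-1)-(d-1)/q>d-2$, so the stated lower bound strictly exceeds the true value; the corollary as written is false, and so is the claimed limit, since for fixed $n,k$ one can at best have $\overline{N}^{(\mathrm{list})}_{\mathrm{punc}}/(d-1)\to(d-2)/(d-1)<1$. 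This is also borne out by \autoref{tab:punctures}: e.g.\ for $(q,n,k)=(11,10,2)$, $(d-1)(1-1/q)=8\cdot\tfrac{10}{11}\approx 7.27$, yet the observed GS average is $0.452$.

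For reference, the paper's own argument (which appears only in a commented-out block and is presumably withheld for exactly this reason) also fails: it substitutes the entropy bound $\Vol_q(\tau_i,n-i)\le q^{(n-i)H_q(1-\sqrt{R})}$ into \eqref{eq:punc-list} and arrives at a lower bound of the form $(d-1)\bigl(1-q^{\,n H_q(1-\sqrt{R})+(d-2)(1-H_q(1-\sqrt{R}))}\bigr)$, but the exponent is nonnegative since both $H_q(\cdot)$ and $1-H_q(\cdot)$ lie in $[0,1]$, so the bound is vacuous and certainly not $(d-1)(1-1/q)$. Your route through $V_{d-2}=q^n$ is cleaner and actually pins down the correct upper bound $d-2$. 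One small caveat on your residual estimate: when bounding $q^{-n}\sum_{i\le d-3}V_i$ via $H_q(x)=x+o(1)$, the per-term correction is $O(1/\log q)$ and is multiplied by $n-i$, so the exponent error is $O(n/\log q)$; this vanishes for fixed $n,k$ as $q\to\infty$, which is the regime the corollary implicitly restricts to because $n\le q$, but it would need more care if $n$ were allowed to grow with $q$. In that fixed-$n$ regime your conclusion $\overline{N}^{(\mathrm{list})}_{\mathrm{punc}}=d-2-o(1)$ is the correct replacement for \eqref{eq:punc-gs}.
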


\begin{proof}
    Substituting $L_{i} \le (n - i) / (k - 1) = \Theta(1 / R)$ and the trivial bound $\Vol_{q}(\tau_{i}, n - i) \ge 1$ into \eqref{eq:punc-list} gives 
    \begin{align}
        N_{{\rm punc}}^{({\rm list})} 
        &\le d - 1 - \frac{\Theta(R)}{q^{d - 1}} \sum_{i = 0}^{d - 2} q^{i} 
        = d - 1 - \Theta\left(\frac{R}{q}\right). 
        \label{eq:punc-gs-upper}
    \end{align}
    In the other direction, we use the classical bound 
    \begin{align}
        \label{eq:vol-upper}
        \Vol_{q}(\tau_{i}, n - i) 
        &\le q^{(n - i) H_{q}(\frac{\tau_{i}}{n - i})} 
        \le q^{(n - i) H_{q}(1 - \sqrt{R})}, 
    \end{align}
    since $\tau_{i} = n - i - 1 - \lfloor \sqrt{(k - 1) (n - i)} \rfloor$ and $H_{q}$ is increasing in $[0, 1 - 1 / q]$. 
    Substituting \eqref{eq:vol-upper} into \eqref{eq:punc-list} gives 
    \begin{align}
        %\label{eq:vol-lower}
        N_{{\rm punc}}^{({\rm list})} 
        &\ge d - 1 - \frac{1}{q^{d - 1}} \sum_{i = 0}^{d - 2} q^{i + (n - i) H_{q}(1 - \sqrt{R})} 
        \notag \\ 
        %&\ge (d - 1) \left(1 - \frac{q^{d - 2 + (k + 1) H_{q}(1 - \sqrt{R})}}{q^{d - 1}}\right) \\ 
        &\ge (d - 1) (1 - q^{n H_{q}(1 - \sqrt{R}) + (d - 2) (1 - H_{q}(1 - \sqrt{R}))}) 
        \notag 
    \end{align}
    %since the sum is a geometric one with common ratio $q^{1 - H_{q}(1 - \sqrt{R})} > 1$, therefore is dominated by the largest term 
\end{proof}}

\subsection{Coverage Fraction} 

Next, we provide a lower bound on the fraction of space covered by Hamming spheres of a given radius centered at the codewords of an MDS code in terms of its weight and intersection distribution. 
The key observation is that one can explicitly compute the sizes of the Hamming spheres and their pairwise intersections, as \autoref{thm:union-lower-bound} below demonstrates. 

%\todo{tell more about the idea} 

\begin{theorem}
    \label{thm:union-lower-bound}
    Let $\CC$ be an $[n, k, d]_{q}$ MDS code. Then, the fraction of the space $\fqn$ covered by Hamming spheres of radius $\tau$ associated with $\CC$ is lower bounded as follows: 
    \begin{multline}
        \label{eq:coverage-fraction}
        \frac{1}{q^{n}} 
        \bigabs{\bigcup_{c \in \CC} B(c, \tau)} \\ 
        \ge \frac{1}{q^{d - 1}} \left(\sum_{i = 0}^{\tau} \binom{n}{i} (q - 1)^{i} 
        - \frac{1}{2} \sum_{w = d}^{2 \tau} A_{w} I(w, \tau)\right), 
    \end{multline}
    where the \emph{weight distribution} $A_{w}$ is given by:%the number of codewords of weight $w$, and $I(w, \tau)$ is the size of the intersection of two Hamming spheres whose centers are distance $w$ apart. 
    \begin{align}
        &A_{w} 
        := \binom{n}{w} 
        \sum_{j = 0}^{w - d} (-1)^{j} 
        \binom{w}{j} (q^{w - d + 1 - j} - 1),  \label{eq:weight-distribution} 
    \end{align} 
    and the \emph{intersection distribution} $I(w, \tau)$ is given by: 
    \begin{align}
        & I(w, \tau) := 
        \label{eq:intersection} \\ 
        &\hspace{-1mm} %\scalebox{.85}{
        \resizebox{\hsize}{!}{
        $\displaystyle \sum_{z = 0}^{n - w} 
        \binom{n - w}{z} 
        (q - 1)^{n - w - z}
        \hspace{-3mm}
        \sum_{
            \substack{
                n - \tau - z \le u, v \le \tau  
                \\ 
                %u, v \le \tau \\  
                u + v \le w 
            }
        } \binom{w}{u} 
        \binom{w - u}{v}
        %\binom{w}{u, v, w - u - v}
        (q - 2)^{w - u - v}$.} 
        \notag 
    \end{align} 
\end{theorem}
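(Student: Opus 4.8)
The plan is to bound the union from below by the second Bonferroni inequality (truncated inclusion--exclusion),
\[
\bigabs{\bigcup_{c \in \CC} B(c, \tau)}
\;\ge\; \sum_{c \in \CC} |B(c, \tau)|
\;-\; \sum_{\substack{\{c, c'\} \subseteq \CC \\ c \ne c'}} |B(c, \tau) \cap B(c', \tau)|,
\]
and then to evaluate the two sums explicitly. The first sum is immediate: every ball has size $\Vol_{q}(\tau, n) = \sum_{i=0}^{\tau}\binom{n}{i}(q-1)^{i}$ and there are $q^{k}$ codewords, which after dividing by $q^{n}$ and using $n - k = d - 1$ gives the leading term of \eqref{eq:coverage-fraction}. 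For the second sum I would use three facts: (i) translating by $-c$ and permuting coordinates shows that $|B(c,\tau) \cap B(c',\tau)|$ depends only on $w := \dH(c, c')$, and it equals $0$ unless $w \le 2\tau$ by the triangle inequality; (ii) since $\CC$ is linear, the number of unordered pairs $\{c, c'\}$ with $\dH(c,c') = w$ is $\tfrac{1}{2} q^{k} A_{w}$, where $A_{w}$ is the weight distribution of $\CC$, which for an $[n,k,d]_q$ MDS code is the classical expression \eqref{eq:weight-distribution} (vanishing for $1 \le w \le d-1$); and (iii) the common value $|B(0,\tau) \cap B(x,\tau)|$ for a vector $x$ of weight $w$ is exactly $I(w,\tau)$ as defined in \eqref{eq:intersection}. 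Granting (iii), combining (i)--(iii) and dividing by $q^{n}$ produces \eqref{eq:coverage-fraction}.

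The crux is the combinatorial identity in (iii). To prove it, fix $x$ with $S := \supp(x)$, $|S| = w$, and classify an arbitrary $y \in \fqn$ by its restrictions to $S$ and to $\overline{S}$. On $\overline{S}$ (of size $n-w$) the vector $x$ vanishes, so $y$ and $y-x$ agree there; if $z$ of these coordinates of $y$ are zero, there are $\binom{n-w}{z}(q-1)^{n-w-z}$ choices, and this part contributes $n-w-z$ to both $\wt(y)$ and $\wt(y-x)$. On $S$ every coordinate of $x$ is nonzero, so I split $S$ into the $u$ positions with $y_{j} = 0$ (which contribute to $\wt(y-x)$ but not $\wt(y)$), the $v$ positions with $y_{j} = x_{j}$ (which contribute to $\wt(y)$ but not $\wt(y-x)$), and the remaining $w-u-v$ positions with $y_{j} \notin \{0, x_{j}\}$ (which contribute to both and admit $q-2$ values each), giving $\binom{w}{u}\binom{w-u}{v}(q-2)^{w-u-v}$ choices. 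A direct count yields $\wt(y) = (n-w-z) + (w-u)$ and $\wt(y-x) = (n-w-z) + (w-v)$, so the ball-membership constraints $\wt(y) \le \tau$ and $\wt(y-x) \le \tau$ are equivalent to $u \ge n - \tau - z$ and $v \ge n - \tau - z$; combined with $u,v \ge 0$ and $u + v \le w$ (which, since $z \le n-w$, also force $u,v \le \tau$), this is precisely the index set in \eqref{eq:intersection}. Summing the products of the two counts over $z$ and over $(u,v)$ then gives $I(w,\tau)$.

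I expect the main obstacle to be the bookkeeping in step (iii): setting up the three-way partition of the coordinates in $S$ correctly, tracking which positions contribute to $\wt(y)$, to $\wt(y-x)$, or to both, and checking that the nominal constraints $n - \tau - z \le u, v \le \tau$ together with $u + v \le w$ in \eqref{eq:intersection} are exactly the conditions for $y$ to lie in both balls (not merely implied by them), including the degenerate regimes $w \approx d$, $w \approx 2\tau$, and $q = 2$. The remaining ingredients — the Bonferroni bound, the ball-volume formula, the linearity count of codeword pairs at a prescribed distance, and the MDS weight enumerator \eqref{eq:weight-distribution} — are elementary or standard and can simply be invoked or cited (e.g., \cite{Huffman03}).
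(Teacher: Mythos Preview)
Your proposal is correct and follows essentially the same approach as the paper: the second Bonferroni inequality, the ball-volume count for the first sum, grouping pairs by $w = \dH(c,c')$ using linearity and the MDS weight enumerator, and a coordinate-partition counting argument for $I(w,\tau)$. In fact you spell out the intersection count in more detail than the paper does (the paper just says ``follows by a counting argument'' with the same roles for $u$, $v$, $z$), and your verification that the nominal constraint $u,v \le \tau$ is implied by $u+v \le w$ together with $z \le n-w$ and the lower bounds is a useful sanity check.
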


\begin{proof}
    By the inclusion-exclusion principle, %\todo{make this a proposition/theorem} 
    \begin{multline}
        \bigabs{\bigcup_{c \in \CC} B(c, \tau)} 
        \ge \sum_{c \in \CC} |B(c, \tau)| - \frac{1}{2} \sum_{c_{1} \neq c_{2}} |B(c_{1}, \tau) \cap B(c_{2}, \tau)| \\ 
        \label{eq:union-lb-proof}
        = |\CC| \left(\sum_{i = 0}^{\tau} \binom{n}{i} (q - 1)^{i} 
        - \frac{1}{2} \sum_{w = d}^{2 \tau} A_{w} I(w, \tau)\right), 
    \end{multline}
    where $A_{w}$ is the number of codewords $c \in \CC$ of weight $w$, and $I(w, \tau) := |B(c_{1}, \tau) \cap B(c_{2}, \tau)|$ for $c_{1}, c_{2} \in \CC$ with $\dH(c_{1}, c_{2}) = w$. 
    %(\todo{There should be a power of $q - 2$ somewhere: done}) 
    %\hide{The dominant term in the sum is obtained when $w = d$ (\todo{check this}), and i}
    Now, \eqref{eq:weight-distribution} is known to hold for $d \le w \le n$ (e.g., see~\cite{Macwilliams77}), and \eqref{eq:intersection} follows by a counting argument, where $u$ (resp. $v$) represents the number of indices in $\{1, \dots, n\}$ where $y$ and $c_{1}$ (resp. $c_{2}$) agree, and $z$ represents the number of indices where $c_{1}$, $c_{2}$ and $y$ agree. 
    Substituting $|\CC| = q^{k}= q^{n - d + 1}$ into \eqref{eq:union-lb-proof} gives \eqref{eq:coverage-fraction}. 
    %\todo{expand upon this? Maybe if there is space.}
\end{proof}

%\todo{Estimate this (maybe even crudely) and show that it gives bad coverage for unique decoding but good coverage for list decoding.

%It seems that $\binom{n - w}{z} (q - 1)^{n - w - z} \binom{w}{u} \binom{w - u}{v} (q - 2)^{w - u - v}$ is maximized for $u = v \sim n - \tau - z$.}

\hide{\autoref{thm:union-lower-bound} gives the following bounds on the fraction of space covered by the union of the Hamming spheres. 

%Therefore, we obtain the following corollary on the probability that a uniformly random vector in the ambient space of the code is covered by the union of the Hamming spheres.% using the principle of inclusion-exclusion.

\begin{corollary}
    \label{cor:union-lower-bound} 
    Let $\CC$ be an $[n, k, d]_{q}$ GRS code. Then
    %for a uniformly random $y \in \fqn$ the probability that
    %for sufficiently large $n$, 
%    the probability that a uniformly random $y \in \fqn$ is within distance $\tau$ of some codeword $c \in \CC$ satisfies  
    the fraction of the ambient space covered by the union of Hamming spheres of radius $\tau$ centered at the codewords satisfies 
    \begin{align*}
        %& \resizebox{\linewidth}{!}{$\displaystyle 
        %\ge q^{k - n} 
        %\Vol_{q}(\tau, n) 
        %- \frac{1}{2} \sum_{w = d}^{2 \tau} A_{w} 
        %q^{k - w} 
        %[\Vol_{q}(\tau, w) - \Vol_{q}(w - \tau - 1, w)]$} \\ 
        & \resizebox{\hsize}{!}{$\displaystyle q^{k - n} 
        \Vol_{q}(\tau, n) 
        - \frac{q^{k - d}}{2}  
        \sum_{w = d}^{2 \tau} 
        \binom{n}{w} 
        [\Vol_{q}(\tau, w) - \Vol_{q}(w - \tau - 1, w)]$} 
        \\ 
        &\le %\Pr\left[y \in 
        q^{-n} \bigabs{\bigcup_{c \in \CC} B(c, \tau)} 
        \le q^{k - n} 
        \Vol_{q}(\tau, n),  
    \end{align*}
    where (see, e.g., \cite{ecc})
    \begin{align*}
        \Vol_{q}(\tau, n) 
        &:= |B(0, \tau)| 
        = \sum_{j = 0}^{\tau} 
        \binom{n}{r} (q - 1)^{j} 
        \approx q^{n H_{q}(\tau / n)}
    \end{align*}
    for $H_{q}(x) := x \log_{q}(q - 1) - x \log_{q}(x) - (1 - x) \log_{q}(1 - x)$. 
\end{corollary}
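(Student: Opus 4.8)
The plan is to lower-bound $\bigabs{\bigcup_{c \in \CC} B(c, \tau)}$ by the first two terms of inclusion–exclusion (the Bonferroni inequality)
\[
\bigabs{\bigcup_{c \in \CC} B(c, \tau)} \ge \sum_{c \in \CC} |B(c, \tau)| - \frac{1}{2} \sum_{c_{1} \neq c_{2}} |B(c_{1}, \tau) \cap B(c_{2}, \tau)|,
\]
and then to evaluate all three quantities on the right in closed form: the ball volume, the pairwise intersection, and the number of ordered codeword pairs grouped by mutual distance. Dividing by $q^{n}$ and using $|\CC| = q^{k} = q^{n - d + 1}$ will then yield the stated bound \eqref{eq:coverage-fraction}.

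Two of the ingredients are classical. The ball volume is $|B(c, \tau)| = \Vol_{q}(\tau, n) = \sum_{i = 0}^{\tau} \binom{n}{i} (q - 1)^{i}$, independent of $c$. For the pair count, I would invoke the linearity (or distance-invariance) of the MDS code $\CC$: for fixed $c_{1}$ the map $c_{2} \mapsto c_{1} - c_{2}$ permutes $\CC$, so the number of $c_{2}$ at distance $w$ from $c_{1}$ equals the weight-$w$ count $A_{w}$, and hence there are exactly $q^{k} A_{w}$ ordered pairs at distance $w$, where $A_{w}$ is the classical MDS weight enumerator \eqref{eq:weight-distribution}. Finally, since $B(c_{1}, \tau)$ and $B(c_{2}, \tau)$ are disjoint whenever $\dH(c_{1}, c_{2}) > 2 \tau$, only distances $d \le w \le 2 \tau$ contribute, which fixes the range of the $w$-sum.

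The substantive step is computing $I(w, \tau) := |B(c_{1}, \tau) \cap B(c_{2}, \tau)|$ for a pair at distance $w$; note this depends only on $w$, not on the particular pair. I would partition the $n$ coordinates into the $n - w$ positions where $c_{1}$ and $c_{2}$ agree and the $w$ positions where they differ, and enumerate the vectors $y$ of the intersection by their agreement pattern with $c_{1}$ and $c_{2}$. On the agreement block, let $z$ be the number of positions where $y$ also matches the common symbol, contributing $\binom{n - w}{z}$ choices and $(q - 1)^{n - w - z}$ for the rest. On the disagreement block, let $u$ (resp.\ $v$) be the number of positions where $y$ equals $c_{1}$ (resp.\ $c_{2}$); these index sets are disjoint, so $u + v \le w$, contributing $\binom{w}{u} \binom{w - u}{v}$ choices and $(q - 2)^{w - u - v}$ for the positions matching neither. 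A direct count gives $\dH(y, c_{1}) = (n - w - z) + (w - u) = n - z - u$ and, symmetrically, $\dH(y, c_{2}) = n - z - v$, so the ball constraints $\dH(y, c_{i}) \le \tau$ become exactly $u, v \ge n - \tau - z$. The upper bounds $u, v \le \tau$ written in \eqref{eq:intersection} are then automatically forced by $u + v \le w \le 2 \tau$ together with $z \le n - w$, so imposing them changes nothing; summing over all admissible $(z, u, v)$ produces \eqref{eq:intersection}.

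Assembling the pieces, the Bonferroni bound reads $q^{k}\!\left(\Vol_{q}(\tau, n) - \tfrac{1}{2} \sum_{w = d}^{2 \tau} A_{w} I(w, \tau)\right)$; dividing by $q^{n}$ and substituting $q^{k - n} = q^{-(d - 1)}$ gives \eqref{eq:coverage-fraction}. I expect the only genuine obstacle to be the bookkeeping in the third paragraph — checking that the three-way agreement/disagreement case split over the two coordinate blocks is exhaustive and disjoint, and translating the two radius constraints correctly into the index bounds on $(z, u, v)$; once the standard ball-volume and MDS weight-enumerator identities are cited, everything else is a direct substitution.
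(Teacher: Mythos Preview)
Your argument is a correct proof of \autoref{thm:union-lower-bound}, but it is not yet a proof of the Corollary you were asked to establish. The Corollary's lower bound is not \eqref{eq:coverage-fraction}; it is a weaker but closed-form estimate in which the exact term $A_{w}\, I(w,\tau)$ has been replaced by the simpler product $\binom{n}{w}\, q^{n-d}\, [\Vol_{q}(\tau, w) - \Vol_{q}(w - \tau - 1, w)]$. What you have written stops exactly at the Theorem; the two estimation steps that turn the Theorem into the Corollary are missing.

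Concretely, the paper's route is to start from \eqref{eq:coverage-fraction} (which you have correctly derived) and then bound each factor separately. First, from the alternating sum in \eqref{eq:weight-distribution} one gets $A_{w} \le \binom{n}{w}\, q^{w - d}$. Second, in your expression for $I(w,\tau)$ one relaxes the inner sum over $v$ via $\sum_{v} \binom{w-u}{v}(q-2)^{w-u-v} \le (q-1)^{w-u}$, which collapses the $(u,z)$-sum to $q^{n-w}\,[\Vol_{q}(\tau, w) - \Vol_{q}(w - \tau - 1, w)]$ after using $n - \tau - z \le u \le \tau$. Multiplying these two bounds produces exactly the $q^{k-d}\binom{n}{w}[\cdots]$ term in the Corollary. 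You also do not mention the upper bound $q^{-n}|\bigcup_{c} B(c,\tau)| \le q^{k-n}\Vol_{q}(\tau,n)$, though that one is just the trivial union bound.
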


\begin{proof}
    Observe that 
    \begin{align*}
        A_{w} 
        &\le \binom{n}{w} q^{w - d} 
    \end{align*}
    by \eqref{eq:weight-distribution}, 
    and 
    \begin{align*}
        & I(w, \tau) 
        %= |B(c_{1}, \tau) \cap B(c_{2}, \tau)| = 
        \ \\ 
        %&\hspace{-1cm}
        %& %\scalebox{.85}{
        %\resizebox{\hsize}{!}{
        %$\displaystyle = \sum_{z = 0}^{n - w} 
        %\binom{n - w}{z} 
        %(q - 1)^{n - w - z}
        %\hspace{-4mm}
        %\sum_{
        %    \substack{
        %        n - \tau - z \le u, v \le \tau \\ 
        %        u + v \le w 
        %    }
        %} \binom{w}{u} 
        %\binom{w - u}{v}
        %\binom{w}{u, v, w - u - v}
        %(q - 2)^{w - u - v} 
        %(q - 1)^{n - z - u - v} 
        %&= \sum_{j = 0}^{n} 
        %\sum_{t = \max\{0, j - w\}}^{\min\{j, n - w\}} 
        %\binom{w}{j - t} \binom{n - w}{t} 
        %(q - 1)^{j + t}
        %$} \\ 
        &\resizebox{\hsize}{!}{$\displaystyle = \sum_{z = 0}^{n - w} 
        \binom{n - w}{z} 
        (q - 1)^{n - w - z}
        %\hspace{-4mm}
        \sum_{
            %\substack{
                n - \tau - z \le u \le \tau %\\ 
                %u \le w 
            %}
        } \binom{w}{u} 
        \sum_{
            %\substack{
                n - \tau - z \le v \le \tau %\\ 
                %v \le w - u 
            %}
        } 
        \binom{w - u}{v}
        %\binom{w}{u, v, w - u - v}
        (q - 2)^{w - u - v}$} \\ 
        & %\resizebox{\hsize}{!}{$\displaystyle
        \le \sum_{z = 0}^{n - w} 
        \binom{n - w}{z} 
        (q - 1)^{n - w - z}
        %\hspace{-4mm}
        \sum_{
            %\substack{
                n - \tau - z \le u \le \tau %\\ 
                %u \le w 
            %}
        } \binom{w}{u} 
        %\Vol_{q - 1}(\tau, w - u)
        (q - 1)^{w - u}%$} 
        \\ 
        &\resizebox{\hsize}{!}{$\displaystyle = \sum_{z = 0}^{n - w} 
        \binom{n - w}{z} 
        (q - 1)^{n - w - z} 
        [\Vol_{q}(\tau, w) - \Vol_{q}(n - \tau - z - 1, w)]$} \\ 
        &\le 
        q^{n - w} 
        [\Vol_{q}(\tau, w) - \Vol_{q}(w - \tau - 1, w)] 
    \end{align*}
    by \eqref{eq:intersection}. The conclusion now follows from \autoref{thm:union-lower-bound}. 
\end{proof}}

Note that \autoref{thm:union-lower-bound} is sharp when overlaps of three or more Hamming spheres are negligible, which is the case when $\tau$ is sufficiently small, and, in particular, expected to be the case for the smallest $\tau$ such that the union of Hamming spheres of radius $\tau$ covers most of the space. 
We \hide{estimated this bound for $\tau \in (d / 2, d)$ for several $[n, k, d]_{q}$ GRS codes and }observe numerically in \autoref{sec:taumax} that the radius $\tau_{\max}$ yielding the best lower bound is generally close %(and in many cases exactly equal) 
to $\tau_{\GS}$ defined in \eqref{eq:tauGS}. 
\hide{In the next section, we will show numerically that %... \todo{explain parameters and details.} 
%for all $[n, k]_{q}$ GRS codes, 
\begin{itemize}
    \item $\tau_{\GS} \approx \tau_{\max}$ for all $1 \le k < n$ with $(q, n) = (47, 46)$, 
    implying that Hamming spheres of radius $\tau_{\GS}$ cover a significant fraction of the ambient space,
    \item \autoref{conj:punctures} is satisfied for all $1 \le k < n$ with $(q, n) = (7, 6)$, $(11, 10)$, %; in particular, $\GRScover$ almost always succeeds before any puncturing using the $\GS$ list decoder
    and
    \item the average covering radius obtained via $\GRScover$ using the $\GS$ list decoder is very close to that obtained using a \emph{maximum-a-posteriori} (MAP) decoder for $(q, n) = (7, 6)$ and $1 \le k < n$.
\end{itemize}}

\subsection{Simulation Results and Comparison with Random Coding Bound}
\label{sec:simulation-grs}

\subsubsection{Fixed Length}

%\subsubsection{Average Number of Punctures}

\autoref{tab:punctures} lists the empirical average number of punctures for $\GRScover$ to succeed on an $[n, k]_{q}$ GRS code $\CC$ with $\GRSdecode$ a unique decoder (BW) and a list decoder (GS), %(\autoref{alg:GRS-cover} and \autoref{alg:GRS-cover-2}, respectively)
for $(q, n) \in \{(7, 6), (11, 10), (17, 16)\}$, and $1 \le k < n$. For each $k$, the average was computed over $500$ Monte Carlo simulations. 
It shows that the number of punctures required for \autoref{alg:GRS-cover} to succeed using BW is close to the worst-case covering radius of the code, but is much smaller when using the GS list decoder. 
In other words, while Hamming spheres of radius $< d / 2$ do not cover the space well, those of radius $\tau_{\GS}$ cover a large fraction of the space. 

\begin{table}[!htbp]
    \begin{minipage}[t]{0.32\linewidth}
        \centering
        \begin{tabular}[t]{ccc}
            \toprule 
            $k$ & BW & GS \\
            \midrule 
            1 & 3.836 & 0 \\
            %1 & 959/250 & 0 \\
            2 & 2.44 & 0.006 \\
            %2 & 61/25 & 3/500 \\
            3 & 1.654 & 0.08 \\
            %3 & 827/500 & 2/25 \\
            4 & 0.532 & 0.264 \\
            %4 & 133/250 & 33/125 \\
            5 & 0.868 & 0 \\
            %5 & 217/250 & 0 \\
            
            \bottomrule
    \end{tabular}
    \medskip
    
    $q = 7$, $n = 6$
    \end{minipage}%
    \begin{minipage}[t]{0.33\linewidth}
        \centering
        \begin{tabular}[t]{ccc}
            \toprule 
            $k$ & BW & GS \\
            \midrule 
            1 & 8.224 & 0 \\
            %1 & 1028/125 & 0 \\
            2 & 6.872 & 0.452 \\
            %2 & 859/125 & 113/250 \\
            3 & 5.654 & 0.366 \\
            %3 & 2827/500 & 183/500 \\
            4 & 4.34 & 0.378 \\
            %4 & 217/50 & 189/500 \\
            5 & 3.022 & 0.7 \\
            %5 & 1511/500 & 7/10 \\
            6 & 1.864 & 0.744 \\ 
            %6 & 233/125 & 93/125 \\ 
            7 & 1.428 & 0.036 \\
            %7 & 357/250 & 9/250 \\
            8 & 0.376 & 0.16 \\
            %8 & 47/12 & 4/25 \\
            9 & 0.922 & 0 \\
            %9 & 461/500 & 0 \\
            \bottomrule
        \end{tabular}
        \medskip
        
        $q = 11$, $n = 10$
    \end{minipage}
    \begin{minipage}[t]{0.34\linewidth}
        \centering
        \begin{tabular}[t]{ccc}
            \toprule 
            $k$ & BW & GS \\
            \midrule 
            1 & 14.573 & 0 \\
            %1 & 1028/125 & 0 \\
            2 & 13.293 & 0.69 \\
            %2 & 859/125 & 113/250 \\
            3 & 12.253 & 1.65 \\
            %3 & 2827/500 & 183/500 \\
            4 & 10.927 & 0.95 \\
            %4 & 217/50 & 189/500 \\
            5 & 9.847 & 2.097 \\
            %5 & 1511/500 & 7/10 \\
            6 & 8.493 & 0.837 \\ 
            %6 & 233/125 & 93/125 \\ 
            7 & 7.22 & 0.61 \\
            %7 & 357/250 & 9/250 \\
            8 & 6.18 & 1.027 \\
            %8 & 47/12 & 4/25 \\
            9 & 4.843 & 0.76 \\
            %9 & 461/500 & 0 \\
            10 & 3.66 & 1.173 \\ 
            11 & 2.717 & 0.2 \\ 
            12 & 1.587 & 0.627 \\ 
            13 & 1.187 & 0.007 \\ 
            14 & 0.18 & 0.073 \\ 
            15 & 0.947 & 0 \\ 
            \bottomrule
        \end{tabular}
        \medskip
        
        $q = 17$, $n = 16$
    \end{minipage}
    \caption{Average number of punctures before $\GRScover$ succeeds using unique (BW) and list (GS) decoders} 
    \label{tab:punctures}
\end{table}

%\subsubsection{Average Covering Radius}
%\label{sec:average-covering-radius}

%By \autoref{thm:grs-cover}, given an arbitrary vector $y \in \fqn$ and an $[n, k, d]_{q}$ GRS code $\CC$, $\GRScover(\CC, y)$ returns a codeword $c \in \CC$ with $\dH(y, c) \le d - 1$. 
We also estimated the average covering radius $\rhohbar(\CC)$ of $\CC$ by computing the average value of $\hide{\min_{c \in \CC} }\dH(y, c)$\hide{covering radius $\rhohbar(\CC)$} over $500$ simulations for $(q, n) = (7, 6)$ and $1 \le k < n$ with $\GRSdecode$ a unique decoder (BW), a list decoder (GS), and a MAP decoder, where $c$ is the output codeword of $\GRScover(\CC, y)$. The results are compared in \autoref{fig:covering-radius} with the worst-case covering radius $\rhoh(\CC) = n - k$ and the random coding bound \eqref{eq:random-hamming} given by \autoref{thm:random-hamming}. 
It can be seen that \autoref{alg:GRS-cover} using GS list decoder improves over random codes for all $k$ (despite the fact that GRS codes are not ideal for covering) and the resulting average covering radius is close to that obtained using the MAP decoder. 
In fact, \autoref{alg:GRS-cover} using BW also improves over random codes for sufficiently large $k$. 
The decoding radius parameter $\tau$ of GS in each case was taken to be its maximum decoding radius $\tau_{\GS}$ defined in \eqref{eq:tauGS}. 
Note that the MAP decoder provides the optimal covering solution with an exponential complexity.

\begin{figure}[!htbp]
    \centering
    \begin{tikzpicture}
    
    \definecolor{black}{rgb}{0, 0.0, 0}%
    \definecolor{blue}{rgb}{0, 0, 1}%
    \definecolor{red}{rgb}{1, 0, 0}
    \definecolor{green}{rgb}{0, 0.5, 0}
    
    \begin{axis}[
        font=\small,
        width=0.8\linewidth,%7.5cm,
        height=0.6\linewidth,%5.5cm,
        scale only axis,
        xmin=0.4,
        xmax=5.7,
        xtick={0,1,2,3,...,5,6},
        xlabel={Dimension ($k$)},  
        % xmajorgrids,
        ymin=0.3,
        ymax=5.6,
        ytick={0,1,2,3,...,5,6},
        ylabel={Average Covering Radius},
        ylabel near ticks,
        minor tick num=4,
        % ymajorgrids,
        grid=both,
        grid style={
            line width=.1pt, 
            draw=gray!7
        },
        major grid style={
            line width=.2pt,
            draw=gray!50,
            %dashed
        },
        legend style={
            font=\scriptsize, 
            at={(1,1)},
            anchor=north east, 
            draw=black,
            fill=white,
            legend cell align=left, 
            cells={align=left},
            row sep=0.1pt
        }
    ]
    
    %\draw[black] (0,6) -- (6,0);
    \addplot[
        color=black, 
        solid, 
        solid
    ] table[
        y={create col/linear regression={y=Y}}
    ] {
        X Y
        0 6
        6 0
    };
    \addlegendentry{Upper Bound \\ ($n - k$)
    };
    
    \addplot[
        color=black, 
        solid, 
        solid, 
        mark=*, 
        mark size=1pt,
        mark options={solid}
    ] table[row sep=crcr]{%
        1 3.90549325\\ 
        2 2.89474223\\ 
        3 2.08001020\\ 
        4 1.44971291\\ 
        5 0.871936580\\
    };
    \addlegendentry{Random Codes \\ 
    (\autoref{thm:random-hamming})};
    
    \addplot[
        color=black, 
        solid, 
        dashed, 
        line width=0.7pt,
        mark=+, 
        mark size=2.3pt,
        mark options={solid}
    ] table[row sep=crcr]{%
        1 4.064\\ 
        2 \fpeval{1533/500}\\ 
        3 \fpeval{557/250}\\ 
        4 \fpeval{309/250}\\ 
        5 \fpeval{421/500}\\
    };
    \addlegendentry{\autoref{alg:GRS-cover} (BW)};
    
    \addplot[
        color=black, 
        solid, 
        dash dot, 
        line width=0.7pt,
        mark=o, 
        mark options={solid}
    ] table[row sep=crcr]{%
        1 3.714\\ 
        2 \fpeval{1361/500}\\ 
        3 \fpeval{963/500}\\ 
        4 \fpeval{307/250}\\ 
        5 \fpeval{421/500}\\
    };
    \addlegendentry{\autoref{alg:GRS-cover} (GS)};
    
    \addplot[
        color=black, 
        solid, 
        solid, 
        mark=x, 
        mark size=2.3pt,
        mark options={solid, line width=0.7pt}
    ] table[row sep=crcr]{%
        1 3.718\\ 
        2 \fpeval{273/100}\\ 
        3 \fpeval{242/125}\\ 
        4 \fpeval{123/100}\\ 
        5 \fpeval{429/500}\\
    };
    \addlegendentry{\autoref{alg:GRS-cover} \\ (MAP)};
    
    \end{axis}
    
    \end{tikzpicture}
    \caption{Comparison of simulated average covering radii of a $[6, k]_{7}$ GRS code and random codes for $1 \le k < 6$}
    \label{fig:covering-radius}
\end{figure}
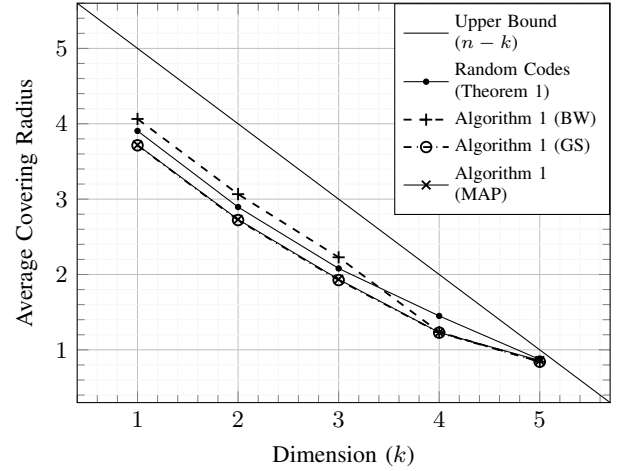

\subsubsection{Fixed Rate}

\autoref{fig:covering-list} shows the average covering radius of a $[n, k]_{q}$ GRS code $\CC$, where $n = q - 1$ and $k = \lfloor n R \rfloor$, for $5 \le q < 32$ given by \autoref{alg:GRS-cover} with $\GRSdecode = \GS$ \hide{as $\GRSdecode$ the Guruswami--Sudan list decoder }as well as the average number of punctures required by the algorithm to succeed, both computed over $300$ Monte Carlo simulations for $R \in \{1 / 3, 1 / 2, 2 / 3\}$. 
The list size parameter of $\GS$ in each case was taken to be the length of the code. 
This suggests that the average number of punctures and the average covering radius both increase with the covering radius $\rhoh(\CC) = n - k$ when the rate $R$ of the code is fixed and the decoding radius of $\GS$ is sub-optimal. 
%\todo{Check this!}

\begin{figure*}[!htbp]
    \centering
    \subfigure[$R = 1 / 3$]{
        \begin{tikzpicture}
        
        \definecolor{black}{rgb}{0, 0.0, 0}%
        \definecolor{blue}{rgb}{0, 0, 1}%
        \definecolor{red}{rgb}{1, 0, 0}
        \definecolor{green}{rgb}{0, 0.5, 0}
        
        \begin{axis}[
            font=\footnotesize,
            width=0.4\linewidth,%7cm,
            height=0.3\linewidth,%5cm,
            scale only axis,
            xmin=0,
            xmax=32,
            xtick={0,5,10,15,...,35},
            xlabel={Length},  
            % xmajorgrids,
            ymin=-2,
            ymax=23,
            ytick={-5,0,5,10,15,20,25},
            minor tick num=4,
            ylabel near ticks,
            % ymajorgrids,
            grid=both,
            grid style={
                line width=.1pt, 
                draw=gray!7
            },
            major grid style={
                line width=.2pt,
                draw=gray!50,
                %dashed
            },
            %grid style={
            %    line width=.1pt, 
            %    draw=gray!10
            %},
            %major grid style={
            %    line width=.2pt,
            %    draw=gray!50,
            %    dashed
            %},
            legend style={
                font=\scriptsize, 
                at={(0,1.0)},
                anchor=north west, 
                fill=white,
                legend cell align=left
            },
            legend columns = 1
        ]
        
        \addplot [
            color=black, 
            solid, 
            %dashed, 
            mark=triangle, 
            mark size=3pt,
            mark options={solid}
        ]
            table[row sep=crcr]{%
            4 3\\
            6 4\\
            7 5\\
            8 6\\
            10 7\\
            12 8\\
            15 10\\
            16 11\\
            18 12\\
            22 15\\
            26 18\\
            28 19\\
            30 20\\
        };
        \addlegendentry{Worst-Case Covering Radius};
        
        \addplot [
            color=black, 
            solid, 
            %dashed, 
            mark=o, 
            mark options={solid}
        ]
            table[row sep=crcr]{%
            4 3\\
            6 \fpeval{277/100}\\
            7 \fpeval{217/60}\\
            8 \fpeval{22/5}\\
            10 \fpeval{751/150}\\
            12 \fpeval{579/100}\\
            15 \fpeval{2191/300}\\
            16 \fpeval{1253/150}\\
            18 \fpeval{688/75}\\
            22 \fpeval{1209/100}\\
            26 \fpeval{1138/75}\\
            28 \fpeval{2479/150}\\
            30 \fpeval{1733/100}\\
        };
        \addlegendentry{Average Covering Radius};
        
        \addplot [
            color=black, 
            solid, 
            %dashed, 
            mark=square, 
            mark options={solid}
        ]
            table[row sep=crcr]{%
            4 3\\
            6 0\\
            7 \fpeval{1/75}\\
            8 \fpeval{137/300}\\
            10 \fpeval{37/100}\\
            12 \fpeval{283/300}\\
            15 \fpeval{83/50}\\
            16 \fpeval{29/10}\\
            18 \fpeval{296/75}\\
            22 \fpeval{351/50}\\
            26 \fpeval{3269/300}\\
            28 \fpeval{4217/300}\\
            30 \fpeval{283/20}\\
        };
        \addlegendentry{Average Number of Punctures};
        
        \end{axis}
        
        \end{tikzpicture}
    } \quad 
    \subfigure[$R = 2 / 3$]{
        \begin{tikzpicture}
        
        \definecolor{black}{rgb}{0, 0.0, 0}%
        \definecolor{blue}{rgb}{0, 0, 1}%
        \definecolor{red}{rgb}{1, 0, 0}
        \definecolor{green}{rgb}{0, 0.5, 0}
        
        \begin{axis}[
            font=\footnotesize,
            width=0.4\linewidth,%7cm,
            height=0.3\linewidth,%5cm,
            scale only axis,
            xmin=0,
            xmax=32,
            xtick={0,5,10,15,...,35},
            xlabel={Length},  
            % xmajorgrids,
            ymin=-2,
            ymax=23,
            ytick={-5,0,5,10,15,20,25},
            minor tick num=4,
            ylabel near ticks,
            % ymajorgrids,
            grid=both,
            grid style={
                line width=.1pt, 
                draw=gray!7
            },
            major grid style={
                line width=.2pt,
                draw=gray!50,
                %dashed
            },
            %grid style={
            %    line width=.1pt, 
            %    draw=gray!10
            %},
            %major grid style={
            %    line width=.2pt,
            %    draw=gray!50,
            %    dashed
            %},
            legend style={
                font=\scriptsize, 
                at={(0,1.0)},
                anchor=north west, 
                fill=white,
                legend cell align=left
            },
            legend columns = 1
        ]
        
        \addplot [
            color=black, 
            solid, 
            %dashed, 
            mark=triangle, 
            mark size=3pt,
            mark options={solid}
        ]
          table[row sep=crcr]{%
          4 2\\
          6 2\\
          7 3\\
          8 3\\
          10 4\\
          12 4\\
          15 5\\
          16 6\\
          18 6\\
          22 8\\
          24 8\\
          26 9\\
          28 10\\
          30 10\\
        };
        \addlegendentry{Worst-Case Covering Radius};
        
        \addplot [
            color=black, 
            solid, 
            %dashed, 
            mark=o, 
            mark options={solid}
        ]
          table[row sep=crcr]{%
          4 1.25667\\
          6 1.20667\\
          7 1.96333\\
          8 \fpeval{293/150}\\
          10 \fpeval{269/100}\\
          12 \fpeval{199/75}\\
          15 \fpeval{463/150}\\
          16 \fpeval{599/150}\\
          18 \fpeval{1187/300}\\
          22 \fpeval{167/30}\\
          24 \fpeval{412/75}\\
          26 \fpeval{639/100}\\
          28 \fpeval{419/60}\\
          30 \fpeval{413/60}\\
        };
        \addlegendentry{Average Covering Radius};
        
        \addplot [
            color=black, 
            solid, 
            %dashed, 
            mark=square, 
            mark options={solid}
        ]
          table[row sep=crcr]{%
          4 0.28333\\
          6 0.22\\
          7 0.11333\\
          8 0.04667\\
          10 \fpeval{18/25}\\
          12 \fpeval{2/3}\\
          15 \fpeval{11/50}\\
          16 \fpeval{119/100}\\
          18 \fpeval{109/100}\\
          22 \fpeval{703/300}\\
          24 \fpeval{43/20}\\
          26 \fpeval{74/25}\\
          28 \fpeval{83/30}\\
          30 \fpeval{191/75}\\
        };
        \addlegendentry{Average Number of Punctures};
        
        \end{axis}
        
        \end{tikzpicture}
    } \quad 
    \subfigure[$R = 1 / 2$]{
        \begin{tikzpicture}
        
        \definecolor{black}{rgb}{0, 0.0, 0}%
        \definecolor{blue}{rgb}{0, 0, 1}%
        \definecolor{red}{rgb}{1, 0, 0}
        \definecolor{green}{rgb}{0, 0.5, 0}
        
        \begin{axis}[
            font=\footnotesize,
            width=0.4\linewidth,%7cm,
            height=0.3\linewidth,%5cm,
            scale only axis,
            xmin=0,
            xmax=32,
            xtick={0,5,10,15,...,35},
            xlabel={Length},  
            % xmajorgrids,
            ymin=-2,
            ymax=23,
            ytick={-5,0,5,10,15,20,25},
            minor tick num=4,
            ylabel near ticks,
            % ymajorgrids,
            grid=both,
            grid style={
                line width=.1pt, 
                draw=gray!7
            },
            major grid style={
                line width=.2pt,
                draw=gray!50,
                %dashed
            },
            %grid style={
            %    line width=.1pt, 
            %    draw=gray!10
            %},
            %major grid style={
            %    line width=.2pt,
            %    draw=gray!50,
            %    dashed
            %},
            %axis lines=middle,
            legend style={
                font=\scriptsize, 
                at={(0,1.0)},
                anchor=north west, 
                fill=white,
                legend cell align=left
            },
            legend columns=1
        ]
        
        \addplot [
            color=black, 
            solid, 
            %dashed, 
            mark=triangle, 
            mark size=3pt,
            mark options = {solid}
        ]
          table[row sep=crcr]{%
          4 2\\
          6 3\\
          7 4\\
          8 4\\
          10 5\\
          12 6\\
          15 8\\
          16 8\\
          18 9\\
          22 11\\
          24 12\\
          26 13\\
          28 14\\
          30 15\\
        };
        \addlegendentry{Worst-Case Covering Radius};
        
        \addplot [
            color=black, 
            solid, 
            %dashed, 
            mark=o, 
            mark options = {solid}
        ]
          table[row sep=crcr]{%
          4 1.29\\
          6 1.89\\
          7 2.74333\\
          8 2.71\\
          10 3.31333\\
          12 3.91667\\
          15 \fpeval{829/150}\\
          16 \fpeval{553/100}\\
          18 \fpeval{487/75}\\
          22 \fpeval{1171/150}\\
          24 \fpeval{55/6}\\
          26 \fpeval{493/50}\\
          28 \fpeval{547/50}\\
          30 \fpeval{908/75}\\
        };
        \addlegendentry{Average Covering Radius};
        
        \addplot [
            color=black, 
            solid, 
            %dashed, 
            mark=square, 
            mark options = {solid}
        ]
          table[row sep=crcr]{%
          4 0.34667\\
          6 0.04\\
          7 0\\
          8 \fpeval{47/60}\\
          10 \fpeval{18/25}\\
          12 \fpeval{9/100}\\
          15 \fpeval{343/300}\\
          16 \fpeval{251/150}\\
          18 \fpeval{163/60}\\
          22 \fpeval{202/75}\\
          24 \fpeval{51/10}\\
          26 \fpeval{53/10}\\
          28 \fpeval{496/75}\\
          30 \fpeval{1273/150}\\
        };
        \addlegendentry{Average Number of Punctures};
        
        \end{axis}
        
        \end{tikzpicture}
    }
    \caption{Simulated average covering radius of a $[q - 1, \lfloor (q - 1) R \rfloor]_{q}$ GRS code given by $\GRScover$ with $\GRSdecode = \GS$ and number of punctures until it succeeds for $5 \le q < 32$}
    \label{fig:covering-list}
\end{figure*}
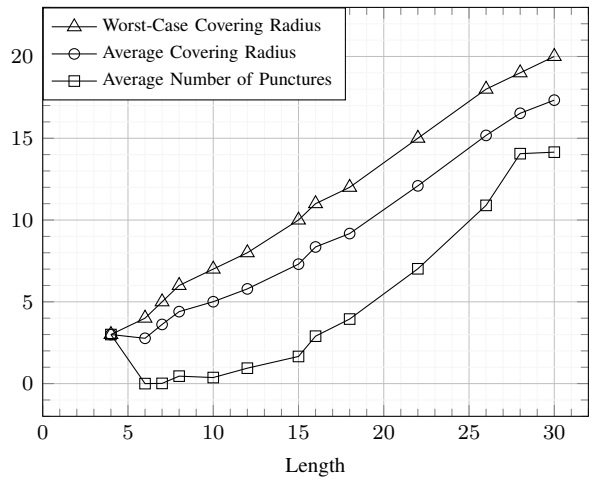
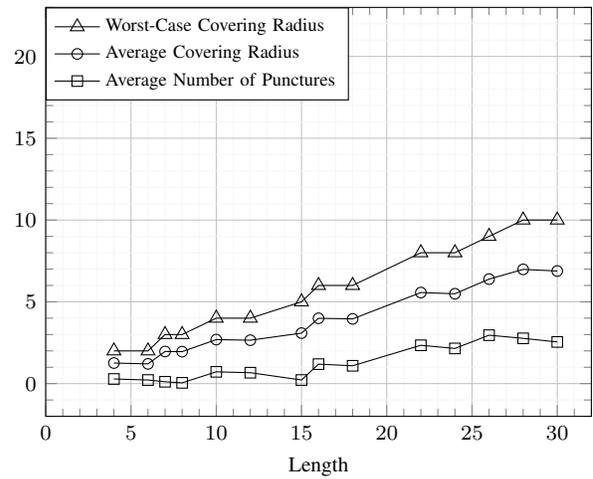
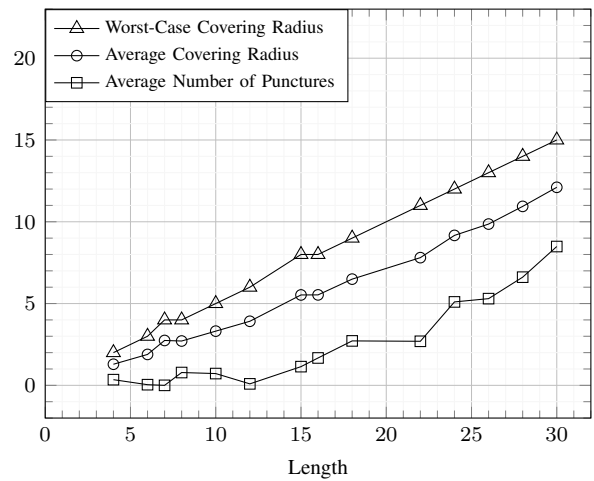

\subsubsection{\texorpdfstring{$\tau_{\GS}$}{tauGS} vs. \texorpdfstring{$\tau_{\max}$}{taumax}}
\label{sec:taumax}

\autoref{fig:coverage} shows the fraction of space covered by Hamming spheres of radius close to $\tau_{\GS}$, and that $\tau_{\GS} = \tau_{\max}$ for $(q, n, k) = (17, 14, 2)$. 
\autoref{fig:comparison} shows a comparison of $\tau_{\GS}$ with $\tau_{\max}$ for $(q, n) = (47, 46)$ and $1 \le k < n$. 
Both support our hypothesis that Hamming spheres of radius $\tau_{\GS}$ cover almost all of the space. 

\begin{figure}[!htbp]
    \centering
    \includegraphics[width=\linewidth]{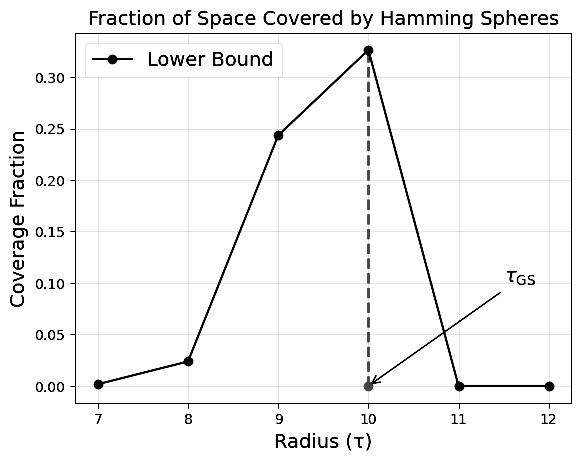}
    \caption{Lower bound on fraction of space covered by Hamming spheres of radius $\tau \in (d / 2, d)$ for a $[14, 2, d]_{17}$ GRS code
    }
    \label{fig:coverage}
\end{figure}

\begin{figure}[!htbp]
    \centering
    \includegraphics[width=\linewidth]{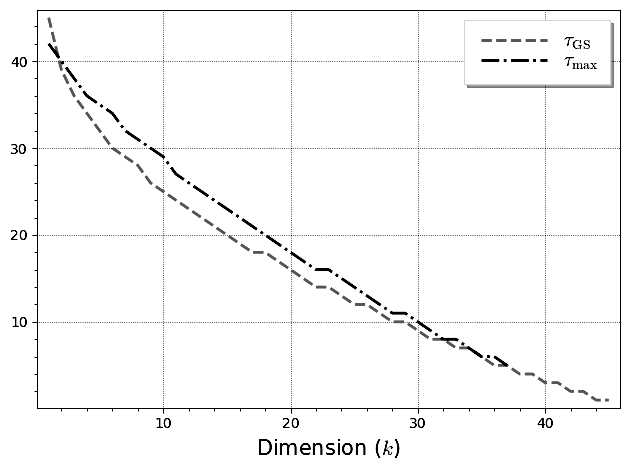}
    \caption{$\tau_{\max}$ vs. $\tau_{\GS}$ for $q = 47, n = 46$ and $1 \le k \le 37$}
    \label{fig:comparison}
\end{figure}

\section{Character-Reed--Solomon Code}
\label{sec:CRS-code}

%In addition to the RS message space $\CF(k - 1, q)$, we define $\CF(k - 1, q)$ to be the set of all polynomials $f(X) = \sum_{j} f_{j} X^{j} \in \CF(k, q)$ with $f_{j p} = 0$ for all integers $j \ge 0$, 
%and put $\CF(k - 1, q)' := \{f(X) / X: f \in \CF(k - 1, q)\}$. %$f \in \CF(k - 1, q)$ with $f_{j p - 1} = 0$, for all integers $j \ge 1$. 
%These will serve as message spaces for \emph{character--polynomial codes}~\cite{Hessam22, Hessam21} as defined below. 

Recall from \autoref{def:cp} that the message space $\fpkq$ of CP codes consists of polynomials $f \in \CF(k, q)$ with $f_{j p} = 0$ for all integers $j \ge 0$. 
This condition was used to establish minimum distance properties of CP codes~\cite{Hessam22}. Since our focus in this work is on covering rather than packing, we utilize the full message space $\CF(k, q)$ and generalize the definition of CP codes below to what we call \emph{character-Reed--Solomon (CRS) codes}. 

\begin{definition}[CRS Code]
    \label{def:crs}
    Fix $k \le n < q$, a non-trivial character $\chi_{\beta}$ of $\fq$, and distinct evaluation points $\alpha_{1}, \dots, \alpha_{n} \in \fq$. Then, the encoding of $f \in \CF(k, q)$ in the CRS code $\CRS := \CRS_{n, \beta}(\CF(k, q)) \subseteq G_{1, n}(\BC)$ is given by:
    \begin{align}
        \label{eq:crs}
        \CRS(f) 
        &:= (\chi_{\beta}(f(\alpha_{1})), \dots, \chi_{\beta}(f(\alpha_{n}))), 
    \end{align}
    where we identify $\CRS(f)$ with the one-dimensional subspace $\langle \CRS(f) \rangle$. 
\end{definition}

We now derive some theoretical properties of CRS codes, including bounds on the code size. 

\subsection{Theoretical Properties and Bounds on Code Size}

Note that 
$\CRS_{n, {\beta}}(\CF(k, q))$ is essentially the concatenation of a %subcode of 
RS code with the %, where a certain subset of the coefficients of the message polynomial is set to zeros, followed by the 
character map $\chi_{\beta}: \fq \to U$, and therefore can be regarded as a RS-coded modulation. 
In particular, $\CRS_{n, \beta}(\CF(k, q))$ prior to concatenation by $\chi_{\beta}$ is simply expressed as $\RS_{n}(\CF(k, q))$ using \eqref{eq:rs}. 
Hence, 
\begin{align}
    \label{eq:crs-size}
    |\CRS| 
    &\le |\RS| = |\CF(k, q)| 
    = q^{k}.
\end{align}
Equality holds if and only if the encoding map \eqref{eq:crs} is injective. 
To see when this is the case, observe that $\CRS(f) = \CRS(g)$ for $f, g \in \CF(k, q)$ if and only if  $ \chi_{\beta}(f(\alpha_{j})) = \chi_{\beta}(g(\alpha_{j}))$ for $j \in \{1, \dots, n\}$, which holds if and only if $\Tr(\beta h(\alpha_{j})) = 0$ for each $j$ by \eqref{eq:character}, 
\hide{\begin{align*}
    & \CRS(f) = \CRS(g) \\
    \iff & \chi_{\beta}(f(\alpha_{j})) 
    = \chi_{\beta}(g(\alpha_{j})) 
    \ \forall j \\ 
    \iff 
    & \exp\left(\frac{2 \pi i \Tr(\beta f(\alpha_{j}))}{p}\right) 
    = \exp\left(\frac{2 \pi i \Tr(\beta g(\alpha_{j}))}{p}\right) 
    \ \forall j \\ 
    \iff 
    & \exp\left(\frac{2 \pi i \Tr(\beta h(\alpha_{j}))}{p}\right) = 1 
    \ \forall j \\ 
    \iff 
    & \Tr(\beta h(\alpha_{j})) = 0 
    \ \forall j,
    %\label{eq:trace}
\end{align*} 
}where $h = f - g \in \CF(k, q)$. 
The following lemma characterizes the zeros of $\Tr$. 

\begin{lemma}
    \label{lem:trace}
    Let $\alpha \in \fq$. Then, $\Tr(\alpha) = 0$ if and only if 
    \begin{align}
        \label{eq:trace}
        \alpha 
        &= \beta^{p} - \beta,
    \end{align}
    for some $\beta \in \fq$. 
    If $\beta = \beta_{0}$ is a solution to \eqref{eq:trace}, then all solutions are given by $\beta_{\lambda} = \beta_{0} + \lambda$ for $\lambda \in \fp$. 
\end{lemma}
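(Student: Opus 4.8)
The statement is the additive form of Hilbert's Theorem~90 (equivalently, the fact that for a cyclic extension the kernel of the trace coincides with the image of the Artin--Schreier operator), specialized here to $\fqm / \fp$, and the plan is to prove it by a counting argument on $\fp$-linear maps. Write $q = p^{m}$ and let $\varphi \colon \fq \to \fq$ be the Artin--Schreier operator $\varphi(\beta) := \beta^{p} - \beta$. Since the Frobenius $\beta \mapsto \beta^{p}$ is additive in characteristic $p$ and fixes $\fp$ pointwise, $\varphi$ is $\fp$-linear. Under this notation, \eqref{eq:trace} is exactly the assertion $\ker \Tr = \Image \varphi$, and the description of the solution set will follow at once from $\fp$-linearity of $\varphi$.

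First I would compute $\ker \varphi$: $\varphi(\beta) = 0$ means $\beta^{p} = \beta$, i.e.\ $\beta$ is a root of the separable polynomial $X^{p} - X$, whose root set is precisely $\fp$; hence $\ker \varphi = \fp$ and, $\varphi$ being an $\fp$-linear map, $|\Image \varphi| = q/p$. Next I would establish the inclusion $\Image \varphi \subseteq \ker \Tr$: using $\Tr(\alpha) = \sum_{i=0}^{m-1} \alpha^{p^{i}}$, raising to the $p$-th power cyclically permutes the summands, so $\Tr(\beta^{p}) = \Tr(\beta)$ and therefore $\Tr(\varphi(\beta)) = \Tr(\beta^{p}) - \Tr(\beta) = 0$ for every $\beta \in \fq$. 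Finally I would invoke surjectivity of $\Tr \colon \fq \to \fp$ to get $|\ker \Tr| = q/p$. Combining these, $\Image \varphi$ and $\ker \Tr$ are finite sets of equal cardinality $q/p$ with the former contained in the latter, hence they coincide, which is \eqref{eq:trace}.

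For the last sentence of the lemma, suppose $\beta_{0}$ satisfies $\varphi(\beta_{0}) = \alpha$. Then for any $\beta \in \fq$ we have $\varphi(\beta) = \alpha$ iff $\varphi(\beta - \beta_{0}) = 0$ iff $\beta - \beta_{0} \in \ker \varphi = \fp$, i.e.\ $\beta = \beta_{0} + \lambda$ for some $\lambda \in \fp$; these $p$ solutions are pairwise distinct and exhaust all solutions.

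There is essentially no obstacle here; the only point that warrants more than a one-word citation is the surjectivity (equivalently, non-triviality) of $\Tr$, which I would justify by noting that the polynomial $\sum_{i=0}^{m-1} X^{p^{i}}$ has degree $p^{m-1} < q$ and so cannot vanish identically on $\fq$, making $\Tr$ a nonzero $\fp$-linear functional onto $\fp$. Everything else is the bookkeeping of matching the cardinalities $|\Image \varphi| = |\ker \Tr| = q/p$.
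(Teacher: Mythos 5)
Your proof is correct and takes essentially the same approach as the paper: both identify $\ker\Tr$ with the image of the $\fp$-linear Artin--Schreier operator $\phi(\beta)=\beta^p-\beta$, compute $\ker\phi=\fp$ and hence $|\phi(\fq)|=q/p$, and then match cardinalities with $\ker\Tr$. You spell out two steps the paper leaves implicit (the inclusion $\phi(\fq)\subseteq\ker\Tr$ via $\Tr(\beta^p)=\Tr(\beta)$, and the non-triviality of $\Tr$ via the degree bound), which is a welcome clarification but not a different argument.
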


\begin{proof}
    Consider the map: 
    \begin{align}
        \label{eq:phi-def}
        \phi: \fq \to \fq, 
        \quad \beta \mapsto \beta^{p} - \beta. 
    \end{align}
    Note that $\phi$ is $\fp$-linear with $\ker(\phi) = \fp$ by Fermat's little theorem. % https://en.wikipedia.org/wiki/Frobenius_endomorphism
    Hence, $|\phi(\fq)| = q / p$ by the first isomorphism theorem. But $\fq \supsetneq \ker(\Tr) \supseteq \phi(\fq)$, whence $\ker(\Tr) = \phi(\fq)$. 
    Thus, if $\alpha \in \ker(\Tr)$, then $\alpha = \phi(\beta_{\lambda})$ for some $\lambda \in \fp$. 
\end{proof}

Therefore, 
%\eqref{eq:trace} implies % https://math.stackexchange.com/questions/794290/roots-of-artin-schreier-equation
$\beta h(\alpha_{j}) =\beta_{j}^{p} - \beta_{j}$ 
for some $\beta_{j} \in \fq$\hide{ by \eqref{eq:trace} and \autoref{lem:trace}}, for each $j$.  Interpolating $u \in {\CF(n, q)}$ over $\{(\alpha_{j}, \beta_{j}): 1 \le j \le n\}$ implies that $u^{p} - u - \beta h$ 
vanishes on the set $\{\alpha_{1}, \dots, \alpha_{n}\}$.
Furthermore, by \autoref{lem:trace}, there are exactly $p^{n}$ such $u$, each corresponding to a realization of $(\beta_{1}, \dots, \beta_{n})$

\hide{(Equivalently, %letting $w(X) := (X - \alpha_{1}) \cdots (X - \alpha_{n})$, 
we have an isomorphism 
\begin{align*}
    \phi: \fqx / \bigg(\prod_{i = 1}^{n} (X - \alpha_{i})\bigg) &\to \prod_{i = 1}^{n} \fqx / (X - \alpha_{i}) 
    \cong \fqn \\
    P &\mapsto (P(\alpha_{1}), \dots, P(\alpha_{n})) 
\end{align*}
by the Chinese remainder theorem, since the $\alpha_{i}$ are distinct. Hence, there exists a unique polynomial $P \in \CF(n, q)$ such that $\phi(u) = (u(\alpha_{1}), \dots, u(\alpha_{n}))$, where $u: \alpha \mapsto \alpha^{p} - \alpha$.)}

We thus obtain the following result. 

\begin{theorem}
    \label{thm:injective}
    Let $\CC = \CRS_{n, {\beta}}(\CF(k, q))$ with evaluation points $\alpha_{1}, \dots, \alpha_{n}$. 
    Then, $\CC(f) = \CC(g)$ \hide{for $f, g \in \CF(k, q)$ }if and only if 
    \begin{align}
        \label{eq:thm1}
        \beta & (f - g) 
        \equiv u^{p} - u\hide{$ is a multiple of $w$, where $w(X) = }\pmod{(X - \alpha_{1}) \cdots (X - \alpha_{n})}, 
    \end{align}
    for some $u \in \CF(n, q)$. 
    In particular, given the $\fp$-linear map 
    \begin{align}
        \label{eq:T-def} 
        T: \CF(k, q) \to \fp^{n}, \quad 
        f \mapsto (\Tr(\beta f(\alpha_{j}))\hide{, \dots, \Tr(\beta f(\alpha_{n}))})_{j = 1}^{n}, 
    \end{align}
    we have 
    \begin{align}
        \label{eq:CRS-size}
        |\CC| 
        &= |\Image(T)| 
        = p^{\rank(T)}.
    \end{align} 
    Furthermore, 
    if there exists a $u \in \CF(n, q)$ satisfying \eqref{eq:thm1}, then there are exactly $p^{n}$ such $u$.  
    %at least $p^{n - 1}$ of which have degree $\ge n / p$. 
    Finally, if \eqref{eq:thm1} is an equality\hide{ $\deg(u) < k / p$ for some $u$}, i.e., 
    \begin{align}
        \label{eq:u}
        \beta (f - g) 
        &= u^{p} - u 
    \end{align}
    for some $u \in \CF(k / p, q)$, 
    then \eqref{eq:u} holds for exactly $p$ such $u$. Otherwise, $\deg(u) \ge n / p$ for all $u$ satisfying \eqref{eq:thm1}. 
\end{theorem}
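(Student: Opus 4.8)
The plan is to verify the four assertions in turn, building on the reduction established just before the statement: $\CRS(f)=\CRS(g)$ if and only if $\Tr(\beta h(\alpha_j))=0$ for every $j\in\{1,\dots,n\}$, where $h:=f-g$. Combining this with \autoref{lem:trace}, the vanishing of all these traces is equivalent to the existence, for each $j$, of some $\beta_j\in\fq$ with $\beta h(\alpha_j)=\beta_j^{p}-\beta_j$. Since $\alpha_1,\dots,\alpha_n$ are distinct, Lagrange interpolation yields a unique $u\in\CF(n,q)$ with $u(\alpha_j)=\beta_j$ for all $j$, and because evaluation at a point is a ring homomorphism, $(u^{p}-u)(\alpha_j)=u(\alpha_j)^{p}-u(\alpha_j)=\beta h(\alpha_j)$; hence the product $w:=(X-\alpha_1)\cdots(X-\alpha_n)$ divides $\beta h-(u^{p}-u)$, which is exactly \eqref{eq:thm1}, and the converse is immediate from \autoref{lem:trace}. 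For the size formula, I would note that the $\fp$-linear map $T$ in \eqref{eq:T-def} satisfies $T(f)=T(g)\iff\CRS(f)=\CRS(g)$, so the encoding map and $T$ have the same fibres; thus $|\CC|=|\Image(T)|$, and $|\Image(T)|=p^{\rank(T)}$ because $\Image(T)$ is an $\fp$-subspace of $\fp^{n}$.

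For the count of solutions $u$, I would fix one solution $u_0\in\CF(n,q)$ of \eqref{eq:thm1} and observe that $u\in\CF(n,q)$ is another solution precisely when $u^{p}-u\equiv u_0^{p}-u_0\pmod{w}$. Writing $v:=u-u_0$ and using the Frobenius identity $(u-u_0)^{p}=u^{p}-u_0^{p}$ in characteristic $p$, this becomes $v^{p}-v\equiv0\pmod{w}$; since $w$ is squarefree with roots $\alpha_1,\dots,\alpha_n$, it is equivalent to $v(\alpha_j)^{p}=v(\alpha_j)$, i.e.\ $v(\alpha_j)\in\fp$, for all $j$. As $\deg v<n$, interpolation gives a bijection between such $v$ and the tuples $(v(\alpha_1),\dots,v(\alpha_n))\in\fp^{n}$, so there are exactly $p^{n}$ of them, hence exactly $p^{n}$ choices of $u$.

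For the final dichotomy, suppose some $u$ satisfying \eqref{eq:thm1} has $\deg u<n/p$. Then $\deg(u^{p}-u)<n=\deg w$ and $\deg(\beta h)=\deg h<k\le n$, so two polynomials of degree less than $\deg w$ are congruent modulo $w$ and must therefore be equal: $\beta h=u^{p}-u$. Comparing degrees (for $\deg u\ge1$, $\deg(u^{p}-u)=p\deg u<k$, while $\deg u\le0$ only arises when $h$ is constant) forces $\deg u<k/p$, i.e.\ $u\in\CF(k/p,q)$; this proves the ``otherwise'' clause in contrapositive form. When in addition $\beta h=u^{p}-u$ with $u\in\CF(k/p,q)$, any other such $u'$ satisfies $(u'-u)^{p}=u'-u$ in $\fqx$; since a nonconstant $v$ has $\deg v^{p}=p\deg v>\deg v$, the difference $u'-u$ must be a constant $c$ with $c^{p}=c$, i.e.\ $c\in\fp$, and conversely every $u+c$ with $c\in\fp$ is a solution lying in $\CF(k/p,q)$, giving exactly $p$ of them.

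I expect the most delicate point to be the degree bookkeeping in the last paragraph: specifically, making the passage from the congruence \eqref{eq:thm1} to an honest equality rigorous, and cleanly handling the low-degree edge cases (where $\deg u=0$, or where $p\nmid k$ so that ``$\deg u<k/p$'' must be read as ``$\deg u\le\lceil k/p\rceil-1$''). Everything else is a routine combination of Lagrange interpolation, the characteristic-$p$ identity $(a-b)^{p}=a^{p}-b^{p}$, squarefreeness of $w$, and \autoref{lem:trace}.
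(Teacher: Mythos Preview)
Your proposal is correct and follows essentially the same approach as the paper. The paper's own proof is very brief because the first three assertions (the congruence characterization, the $p^{n}$ count, and implicitly the size formula via $T$) are already established in the discussion immediately preceding the theorem; you spell these out explicitly with the same tools (interpolation of $u$ through the $(\alpha_j,\beta_j)$, \autoref{lem:trace}, and the fibre identification $T(f)=T(g)\iff\CC(f)=\CC(g)$), and for the final dichotomy and the $p$-count you use the same degree comparison and the same observation that $\ker(u\mapsto u^{p}-u)=\fp$ on $\fqx$ that the paper invokes via the map $\phi$.
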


\begin{proof}
    Only the claims regarding \eqref{eq:u} remain to be proved. 
    
    Since $\deg(f - g) < k < n$, it follows from \eqref{eq:thm1} that $\deg(u) \ge n / p$ or $\deg(u) < k / p$, with \eqref{eq:thm1} taking the form of an equality precisely in the latter case. 

    Consider the map $\phi$ in \eqref{eq:phi-def} extended to $\fqx$. Then $\phi$ is $\fp$-linear with $\ker(\phi) = \fp$. 
    Therefore, if $h \in \phi(\fqx)$, i.e., $h = u^{p} - u$ for some $u \in \fqx$, then there are exactly $p$ such $u \in \fqx$, given by $u + \lambda$ for $\lambda \in \fp$. 
\end{proof}

Observe that \eqref{eq:thm1} cannot be satisfied for $f \neq g$ when $k$ is sufficiently small. This is made precise in the corollary below. 

\begin{corollary}
    Let $\CC = \CRS_{n, {\beta}}(\CF(k, q))$. If $k < 1 + n p / q$, then $|\CC| = q^{k}$. 
\end{corollary}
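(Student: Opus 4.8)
The plan is to deduce the corollary from \autoref{thm:injective} together with a one-line root count. By \eqref{eq:crs-size} and \autoref{thm:injective}, $|\CC| = q^{k}$ holds precisely when the encoding map \eqref{eq:crs} is injective, equivalently when the $\fp$-linear map $T$ of \eqref{eq:T-def} has trivial kernel, i.e., when there is no nonzero $h \in \CF(k,q)$ with $\Tr(\beta h(\alpha_{j})) = 0$ for all $j = 1, \dots, n$. So I would suppose such an $h$ exists, note that $\deg h < k$, and aim to contradict $k < 1 + np/q$.

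The key step: $\Tr(\beta h(\alpha_{j})) = 0$ for every $j$ says exactly that $\alpha_{1}, \dots, \alpha_{n}$ are $n$ distinct roots of the single polynomial $P(X) := \Tr(\beta h(X)) = \beta h(X) + (\beta h(X))^{p} + \cdots + (\beta h(X))^{q/p} \in \fqx$. If $\deg h \ge 1$, the leading term $(\beta h(X))^{q/p}$ does not cancel, so $P$ is nonzero of degree $(q/p)\deg h \le (q/p)(k-1)$, and the hypothesis $k - 1 < np/q$ makes this degree strictly less than $n$ --- impossible for a polynomial with $n$ distinct roots. Hence $\deg h = 0$. (This is consistent with the degree dichotomy in \autoref{thm:injective}: since $n < q$ gives $k < 1+np/q \le 1+p$, one has $k/p \le 1$, so in \eqref{eq:thm1} the only surviving possibility for small $k$ is the degenerate one $\beta h = u^{p}-u$ with $u$ a constant, which again forces $h$ constant.) Then $h$ is a nonzero constant and $P = \Tr(\beta h)$ collapses to a constant that must vanish, i.e., $\beta h \in \ker(\Tr)$.

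The step I expect to be the real obstacle is exactly this final constant case, $\beta h \in \ker(\Tr)$ with $h \ne 0$: the root-counting argument is vacuous here, and closing it requires $\ker(\Tr) = \{0\}$ --- that is, the ground field being prime --- or some normalization of \eqref{eq:crs} that forbids $f$ and $g$ from differing by a constant. (Indeed, over a proper extension a nonzero constant $h$ with $\beta h \in \ker(\Tr)$ makes $\CRS(f)$ and $\CRS(g)$ differ by a single unit scalar, so one would only obtain the weaker bound $|\CC| < q^{k}$.) Once that case is dispatched, the remainder of the argument is just the degree bookkeeping above.
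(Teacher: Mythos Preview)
Your approach is the paper's exactly: reduce injectivity of \eqref{eq:crs} to injectivity of the $\fp$-linear map $T$ of \eqref{eq:T-def}, then bound the degree of $\Tr(\beta h(X)) \in \fqx$ by $(q/p)(k-1) < n$ to force it to vanish identically. The paper's proof then writes ``which forces $\Tr \circ \beta f = 0$, i.e.\ $f = 0$'' without further comment --- this is precisely the step you flag.

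Your concern is not a gap in exposition but a genuine obstruction: the corollary as stated fails for $q > p$. Take $k = 1$, which always satisfies $1 < 1 + np/q$. Then $\CF(1,q)$ consists of constants, and $\ker T = \{c \in \fq : \Tr(\beta c) = 0\} = \beta^{-1}\ker(\Tr)$ has $q/p > 1$ elements, so by \eqref{eq:CRS-size} one gets $|\CC| = q/(q/p) = p < q = q^{k}$. (Your parenthetical slightly misstates what happens --- when $\Tr(\beta h)=0$ the two encodings coincide as vectors, not merely as lines --- but the conclusion is right.) So neither your argument nor the paper's can be completed over a proper extension; the paper's ``i.e.\ $f=0$'' silently uses $\ker(\Tr)=\{0\}$, and that is exactly where it breaks. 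Over the prime field $q = p$ the constant case is vacuous and both arguments go through, but then the hypothesis $k < 1 + n$ is automatic and the statement collapses to the equality case of \autoref{thm:crs-size}.
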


\begin{proof}
    We claim that, under the given hypothesis, the map $T$ given by \eqref{eq:T-def} is injective. To see this, note that if $T(f) = 0$, then the polynomial $\Tr \circ \beta f \in \fqx$ has zeros at the $n$ evaluation points of $\CC$. But 
    \begin{align*}
        \deg(\Tr \circ \beta f) 
        &= \deg(\Tr) \deg(\beta f) \le (k - 1) q / p < n, 
    \end{align*}
    which forces $\Tr \circ \beta f = 0$, i.e. $f = 0$, thereby proving the claim. 
    Thus, $|\CC| = p^{\dim_{\fp} \CF(k, q)} = q^{k}$ by \eqref{eq:CRS-size}, as desired. 
\end{proof}

In general, \autoref{thm:injective} implies that $|\CC|$ must be equal to a power of $p$ less than or equal to $p^{n}$. Combined with \eqref{eq:crs-size}, this gives the following bounds. 

\begin{theorem} 
    \label{thm:crs-size}
    Let $\CC = \CRS_{n, {\beta}}(\CF(k, q))$.
    Then $|\CC|$ is a power of $p$ such that 
    \begin{align}
        \label{eq:crs-size-2}
        \frac{p^{n}}{q^{n - k}} 
        &\le |\CC| 
        %&= \frac{q^{k}}{|S \cap \ker(\phi)|} 
        %\le q^{k} \left(\frac{p}{q}\right)^{n}, 
        \le \min\{p^{n}, q^{k}\}. %\min\{p^{n}, q^{k}\}
    \end{align}
    Equality holds on both sides of \eqref{eq:crs-size-2} when $k = n$ or $q = p$. 
\end{theorem}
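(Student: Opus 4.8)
The plan is to reduce everything to the rank computation for the $\fp$-linear map $T$ introduced in \autoref{thm:injective}. By \eqref{eq:CRS-size} we already know $|\CC| = |\Image(T)| = p^{\rank(T)}$, so $|\CC|$ is automatically a power of $p$, and it only remains to bound $\rank(T)$. Writing $q = p^m$, so that $\dim_{\fp}\CF(k, q) = km$, the claim \eqref{eq:crs-size-2} is equivalent to $n - (n - k) m \le \rank(T) \le \min\{n, km\}$.

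The upper bound is immediate: $T$ takes values in $\fp^{n}$, so $\rank(T) \le n$ and hence $|\CC| \le p^{n}$; combining this with $|\CC| \le |\RS| = q^{k}$ from \eqref{eq:crs-size} (the RS encoding is injective since $k \le n$) gives $|\CC| \le \min\{p^{n}, q^{k}\}$. For the lower bound I would instead estimate $\dim_{\fp}\ker(T)$ from above. Consider the $\fq$-linear (hence $\fp$-linear) map $\Psi\colon \CF(k, q) \to \fqn$, $f \mapsto (\beta f(\alpha_{1}), \dots, \beta f(\alpha_{n}))$. Since $\beta \ne 0$ and $\deg(\beta f) < k \le n$, a nonzero $\beta f$ cannot vanish at all $n$ distinct evaluation points, so $\Psi$ is injective. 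By \eqref{eq:T-def}, $f \in \ker(T)$ precisely when every coordinate of $\Psi(f)$ lies in $N := \ker(\Tr\colon \fq \to \fp)$, i.e. $\ker(T) = \Psi^{-1}(N^{n})$. Because $\Tr$ is a nonzero $\fp$-linear functional, $\dim_{\fp} N = m - 1$, hence $\dim_{\fp} N^{n} = n(m - 1)$, and injectivity of $\Psi$ gives $\dim_{\fp}\ker(T) \le n(m - 1)$. Rank--nullity over $\fp$ then yields $\rank(T) = km - \dim_{\fp}\ker(T) \ge km - n(m - 1) = n - (n - k)m$, which is exactly $\log_{p}\!\big(p^{n}/q^{n-k}\big)$, so $|\CC| \ge p^{n}/q^{n-k}$.

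The two equality statements follow by inspection once the bounds are in place. If $q = p$ then $m = 1$, so $p^{n}/q^{n-k} = p^{k} = q^{k}$ and $\min\{p^{n}, q^{k}\} = q^{k}$ (using $k \le n$), which pins $|\CC| = q^{k}$; if $k = n$ then $p^{n}/q^{n-k} = p^{n}$ and $\min\{p^{n}, q^{k}\} = p^{n}$ (using $q \ge p$), which pins $|\CC| = p^{n}$.

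I do not expect a serious obstacle here: the content is essentially a dimension count riding on \autoref{thm:injective}. The only point requiring a little care is recognizing that the injection $\Psi$ — whose injectivity uses the standing hypothesis $k \le n$ — carries $\ker(T)$ into $N^{n}$, and that $N$ has $\fp$-codimension exactly $1$ in $\fq$ because $\Tr$ is surjective onto $\fp$. Everything else is routine bookkeeping.
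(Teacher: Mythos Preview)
Your argument is correct and, at its core, is the same dimension count as the paper's: both proofs bound the size of the encoding fibers by the size of the trace kernel in each coordinate. The packaging differs, though. The paper works inside the quotient ring $R=\fqx/((X-\alpha_1)\cdots(X-\alpha_n))$, extends the map $\phi(u)=u^p-u$ to $R$, and uses the second isomorphism theorem on the pair $S=\CF(k,q)$ and $\phi(R)$ to get $|\CC|=|S|/|S\cap\phi(R)|=|S+\phi(R)|/|\phi(R)|$, from which $|S\cap\phi(R)|\le|\phi(R)|=(q/p)^n$ gives the lower bound and $|S+\phi(R)|\le|R|$ gives $|\CC|\le p^n$. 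Under the CRT isomorphism $R\cong\fqn$ one has $\phi(R)\cong N^n$, so the paper's $S\cap\phi(R)$ is exactly (the $\beta$-scaled image of) your $\ker(T)=\Psi^{-1}(N^n)$, and the two bounds match line for line. Your route is more direct given that \eqref{eq:CRS-size} already hands you $|\CC|=p^{\rank(T)}$; the paper's route buys a little extra, namely explicit iff conditions for equality in each bound ($\phi(R)\subseteq S$ for the lower, $S+\phi(R)=R$ for the upper), which your argument does not isolate.
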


\begin{proof}
    Let $\alpha_{1}, \dots, \alpha_{n}$ be the evaluation points of $\CC$, 
    and put $w(X) = (X - \alpha_{1}) \cdots (X - \alpha_{n})$. Let $R := \fqx / (w)$ and consider the map $\phi$ in \eqref{eq:phi-def} extended to $R$. By the Chinese remainder theorem, 
    \begin{align}
        \label{eq:crt}
        R 
        &\cong \prod_{i = 1}^{n} 
        \fqx / (X - \alpha_{i}) 
        \cong \fqn, 
    \end{align}
    so that $\ker(\phi) \cong \fp^{n}$ by \autoref{lem:trace}. 
    Moreover, by \autoref{thm:injective}, 
    \begin{align*}
        \CC(f) = \CC(g) 
        &\iff \beta (f - g) \in S \cap \phi(R), 
    \end{align*}
    where $S = \CF(k, q) / (w) \subseteq R$. Therefore, %by the rank-nullity theorem, 
    by the isomorphism theorems, 
    \begin{align}
        \label{eq:CRS-size-lb}
        |\CC| 
        &= \frac{|S|}{|S \cap \phi(R)|} 
        \ge \frac{|S|}{|\phi(R)|} 
        = \frac{q^{k}}{(q / p)^{n}}, 
    \end{align} 
    and 
    \begin{align}
        \label{eq:CRS-size-ub}
        |\CC| 
        &= \frac{|S + \phi(R)|}{|\phi(R)|} 
        \le \frac{|R|}{|\phi(R)|} 
        = |\ker(\phi)| 
        %= |\ker(\phi)|^{n} 
        = p^{n}. 
    \end{align}
    %with $\phi$ defined as in the proof of 
    Thus, \eqref{eq:crs-size-2} follows from \eqref{eq:CRS-size-lb}, \eqref{eq:CRS-size-ub} and \eqref{eq:crs-size}. 
    Equality holds in \eqref{eq:CRS-size-lb} if and only if $\phi(R) \subseteq S$,
    which is satisfied for $k = n$ or $q = p$ (the latter since $u(\alpha_{i})^{q}  = u(\alpha_{i})$ for all $i$). 
    Likewise, equality holds in \eqref{eq:CRS-size-ub} if and only if $S + \phi(R) = R$,
    which is satisfied for sufficiently large $k$, in particular, for $k = n$. 
\end{proof}

Note that if $k < n$, it is possible that neither $\phi(R) \subseteq S$ nor $S + \phi(R) = R$ holds. On the other hand, if $q = p$, then $\phi(R) = \{0\}$. Hence, $S + \phi(R) = R$ holds if and only if $k = n$. 

\subsection{Covering Algorithm}
\label{sec:covering-cp}

As mentioned in the introduction, CP codes have recently been used for the quantization problem in MISO systems with limited feedback~\cite{gooty2025precodingdesignlimitedfeedbackmiso}, 
and the quantization error was characterized using the covering radius of CP codes. 
Moreover, a decoding algorithm for CP codes over prime fields was recently proposed~\cite{riasat2024decodinganalogsubspacecodes}. 
However, as discussed at the beginning of \autoref{sec:covering-grs}, decoding algorithms do not produce good coverage due to the decoding radius being much smaller than the covering radius. Therefore, we propose \autoref{alg:CP-cover} below for covering lines in $G_{1, n}(\BC)$ using CRS codes. 
To describe the idea behind the algorithm, consider the map $\psi_{\beta}$ defined in \eqref{eq:chi-inverse},
which can be regarded as an extension of the $1$-to-$q / p$ map $\chi_ {\beta}^{-1}$ to all of $\BC$, obtained by first mapping any $z \in \BC$ to its closest point (in euclidean distance) in $\chi_{\beta}(\fq)$ then taking its pre-image: 
\begin{align}
    \label{eq:chi-inverse}
    \psi_ {\beta}: \BC &\to \binom{\fq}{q / p} \\ 
    z &\mapsto \chi_ {\beta}^{-1} \left( \exp \left( \frac{2 \pi i}{p} \left \lfloor \frac{p \arg(z)}{2 \pi} + \frac{1}{2} \right \rfloor \right) \right) \notag. 
\end{align}
Here, $\binom{S}{t}$ conventionally denotes the set of all subsets of $S$ of size $t$. 
%Given $q = p$%the code $\CRS_{n}(\fpkp, \chi)$
%, our algorithm is based on the observation that $\GRS(\fpkp') = \RS(\fpkp)$. 
The given subspace $\langle y\rangle \in G_{1, n}(\BC)$ is first transformed into a $\RS$ codeword by applying $\psi_ {\beta}$ to each $y_{i}$ and picking a random element in $\chi_ {\beta}^{-1}(y_{i})$, which is then followed by a call to $\GRScover$ on the resulting vector. Finally, the message polynomial corresponding to the output is encoded in the CRS code and returned by the algorithm. 
%\todo{prove that this works} 

\begin{algorithm}[!htbp]
    \caption{$\CRScover(\CC, y)$} 
    \label{alg:CP-cover}
    \begin{algorithmic}[1]
        \Require{CRS code $\CC = \CRS_{n, {\beta}}(\CF(k, q))$, arbitrary subspace $\langle y\rangle \in G_{1, n}(\BC)$}
        \Ensure{A codeword $c \in \CC$ with $\dc(\langle y\rangle, \langle c\rangle) \le \rhos(\CC)$}
        \BlackBox{$\psi_{\beta}$, $\GRScover$, $\GRSdecode$}
        \State $\CC' \gets \RS_{n}(\CF(k, q))$
        %\Do 
            \For {$i = 1, \dots, n$}
            \State \label{step:CRScover:psi} $u_{i} \gets \hbox{random element in } \psi_{\beta}(y_{i})$ 
            \EndFor
            \State $c' \gets \GRScover(\CC', u)$ \label{step:CRScover:cover} 
            \State $f \gets \GRSdecode(\CC', c')$ \label{step:CRScover:decode}  
        %\doWhile {$g \neq {\tt NULL}$}
        %\For {$i = 0, \dots, k$}
        %\If {$p \mid i$} 
        %\State $g_{i} \gets 0$ 
        %\EndIf
        %\EndFor
        \State \Return $\CC(f)$
    \end{algorithmic}
\end{algorithm} 

In a practical implementation of \autoref{alg:CP-cover}, one may avoid \autoref{step:CRScover:decode} altogether by modifying $\GRScover$ in \autoref{alg:GRS-cover} to directly return the message polynomial rather than its re-encoding in \autoref{step:CRScover:cover}. Moreover, the pre-images $u_{j} \in \psi_{\beta}(y_{j})$ in \autoref{step:CRScover:psi} under the character map $\chi_{\beta}$ may be chosen using a ``best-of-$N$'' approach for some fixed $N$. In other words, out of $N$ random assignments of the elements of $\mathbb{F}_{q}$ to the $u_{i}$ under $\psi_{\beta}$, the one giving the smallest chordal distance of the $\CRS$ codeword from $y$ is chosen. 
This is because, in general, each point in $\chi_{\beta}(\mathbb{F}_{q})$ has $q / p$ distinct pre-images in $\mathbb{F}_{q}$, so a brute-force approach would need to test $(q / p)^{n}$ assignments to find the optimal one.

\subsection{Improved 
Upper Bound on Average Covering Radius
}

\label{sec:Covering radius}

%Explain the covering radius and {\color{red}it's} implication on finding a beam-forming vector.
%Samin is yet to work on the chordal distance bound.
% \todo{Before the theorem 1,2 starts make sure there is connection between $\mathbf{y}$ and $\mathbf{h}$}\\

%in future work, we wish to extend the above results to CP codes. 
%In this section we provide bounds on the average covering radius of $\CRS$ codes over prime fields%book and leverage these results to provide bounds on its beamforming gain. 
%\todo{\autoref{thm:chordal} gives an upper bound on the chordal distance, but that does NOT give any bound on the covering radius.}
%We first bound the chordal distance between an arbitrary one-dimensional subspace $\langle y \rangle \in G_{1, n}(\BC)$ and its nearest $\CRS$ codeword. 
%To do so we will require the \emph{covering radius} of $\CRS = \CRS_{n}(\CF(k, q), \chi)$ as a block code. 
%\subsubsection{Covering Radius of \texorpdfstring{$\CRS$}{CP}}

%\todo{Give more general statement.}

%Henceforth, $y = (y_1,\dots,y_n) \in \BC^{n}$ denotes the input to the quantization problem. 
%In the context of our problem we will have $y = \mathbf{f^{\rm opt}}$ computed from the channel ${\mathbf{h}}$. 
%Also, with slight abuse of notation, we let $\Gamma_{\textup{CP}}(\bf{y})$ to denote the beamforming gain of the CP codebook, to be bounded later.   
Let $\Psi = \chi_{\beta}(\fq)$. Then $\Psi$ consists of the $p$-th roots of unity. Given the input vector $y \in \BC^{n}$, let $\theta_{j} = \arg(y_{j}) \in [0, 2 \pi)$ %for $j \in \{1, \dots, n\}$ 
and, %the \emph{quantization} ${\bf \check{y}} \in \Psi^{n}$ of $y$ as %$x_{i} = \phi(y_{i})$ for $i \in \{1, \dots, n\}$, where 
for $j \in \{1, \dots, n\}$, define 
\begin{align}
    \label{eq:ycheck}
    %\phi: 
    %\BC &\to \chi(\fq), %\\ 
    %&z \mapsto 
    \check{y}_{j} 
    %&= Q(y_{j}) 
    &:= \exp\left(\frac{2 \pi i}{p} \left \lfloor \frac{p \theta_{j}}{2 \pi} + \frac{1}{2} \right \rfloor\right).
\end{align}
In other words, $\check{y}_{j}$ 
is the closest point to $y_{j}$ 
in $\Psi$ (in Euclidean distance). 
The following theorem improves the upper bound on the average chordal covering radius of 
$\CCP_{n, \beta}(\fpkp) = \CRS_{n, {\beta}}(\CF(k, p))$ by Gooty et al.~\cite[Theorem~1]{gooty2025precodingdesignlimitedfeedbackmiso}.

\begin{theorem}
    \label{thm:gain}
    Consider a distribution $\CD$ on $\BC^{n}$\hide{\hl{$\CS$}}, where the amplitudes \hide{of the coordinates }are i.i.d. with mean $\mu$ and variance $\sigma^{2}$, and are independent of the phases. 
    Then, over $\CD$, 
    the average covering radius\hide{, defined in \autoref{def:mean-covering-radius},} of $\CC := \CRS_{n, \beta}(\CF(k, p))$ \hide{code over $\fp$ of length $n$ and}of rate $R := k / n $ such that 
    \begin{align}
        \label{eq:rate-condition} 
        R 
        &\ge \frac{1}{c + 1} + \frac{(1 - c) \sigma^{2}}{2 (1 + c) n \mu^{2}}, 
    \end{align}
    where $c := \sqrt{\cos(2 \pi / p)}$, is bounded as follows:
    \begin{align}
        \label{eq:mean-covering-radius}
        &\rhosbar(\CC) \le \min\bigg\{\sqrt{1 - \dfrac{(c R + R - 1)^{2} \mu^{2}}{\mu^{2} + \sigma^{2}}}, \\ 
        &\resizebox{0.95\hsize}{!}{$\displaystyle\sqrt{1 - \dfrac{c R + R - 1}{c} \cdot \dfrac{(c R + R - 1)^{2} \mu^{2} + (c^{2} R - R + 1) \sigma^{2} / n}{\mu^{2} + \sigma^{2}}}$}\bigg\}. \notag
    \end{align} 
\end{theorem}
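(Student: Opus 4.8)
The plan is to bound $\dc(\langle y\rangle,\langle c\rangle)$ for the specific codeword $c$ returned by $\CRScover(\CC,y)$; since $\min_{V\in\CC}\dc(\langle y\rangle,V)\le\dc(\langle y\rangle,\langle c\rangle)$, this upper-bounds $\rhosbar(\CC)$. Writing $y_j=r_je^{i\theta_j}$, the vector $u$ formed in \autoref{step:CRScover:psi} satisfies $\chi_\beta(u_j)=\check y_j$, the nearest $p$-th root of unity to $e^{i\theta_j}$ (so $|\arg\check y_j-\theta_j|\le\pi/p$), and $u$ is a (randomized) function of the phases $(\theta_1,\dots,\theta_n)$ only. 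By \autoref{thm:grs-cover} applied to $\CC'=\RS_n(\CF(k,p))$, an $[n,k,n-k+1]_p$ code, $\GRScover(\CC',u)$ returns $c'$ with $\dH(u,c')\le n-k$, so $c'$ agrees with $u$ on a set $G$ with $|G|\ge k$. Hence the returned $c=\CRS(f)$ has coordinates $c_j=\chi_\beta(c'_j)$, and $c_j=\chi_\beta(u_j)=\check y_j$ for every $j\in G$. The structural point driving everything is that $G$ depends only on the phases (and internal coins), hence is independent of the amplitudes $\{r_j\}$ under the hypothesis on $\CD$; note also that the bounds below hold for every realization of the algorithm's random choices, so no averaging over them is needed.

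\textbf{Geometric reduction.} Using \eqref{eq:tr-chordal} with $\tr(P_{\langle y\rangle}P_{\langle c\rangle})=|\langle y,c\rangle|^2/(\|y\|^2\|c\|^2)$ and $\|c\|^2=n$ gives $\dc(\langle y\rangle,\langle c\rangle)^2=1-|\langle y,c\rangle|^2/(n\|y\|^2)$. To lower-bound $|\langle y,c\rangle|$, split $\langle y,c\rangle=\sum_{j\in G}r_je^{i\eta_j}+\sum_{j\notin G}r_je^{i\eta_j}$ with $\eta_j:=\arg c_j-\theta_j$. For $j\in G$ one has $|\eta_j|\le\pi/p$, so $|\eta_j-\eta_l|\le 2\pi/p$ and $\cos(\eta_j-\eta_l)\ge\cos(2\pi/p)=c^2$ for $j,l\in G$. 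Expanding $|\sum_{j\in G}r_je^{i\eta_j}|^2=\sum_{j,l\in G}r_jr_l\cos(\eta_j-\eta_l)$ yields $|\sum_{j\in G}r_je^{i\eta_j}|\ge c\sum_{j\in G}r_j$, so by the triangle inequality
\[
|\langle y,c\rangle|\ \ge\ c\textstyle\sum_{j\in G}r_j-\sum_{j\notin G}r_j\ =:\ ca-b,
\]
a bound valid even when its right side is negative. Separating the diagonal from the off-diagonal in the same expansion gives the sharper $|\sum_{j\in G}r_je^{i\eta_j}|^2\ge c^2a^2+(1-c^2)\sum_{j\in G}r_j^2$.

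\textbf{First term.} Conditioning on the phases (equivalently on $G$), using $G\perp\{r_j\}$ and $\expec[r_j]=\mu$: $\expec[|\langle y,c\rangle|]\ge\expec[ca-b]=((c+1)|G|-n)\mu\ge((c+1)k-n)\mu=(cR+R-1)n\mu$, which is $\ge 0$ by \eqref{eq:rate-condition} (and the inner expression is monotone in $|G|$). Cauchy--Schwarz gives $\expec[|\langle y,c\rangle|^2/\|y\|^2]\ge(\expec|\langle y,c\rangle|)^2/\expec\|y\|^2$, and with $\expec\|y\|^2=n(\mu^2+\sigma^2)$ this is $\ge(cR+R-1)^2n\mu^2/(\mu^2+\sigma^2)\cdot n /n$; hence $\expec[\dc^2]\le 1-(cR+R-1)^2\mu^2/(\mu^2+\sigma^2)$, and concavity of $x\mapsto\sqrt{1-x}$ (Jensen) yields $\rhosbar(\CC)\le\expec[\dc]\le\sqrt{1-(cR+R-1)^2\mu^2/(\mu^2+\sigma^2)}$.

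\textbf{Second term and main obstacle.} For the second estimate I would start from $|\langle y,c\rangle|^2\ge c^2a^2+(1-c^2)\sum_{j\in G}r_j^2-2ab$ (combining $|\langle y_G,c_G\rangle|^2\ge c^2a^2+(1-c^2)\sum_{j\in G}r_j^2$ with $|\langle y_G,c_G\rangle|\le a$, $|\langle y_{G^c},c_{G^c}\rangle|\le b$ in the cross term), substitute into $\dc^2=1-|\langle y,c\rangle|^2/(n\|y\|^2)$, and take expectations — again conditioning on $G$, exploiting $G\perp\{r_j\}$ and $\expec[r_j]=\mu,\ \expec[r_j^2]=\mu^2+\sigma^2$, and applying Jensen for the square root. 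The rate condition \eqref{eq:rate-condition} is precisely what keeps the relevant combination nonnegative so that the squaring and Jensen steps are legitimate (and lets one reduce to $|G|=k$ monotonically). The genuine difficulty is controlling the ratio $|\langle y,c\rangle|^2/\|y\|^2$ of correlated quantities while faithfully retaining the second-moment contribution $\sum_{j\in G}r_j^2$ that is responsible for the $(c^2R-R+1)\sigma^2/n$ term; I expect this to require a careful moment computation (rather than one clean inequality as in the first term), and it is where the bulk of the work lies. Taking the minimum of the two estimates completes the proof.
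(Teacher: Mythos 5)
Your derivation of the \emph{first} term inside the $\min$ is correct and essentially self-contained: you use the pointwise inequality $|\langle y,c\rangle|\ge ca-b$ (with $a=\sum_{j\in G}r_j$, $b=\sum_{j\notin G}r_j$), independence of $G$ from the amplitudes together with $|G|\ge k$, Cauchy--Schwarz in the form $\expec[|\langle y,c\rangle|]\le\sqrt{\expec[|\langle y,c\rangle|^2/\|y\|^2]}\sqrt{\expec[\|y\|^2]}$, and Jensen for $\sqrt{1-x}$; the rate condition \eqref{eq:rate-condition} ensures $cR+R-1>0$ so squaring the mean is legitimate. This is actually more than the paper does: the paper simply cites \cite[Theorem~1]{gooty2025precodingdesignlimitedfeedbackmiso} for that term.

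\textbf{The gap.} For the second (improved) term you only sketch an approach and explicitly say the key step is missing, and moreover the starting point you propose will not reproduce the stated bound. The paper does not begin from a pointwise lower bound on $|\langle y,c\rangle|^2$. Instead it keeps the linear bound $|\langle y,c\rangle|\ge ca-b$, introduces the event $\CE:=\{ca\ge b\}$, and uses Markov's inequality to show $\Pr[\CE]\ge\frac{cR+R-1}{c}$. Then, since on $\CE$ the quantity $ca-b$ is nonnegative and hence $|\langle y,c\rangle|^2\ge(ca-b)^2$, it writes $\expec[|\langle y,c\rangle|^2]\ge\Pr[\CE]\,\expec[(ca-b)^2\mid\CE]$ and computes $\expec[(ca-b)^2\mid\CA]=(cA-B)^2\mu^2+(c^2A+B)\sigma^2$ with $A=|\CA|$, $B=|\CB|$, finally invoking monotonicity of this expression in $A$ for $A\ge nR$ to set $A=k$, $B=n-k$. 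The probability $\Pr[\CE]$ is exactly where the prefactor $\frac{cR+R-1}{c}$ comes from, and the $b^2$ term inside $(ca-b)^2$ is what produces the $B\sigma^2=(n-k)\sigma^2$ contribution to $(c^2R-R+1)\sigma^2/n$.

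\textbf{Why your sketch diverges.} Your proposed inequality $|\langle y,c\rangle|^2\ge c^2a^2+(1-c^2)\sum_{j\in G}r_j^2-2ab$ is valid, but after taking expectations conditioned on $G$ it yields
\begin{equation*}
\bigl[c^{2}A^{2}+(1-c^{2})A-2AB\bigr]\mu^{2}+A\sigma^{2},
\end{equation*}
which has no $B\sigma^{2}$ term and a different $\mu^{2}$ coefficient than $(cA-B)^{2}\mu^{2}+(c^{2}A+B)\sigma^{2}$, and it carries no $\frac{cR+R-1}{c}$ prefactor. The reason is that you dropped $|z_{G^{c}}|^{2}$ (which you cannot lower-bound pointwise by anything positive), whereas the paper squares the real number $ca-b$ itself on the event $\CE$, so the cross term $b^{2}$ is retained in the moment computation. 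In short: the missing ingredient is the conditioning device (event $\CE$, Markov on $\Pr[\CE]$, and the conditional second moment of $ca-b$), which is the entire content of the ``improved'' bound; a direct unconditional expectation of a pointwise quadratic lower bound, as you outline, lands on a different formula and does not close the proof.
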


\begin{proof}
    The first expression inside the $\min$ in \eqref{eq:mean-covering-radius} is the square root of the bound on the mean-squared quantization error given by Gooty et al.~\cite[Theorem~1]{gooty2025precodingdesignlimitedfeedbackmiso}. 
    To obtain the improved bound, let %$\CRS = \CRS_{n}(\fpkp, \chi)$ and 
    $f \in \CF(k, p)$ denote the output of $\GRSdecode$ in \autoref{step:CRScover:decode} of $\CRScover(\CC, y)$. Then, 
    by the triangle inequality, 
    \begin{align}
        |\langle y, \CC(f) \rangle| 
        &\ge 
        \abs{\abs{\sum_{i \in \CA} y_{i}^{*} \chi_{\beta}(f(\alpha_{i}))} - \abs{\sum_{i \in \CB} y_{i}^{*} \chi_{\beta}(f(\alpha_{i}))}} \notag \\ 
        &\ge \abs{\sum_{i \in \CA} y_{i}^{*} \chi_{\beta}(f(\alpha_{i}))} - \sum_{i \in \CB} |y_{i}|, 
        \label{eq:inner-product}
    \end{align}
    where 
    \begin{align*}
        \begin{aligned}
            \CA &= \{1 \le i \le n: \check{y}_{i} = \chi_{\beta}(f(\alpha_{i}))\}, 
            \\ 
            \CB &= \{1 \le i \le n: \check{y}_{i} \neq \chi_{\beta}(f(\alpha_{i}))\}.
        \end{aligned}
    \end{align*}
    To estimate the first term, note that $\check{y}_{j} = e^{i \delta_{j}} y_{j} / {|y_{j}|}$ 
    for $j \in \{1, \dots, n\}$ by \eqref{eq:ycheck}, 
    where 
    \begin{align}
        \label{eq:delta-j}
        \delta_{j} 
        &:= \frac{2 \pi}{p} \left \lfloor \frac{p \theta_{j}}{2 \pi} + \frac{1}{2} \right \rfloor - \theta_{j} 
        \in \left[- \frac{\pi}{p}, \frac{\pi}{p}\right),
    \end{align}
    so that 
    for $j \in \CA$, 
    \begin{align}
        \label{eq:y-delta-j}
        y_{j}^{*} \chi_{\beta}(f(\alpha_{j})) 
        &= y_{j}^{*} \check{y}_{j} 
        = |y_{j}| e^{i \delta_{j}}. 
    \end{align}
    Thus, 
    by \eqref{eq:delta-j} and \eqref{eq:y-delta-j},%, where $c = \sqrt{\cos(2 \pi / p)}$. 
    \begin{align*}
        &\hspace{-2em} 
        \abs{\sum_{j \in \CA} y_{j}^{*} \chi_{\beta}(f(\alpha_{j}))}  
        = \abs{\sum_{j \in \CA} |y_{j}| e^{i \delta_{j}}} \\ 
        %&= \sqrt{\bigg(\sum_{j \in \CA} |y_{j}| \cos \delta_{j}\bigg)^{2} + \bigg(\sum_{j \in \CA} |y_{j}| \sin \delta_{j}\bigg)^{2}} \\ 
        %&= \sqrt{\sum_{j \in \CA} |y_{j}|^{2} + 2 \sum_{\substack{j, k \in \CA \\ j < k}} |y_{j}| |y_{k}| (\cos \delta_{j} \cos \delta_{k} + \sin \delta_{j} \sin \delta_{k})} \\ 
        &= \sqrt{\sum_{j \in \CA} |y_{j}|^{2} + 2 \sum_{\substack{j, k \in \CA \\ j < \ell}} |y_{j}| |y_{\ell}| \cos (\delta_{j} - \delta_{\ell})} \\ 
        &\ge \sqrt{\sum_{j \in \CA} |y_{j}|^{2} + 2 \sum_{\substack{j, \ell \in \CA \\ j < \ell}} |y_{j}| |y_{\ell}| \cos (2 \pi / p)} \\ 
        &\ge %\sqrt{\sum_{j \in \CA} |y_{j}|^{2} \cos (2 \pi / p) + 2 \sum_{\substack{j, k \in \CA \\ j < k}} |y_{j}| |y_{k}| \cos (2 \pi / p)} \\ 
        %&= 
        %\sqrt{\cos(2 \pi / p)}
        c \sum_{j \in \CA} |y_{j}|. 
        %=: \sqrt{c} \sum_{j \in \CA} |y_{j}|
    \end{align*}
    Let 
    \begin{align*}
        \CE 
        &:= \bigg\{y \in \CD: c \sum_{j \in \CA} |y_{j}| 
        \ge \sum_{j \in \CB} |y_{j}|\bigg\}. 
    \end{align*}
    By Markov's inequality\hide{ and  
    \autoref{lem:rho-cp}}, 
    %Let $Y = |y_{1}| + \cdots + |y_{n}|$. 
    %Now, \footnote{See, for example, \url{https://math.stackexchange.com/questions/1258485/expectation-of-quotient-of-random-variables}.}
    \begin{align}
        \Pr[\CE]%\bigg[c \sum_{j \in \CA} |y_{j}| \ge \sum_{j \in \CB} |y_{j}|\bigg] 
        &= \Pr\bigg[c \sum_{j = 1}^{n} |y_{j}| \ge (c + 1) \sum_{j \in \CB} |y_{j}|\bigg] \notag \\ 
        %&\le \Pr\left[(c + 1) \sum_{j \in \CA} |y_{j}| \ge \norm{y}\right] \\ 
        %&= \Pr\left[c \sqrt{n} \ge (c + 1) \sum_{j \in B} |y_{j}|\right] \\ 
        %&\le \frac{\sqrt{n} - b \mu}{\sqrt{n} - c \sqrt{n} / (c + 1)} \\ 
        %&= (c + 1) \left(1 - \frac{b \mu}{\sqrt{n}}\right)
        %&= \Pr\left[\frac{c}{c + 1} \ge \frac{\sum_{j \in B} |y_{j}|}{\sum_{j = 1}^{n} |y_{j}|}\right] \\ 
        %&\le (c + 1) \cdot \expec\left[\frac{\sum_{j \in \CA} |y_{j}|}{\sum_{j = 1}^{n} |y_{j}|}\right] \\ 
        %&= \frac{(c + 1) a}{n}
        %&= 1 - \Pr\bigg[\frac{c}{c + 1} < \frac{\sum_{j \in \CB} |y_{j}|}{\sum_{j = 1}^{n} |y_{j}|}\bigg] \notag \\ 
        &\ge 1 - \frac{\expec\left[\sum_{j \in \CB} |y_{j}| / \sum_{j = 1}^{n} |y_{j}|\right]}{c / (c + 1)}. 
        \label{eq:markov} %\\ 
        %&= 1 - \frac{b / n}{c / (c + 1)} \\ 
        %&\ge 1 - \frac{(n - k) / n}{c / (c + 1)} \\ 
        %&= \frac{R c + R - 1}{c} 
        %&\to 2 R - 1
    \end{align}
    %as $n \to \infty$, 
    Letting 
    $A := |\CA|$, $B := |\CB|$,  
    $Y := \sum_{j = 1}^{n} |y_{j}|$, 
    and using the fact that the $|y_{j}|$ are i.i.d., we have \hide{identical distribution of fact that the $|y_{j}|$ are i.i.d. 
    we have, by the law of total expectation,} 
    \begin{align}
        \expec\left[\frac{\sum_{j \in \CB} |y_{j}|}{Y}\right] 
        &= \expec\left[\sum_{j = 1}^{n} \frac{|y_{j}|}{Y} {\bf 1}_{\{j \in \CB\}}\right] \notag \\ 
        &= \sum_{j = 1}^{n} \expec\left[\frac{|y_{j}|}{Y}\right] \expec[{\bf 1}_{\{j \in \CB\}}] \notag \\ 
        &= \frac{\expec[B]}{n} \notag \\ 
        &\le \frac{n - k}{n} \notag \\ 
        &= 1 - R, \label{eq:expec} 
    \end{align}
    which follows by linearity and independence of the amplitudes and phases. 
    Here we have used 
    \begin{align}
        \label{eq:covering-radius}
        B = n - A \le \rhoh(\CC) = \rhoh(\RS_{n}(\CF(k, p)) = 
        n - k, 
    \end{align} 
    which follows from the definition of $\rhoh$ (see \eqref{eq:rho} and its associated remarks) and the injectivity of $\chi_{\beta}: \fp \to \BC$. 
    Thus, %with probability at least $[R (1 + c) - 1] / c$, %for at least a $2 R - 1$ fraction of subspaces $\langle y \rangle$,
    %as $n \to \infty$ and $R \to 1$, 
    by the law of total expectation, 
    \begin{align}
        &%\hspace{-1mm}
        \expec[|\langle y, \CC(f)\rangle|^{2}] 
        \ge \expec[|\langle y, \CC(f)\rangle|^{2} | \CE] \Pr[\CE] %\bigg[|\langle y, \CC(f)\rangle|^{2} \bigg| c \sum_{i \in \CA} |y_{i}| - \sum_{i \in \CB} |y_{i}| \ge 0\bigg] 
        \notag \\ 
        %&\hspace{4.2cm} \cdot \Pr\bigg[c \sum_{i \in \CA} |y_{i}| - \sum_{i \in \CB} |y_{i}| \ge 0\bigg] \notag \\ 
        &\hspace{-1mm}\ge \left(1 - \frac{1 - R}{c / (c + 1)}\right) \cdot \expec\left[\bigg(c \sum_{i \in \CA} |y_{i}| - \sum_{i \in \CB} |y_{i}|\bigg)^{2} \bigg| \CE\right] \notag \\ 
        &\hspace{-1mm}= \frac{c R + R - 1}{c} \cdot \expec\left[\expec\left[\bigg(c \sum_{i \in \CA} |y_{i}| - \sum_{i \in \CB} |y_{i}|\bigg)^{2} \bigg| \CA\right] \bigg| \CE\right] \notag \\ 
        %&\hspace{-1mm}= \frac{c R + R - 1}{c} \cdot \expec\bigg[c^{2} \sum_{i \in \CA} |y_{i}|^{2} + 2 c^{2} \sum_{\substack{i, j \in \CA \\ i < j}} |y_{i}| |y_{j}| \notag \\ 
        %&\hspace{1.6cm} + \sum_{i \in \CB} |y_{i}|^{2} + 2 \sum_{\substack{i, j \in \CB \\ i < j}} |y_{i}| |y_{j}| - 2 c \sum_{\substack{i \in \CA \\ j \in \CB}} |y_{i}| |y_{j}| \bigg| \CE\bigg] \notag \\ 
        %&\expec[|\langle y, \CRS(f)& \rangle|^{2}] 
        %&\hspace{-1mm}= \frac{c R + R - 1}{c} \cdot \expec\bigg[\expec\bigg[c^{2} \sum_{i \in \CA} |y_{i}|^{2} + 2 c^{2} \sum_{\substack{i, j \in \CA \\ i < j}} |y_{i}| |y_{j}| \notag \\ 
        %&\hspace{1.4cm} + \sum_{i \in \CB} |y_{i}|^{2} + 2 \sum_{\substack{i, j \in \CB \\ i < j}} |y_{i}| |y_{j}| - 2 c \sum_{\substack{i \in \CA \\ j \in \CB}} |y_{i}| |y_{j}| \bigg| \mathcal{A}\bigg]\bigg] \notag \\ 
        %&\hspace{-1mm}=%\ge 
        %\frac{c R + R - 1}{c} \cdot \expec[\expec[c^{2} A (\mu^{2} + \sigma^{2}) + c^{2} A (A - 1) \mu^{2} \notag \\ 
        %&\hspace{1.85cm} + B (\mu^{2} + \sigma^{2}) + B (B - 1) \mu^{2} - 2 A B c \mu^{2} | A]] \notag \\ 
        %&\hspace{-2cm}= (c^{2} A + B) (\mu^{2} + \sigma^{2}) + [c^{2} A (A - 1) 
        %+ B (B - 1) - 2 A B c] \mu^{2} \notag \\ 
        %&\hspace{-1.5cm}= 
        &\hspace{-1mm}= \frac{c R + R - 1}{c} \cdot \expec[(c A - B)^{2} \mu^{2} + (c^{2} A + B) \sigma^{2} | \CE] %\notag \\ 
        %&\hspace{-1.5cm}
        %&= \scalebox{.8}{$\dfrac{c R + R - 1}{c}$} \cdot [(c A + A - n)^{2} \mu^{2} + (c^{2} A + n - A) \sigma^{2}] 
        %\label{eq:exp-inner-product-squared-a} 
        \notag \\ 
        %&\hspace{-1.5cm}
        &\hspace{-1mm}\ge %\resizebox{0.84\hsize}{!}{$\displaystyle 
        \frac{c R + R - 1}{c} \cdot [(c k + k - n)^{2} \mu^{2} + (c^{2} k + n - k) \sigma^{2}],%$} 
        \label{eq:exp-inner-product-squared} 
    \end{align}
    which follows from \eqref{eq:inner-product}, \eqref{eq:markov} and \eqref{eq:expec} upon noting that 
    \begin{itemize}
        \item $(c A - B)^{2} \mu^{2} + (c^{2} A + B) \sigma^{2}$ is increasing in $A$ for 
        \begin{align}
            \label{eq:acondition}
            A 
            &\ge \frac{n}{1 + c} + \frac{(1 - c) \sigma^{2}}{2 (1 + c) \mu^{2}}. 
        \end{align}
        \hide{Assuming the $|y_{j}|$ are i.i.d.,\hide{ with mean $\mu$ and variance $\sigma^{2}$} the probability that the right-hand side of \eqref{eq:ABbound} is negative is 
        \begin{align}
            \hide{\Pr[{\rm RHS} < 0] 
            &= }\Pr\left[\frac{\sum_{j \in B} |y_{j}|}{\sum_{j \in A} |y_{j}|} > c\right]  
            \le \frac{B}{c A} 
            \le \frac{n - k}{c k} = \frac{1 - R}{c R}
            \label{eq:markov}
        \end{align}
        by Markov's inequality, 
        Therefore, \hl{as $R\to 1$},\hide{ which tends to $0$ as $R \to 1$, i.e. ${\rm RHS} \ge 0$ almost surely$\expec[{\rm RHS}] = (c A - B) \mu \ge (c k + k - n) \mu > 0$ so the RHS is positive on average! In particular, (This is only true if the RHS of \eqref{eq:ABbound} is positive! Q: If $X \ge Y$ and $\expec[Y] \ge 0$, is $\expec[X^{2}] \ge \expec[Y^{2}]$?
        
        Note: iff $\var(X) + \expec[X]^{2} \ge \var(Y) + \expec[Y]^{2}$. In the worst case, we cannot say anything more than this since $y$ can be perpendicular to $\CRS(f)$. Nevertheless, $\expec[{\rm RHS}] = (c a - B) \mu \ge (c k + k - n) \mu > 0$ so the RHS is positive on average! In particular, by Markov's inequality, (see \href{https://math.stackexchange.com/questions/1258485/expectation-of-quotient-of-random-variables}{here})
        \begin{align*}
            \Pr[{\rm RHS} < 0] 
            &= \Pr\left[\frac{\sum_{j \in B} |y_{j}|}{\sum_{j \in A} |y_{j}|} > c\right]  
            \le \frac{B}{c A} 
            \le %\frac{n - k}{c k} = 
            \frac{1 - R}{c}
        \end{align*}
        which tends to $0$ as $R \to 1$, i.e. ${\rm RHS} \ge 0$ almost surely!)} 
        %the expected beamforming gain is 
        \begin{align}
            \expec%_{y \sim \CD}
            [\langle y,  \CRS(f) \rangle^{2}] %\\ 
            &\ge \expec%_{y \sim \CD}
            \bigg[\bigg(\sqrt{\cos(2 \pi / p)} 
            \sum_{j \in A} |y_{j}| - \sum_{j \in B} |y_{j}|\bigg)^{2}\bigg] \notag \\ 
            %&= \cos(2 \pi / p) (\mu^{2} + \sigma^{2}) |A| + (\mu^{2} + \sigma^{2}) |B| \\ 
            %&\qquad + \cos(2 \pi / p) |A| (|A| - 1) \mu^{2} + |B| (|B| - 1) \mu^{2} \\ 
            %&\qquad - 2 \sqrt{\cos(2 \pi / p)} |A| |B| \mu^{2} \\ 
            %&= c^{2} A (\mu^{2} + \sigma^{2}) + b (\mu^{2} + \sigma^{2}) \notag \\ 
            %&\qquad
            %+ c^{2} A (A - 1) \mu^{2} + B (B - 1) \mu^{2} - 2 c A B \mu^{2} \notag \\ 
            %&= \mu^{2} (c^{2} A + B + c^{2} A (A - 1) + B (B - 1) - 2 c A B) \\
            %&\qquad + \sigma^{2} (c^{2} A + B) \\ 
            %&= (c A - B)^{2} \mu^{2} + (c^{2} A + B) \sigma^{2} \\ 
            &= (cA - n + A)^{2} \mu^{2} + (c^{2} A + n - A) \sigma^{2}%, \\ 
            %&=: g(A)
            \label{eq:g(a)}
        \end{align}
        \hl{almost surely}. 
        \eqref{eq:g(a)} is \hide{strictly convex and }increasing in $A$ for $A \ge n R = k$ %assuming \eqref{eq:rate-bound} holds.
        and %the $\CRS$ code rate $R$ satisfies
        \begin{align}
            R
            &\ge \frac{1}{c + 1} + \frac{(1 - c) \sigma^{2}}{2 (c + 1) n \mu^{2}} 
            %\ge \frac{n}{2}
            . 
            \label{eq:rate-bound}
        \end{align}
        Assuming \eqref{eq:nbound} holds, \eqref{eq:rate-bound} is satisfied for $R \to 1$. 
        Therefore, %for $R \to 1$ and $n$ sufficiently large satisfying \eqref{eq:nbound}, %$k \ge n / 2$ sufficiently large, %$g(A)$ is maximized for $A = n$ or $A = k$, which gives
        \begin{align*}
            \expec
            [\langle y, \CRS(f) \rangle^{2}]
            &\ge 
            %&\max\{ c^{2} n^{2} \mu^{2} + c^{2} n \sigma^{2}, \\
            (c k + k - n)^{2} \mu^{2} + (c^{2} k + n - k) \sigma^{2}
            %\}
            ,
        \end{align*}
        by the Cauchy--Schwarz inequality and \eqref{eq:exp-inner-product}. and that}%\todo{fix this!} 
        \item \eqref{eq:acondition} is implied by \eqref{eq:rate-condition}, since $A \ge k = n R$ 
        %\begin{align}
        %    \label{eq:abound}
        %    A 
        %    &\ge k = n R \ge \frac{n}{c + 1} + \frac{(1 - c) \sigma^{2}}{2 (1 + c) \mu^{2}} 
        %\end{align}
        \hide{almost surely }by \eqref{eq:covering-radius}\hide{, so that \eqref{eq:acondition} holds by \eqref{eq:abound}}.
    \end{itemize} 
    %Therefore, (\todo{check}) https://statproofbook.github.io/P/mean-mono.html
    Applying Jensen's inequality twice
    followed by \eqref{eq:exp-inner-product-squared} gives: 
    \begin{align}
        &\rhosbar(\CC)
        = \expec\left[\sqrt{1 - \frac{|\langle y, \CC(f) \rangle|^{2}}{n \norm{y}^{2}}}\right] \notag \\ 
        &\le \sqrt{1 - \expec\bigg[\frac{|\langle y, \CC(f) \rangle|^{2}}{n \norm{y}^{2}}\bigg]} \notag \\ 
        %&= 1 - \int_{0}^{\infty} \expec\left[\frac{\langle y, \CC(f) \rangle^{2}}{n t^{2}} \bigg| \norm{y} = t\right] 
        %f_{\norm{y}}(t) d t \notag \\ 
        %&\le %1 - \int_{0}^{\infty} \frac{(c k + k - n)^{2} \mu^{2}\hide{ + (c^{2} k + n - k) \sigma^{2}}}{n t^{2}} 
        %f_{\norm{y}}(t) d t \notag \\ 
        %&= 
        %\sqrt{1 - \scalebox{.8}{$\dfrac{R (1 + c) - 1}{c} \cdot \dfrac{(c k + k - n)^{2} \mu^{2} + (c^{2} k + n - k) \sigma^{2}}{n} \expec\left[\frac{1}{\norm{y}^{2}}\right]$}} \notag \\ 
        &\le %\resizebox{0.83\hsize}{!}{$
        \sqrt{1 - \frac{c R + R - 1}{c} \cdot \frac{(c k + k - n)^{2} \mu^{2} + (c^{2} k + n - k) \sigma^{2}}{n \expec[\norm{y}^{2}]}}. 
        %$} 
        \notag
        %\label{eq:rho-proof} %\\ 
        %&= 1 - \frac{(c k - n + k)^{2} \mu^{2} + (c^{2} k + n - k) \sigma^{2}}{n^{2} (\mu^{2} + \sigma^{2})} \\ 
        %&= 1 - \frac{(R (c + 1) - 1)^{2} \mu^{2} + \left(R (c^{2} - 1) + 1\right) \sigma^{2} / n}{\mu^{2} + \sigma^{2}} %\\ 
        %&\to 1 - \frac{c^{2} \mu^{2} + c^{2} \sigma^{2} / n}{\mu^{2} + \sigma^{2}}
    \end{align}
    %as $R \to 1$, 
    Substituting $\expec[\norm{y}^{2}] = n (\mu^{2} + \sigma^{2})$ yields the conclusion. 
\end{proof}

\autoref{thm:gain} provides sufficient SNR and rate conditions under which CRS codes achieve a small average covering radius. Comparing this bound with the random-coding expectation of \autoref{thm:random-chordal} yields the following corollary, which shows that CRS codes achieve average covering radii within a constant factor of random codes %the average covering radius of random codes 
at high rate and high SNR.
%implies that, over source distributions with high SNR\hide{ $\mu^{2} / \sigma^{2} > p - 1$}, CRS codes over prime fields of sufficiently large rate achieve a lower average covering radius than random codes. More precisely, we have the following corollary. 

\begin{corollary}
    \label{cor:random-coding-bound}
    Consider a distribution $\CD$ on $\BC^{n}$, where the amplitudes are i.i.d. with mean $\mu$ and variance $\sigma^{2}\hide{ < \mu^{2} / (p - 1)}$, and independent of the phases, %, the average covering radius of a CRS code over $\fp$ of rate $R$ such that 
    such that $\mu^{2} / \sigma^{2} \to \infty$ as $n \to \infty$. 
    Then,
    for sufficiently large $n$, 
    the average covering radius over $\CD$ of a $\CRS$ code of rate $R > 1 - 1 / (\ln p)$ over $\fp$ 
    is within a factor of $2 \sqrt{p (1 - R)}$ 
    of the average covering radius of a random code of size $p^{n R}$ in $G_{1, n}(\BC)$. 
\end{corollary}

\begin{proof}
    By \autoref{thm:random-chordal}, the expected average covering radius of a random code of size $M = p^{n R}$ is equal to 
    \begin{align}
        \expec_{\CC_{\CY}} [\rhosbar(\CC_{\CY})] 
        &= \frac{M (M - 1)}{a_{n} - 1} \betaf(a_{n}, M - 1) \notag \\ 
        &= \frac{M (M - 1) \Gamma(a_{n}) \Gamma(M - 1)}{(a_{n} - 1) \Gamma(a_{n} + M - 1)} \notag \\ 
        &= \frac{\Gamma(a_{n} - 1) \Gamma(M + 1)}{\Gamma(a_{n} + M - 1)}, %\\ 
        %&\sim \frac{M (M - 1)^{1 - a_{n}}}{a_{n} - 1} \Gamma(a_{n}) \\ 
        %&= \frac{M}{(a_{n} - 1) (M - 1)^{1 + \frac{1}{2 (n - 1)}}} \Gamma(a_{n}) \\ 
        %&\sim M^{- \frac{1}{2 (n - 1)}} \\ 
        %&\sim p^{- \frac{R}{2}}
        \label{eq:random-new}
    \end{align} 
    where and $\Gamma$ is the gamma function and 
    \begin{align}
        \label{eq:an-def}
        a_{n} 
        &:= 2 + \frac{1}{2 (n - 1)}. 
    \end{align}
    Using Stirling's approximation, 
    as $n \to \infty$, 
    \begin{align}
        \frac{\Gamma(M + 1)}{\Gamma(a_{n} + M - 1)} 
        %\notag \\ 
        &\sim \frac{M^{M + 1 / 2} e^{-M}}{(a_{n} + M - 2)^{a_{n} + M - 3 / 2} e^{-(a_{n} + M - 2)}} 
        \notag \\ 
        &= e^{a_{n} - 2} \frac{M^{M + 1 / 2}}{(a_{n} + M - 2)^{a_{n} + M - 3 / 2}} 
        \notag \\ 
        &= e^{a_{n} - 2} \frac{M^{M + 1 / 2 - a_{n} - M + 3 / 2}}{(\frac{a_{n} + M - 2}{M})^{a_{n} + M - 3 / 2}} 
        \notag \\ 
        &= \left(\frac{e}{M}\right)^{a_{n} - 2} \frac{1}{(1 + \frac{a_{n} - 2}{M})^{a_{n} + M - 3 / 2}} 
        \notag \\ 
        &\sim \left(\frac{e}{M}\right)^{a_{n} - 2} \frac{1}{e^{a_{n} - 2} (1 + \frac{a_{n} - 2}{M})^{a_{n} - 5 / 2}} 
        \notag \\ 
        &\sim M^{2 - a_{n}} 
        \notag \\ 
        &= M^{-\frac{1}{2 (n - 1)}} 
        \notag \\ 
        &\sim p^{-\frac{R}{2}}, 
        %\notag \\ 
        %&= p^{-\frac{n R}{2 (n - 1)}}, 
        %&= e^{-\frac{\ln p}{2} (1 - \delta)}, 
        \label{eq:limit}
    \end{align}
    %Therefore, the expected average covering radius of a random code approaches $0$.  
    \hide{Using Stirling's approximation, 
    \begin{align*}
        &\frac{\Gamma(M + 1)}{\Gamma(a_{n} + M - 1)} \\ 
        &= \frac{\sqrt{2 \pi} M^{M + 1 / 2} e^{-M} (1 + O(\frac{1}{M}))}{\sqrt{2 \pi} (a_{n} + M - 2)^{a_{n} + M - 3 / 2} e^{-(a_{n} + M - 2)} (1 + O(\frac{1}{M}))} \\ 
        &= e^{a_{n} - 2} \left(1 + O\bigg(\frac{1}{M}\bigg)\right) \frac{M^{M + 1 / 2}}{(a_{n} + M - 2)^{a_{n} + M - 3 / 2}} \\ 
        &= e^{a_{n} - 2} \left(1 + O\bigg(\frac{1}{M}\bigg)\right) \frac{M^{M + 1 / 2 - a_{n} - M + 3 / 2}}{(\frac{a_{n} + M - 2}{M})^{a_{n} + M - 3 / 2}} \\ 
        &= \left(\frac{e}{M}\right)^{a_{n} - 2} \left(1 + O\bigg(\frac{1}{M}\bigg)\right) \frac{1}{(1 + \frac{a_{n} - 2}{M})^{a_{n} + M - 3 / 2}} \\ 
        %&= \left(\frac{e}{M + 1}\right)^{a_{n} - 2} \left(1 + O\bigg(\frac{1}{M}\bigg)\right) \frac{1}{(1 + \frac{a_{n} - 2}{M + 1})^{M + 1 + a_{n} - 5 / 2}} \\ 
        &> \left(\frac{e}{M}\right)^{a_{n} - 2} \left(1 + O\bigg(\frac{1}{M}\bigg)\right) \frac{1}{e^{a_{n} - 2} (1 + \frac{a_{n} - 2}{M})^{a_{n} - 5 / 2}} \\ 
        &= \left(\frac{1}{M}\right)^{a_{n} - 2} \left(1 + O\bigg(\frac{1}{M}\bigg)\right) \frac{1}{(1 + \frac{a_{n} - 2}{M})^{a_{n} - 5 / 2}} \\ 
        &= \left(\frac{1}{M}\right)^{a_{n} - 2} \left(1 + O\bigg(\frac{1}{M}\bigg)\right) \left(\frac{M}{a_{n} + M - 2}\right)^{a_{n} - 5 / 2} \\ 
        %&= M^{- 1 / 2} \left(\frac{1}{a_{n} + M - 2}\right)^{a_{n} - 5 / 2} \left(1 + O\bigg(\frac{1}{M}\bigg)\right) \\ 
        &= M^{- 1 / 2} (a_{n} + M - 2)^{5 / 2 - a_{n}} \left(1 + O\bigg(\frac{1}{M}\bigg)\right) \\ 
        &> M^{2 - a_{n}} \left(1 + O\bigg(\frac{1}{M}\bigg)\right) \\ 
        &= M^{-\frac{1}{2 (n - 1)}} \left(1 + O\bigg(\frac{1}{M}\bigg)\right) \\ 
        &> p^{-\frac{R}{2}} \left(1 + O\bigg(\frac{1}{M}\bigg)\right) 
        %&= \left(\frac{e}{M}\right)^{2 - a_{n}} \left(1 + O\bigg(\frac{1}{M}\bigg)\right) \frac{(1 + 1 / M)^{M + 1 / 2}}{(1 + (a_{n} - 1) / M)^{a_{n} + M - 3 / 2}} \\ 
    \end{align*}
    %Therefore, the expected average covering radius of a random code approaches $0$.  
    By Gautschi's inequality~\cite{Gautschi1959SomeEI}, 
    \begin{align}
        \label{eq:wendel}
        \frac{\Gamma(M + 1)}{\Gamma(a_{n} + M - 1)} 
        %&= \frac{M \Gamma(M)}{(a_{n} + M - 2) \Gamma(a_{n} + M - 2)} \\ 
        &\ge \frac{M^{3 - a_{n}}}{a_{n} + M - 2} %(M + 1)^{2 - a_{n}}.
        %\\ 
        %&= \frac{M^{1 - \frac{1}{2 (n - 1)}}}{M + \frac{1}{2 (n - 1)}}
        %\sim p^{-R}
        %\\ 
        %&= \frac{p^{n R (1 - \frac{1}{2 (n - 1)})}}{p^{n R} + \frac{1}{2 (n - 1)}}
        %\\ 
        = \frac{p^{- \frac{R}{2} (1 + \frac{1}{n - 1})}}{1 + \frac{p^{- n R}}{2 (n - 1)}}
        , 
    \end{align}
    since $a_{n} - 2 \in (0, 1)$ by \eqref{eq:an-def}. 
    As $n \to \infty$, 
    \begin{align}
        \label{eq:limit}
        \frac{p^{- R (1 + \frac{1}{n - 1})} \Gamma(1 + \tfrac{1}{2 (n - 1)})^{2}}{(1 + \frac{1}{2 (n - 1) p^{n R}})^{2}} 
        &\sim p^{-R}, 
    \end{align}
    }and, by \autoref{thm:gain}, 
    \begin{align}
        \rhosbar(\CC) 
        &\le \sqrt{1 - \frac{(c R + R - 1)^{2} \mu^{2}}{\mu^{2} + \sigma^{2}}} 
        \notag \\ 
        %&\le 1 - \frac{(c R + R - 1)^{2} \mu^{2}}{2 (\mu^{2} + \sigma^{2})} \\ 
        %&\sim \sqrt{1 - \cos\frac{2 \pi}{p}} 
        %&\sim \sqrt{1 - (c R + R - 1)^{2}}
        %\notag \\ 
        %&= \sqrt{2} \sin\frac{\pi}{p} 
        %&\le \sqrt{1 - \left(R \sqrt{\cos\frac{2 \pi}{p}} + R - 1\right)^{2}}
        %\notag \\ 
        %&\sim \frac{\pi \sqrt{2}}{p}, 
        %&\sim \sqrt{1 - \left(R \sqrt{1 - \frac{2 \pi^{2}}{p^{2}}} + R - 1\right)^{2}}. 
        &\sim \sqrt{1 - (2 R - 1)^{2}}, 
        %\notag \\ 
        %&\to \frac{\pi \sqrt{2}}{p}, 
        \label{eq:crs-bound} 
    \end{align}
    %where $c := \sqrt{\cos(2 \pi / p)}$. 
    where $\CC = \CRS_{n}(\fpkp)$. 
    Here, the asymptotic notation $A(n) \sim B(n)$ conventionally means that $A(n) / B(n) \to 1$ as $n \to \infty$. 
    Letting $R = 1 - \varepsilon$ for $\varepsilon < 1 / (\ln p)$, we have 
    \begin{align}
        \label{eq:rhosbar-rand} 
        p^{- \frac{R}{2}} 
        &= \sqrt{p^{-1} e^{\varepsilon \ln p}} 
        \sim \sqrt{\frac{1 + \varepsilon \ln p}{p}},
    \end{align}
    and 
    \begin{align}
        \label{eq:rhosbar-crs} 
        \sqrt{1 - (2 R - 1)^{2}}
        &= \sqrt{1 - (1 - 2 \varepsilon)^{2}} 
        \sim 2 \sqrt{\varepsilon}, 
    \end{align}
    i.e., the ratio of the average covering radii of a random code and $\CC$, by \cref{eq:random-new,eq:an-def,eq:limit,eq:crs-bound,eq:rhosbar-rand,eq:rhosbar-crs}, is asymptotically at least 
    \begin{align}
        \label{eq:asymptotic-ratio}
        \frac{p^{- \frac{R}{2}}}{\sqrt{1 - (2 R - 1)^{2}}} 
        &\sim \frac{1}{2 \sqrt{\varepsilon p}} 
        %\sim \frac{1}{2 \sqrt{\rhohbar(\CC)}}
        . 
    \end{align}
    Therefore, $\CC$ achieves the average covering radius of random codes within a factor of $2 \sqrt{\varepsilon p}$ \hide{by \eqref{eq:asymptotic-ratio} 
    }for sufficiently large $n$. 
\end{proof}

Note that $2 \sqrt{p (1 - R)} \sim 2 \sqrt{n - k}\hide{ = 2 \sqrt{\rhoh(\CC)}}$ whenever $p = n + o(n)$. 
Classical results from analytic number theory (see, e.g.,~\cite{BHP01}) guarantee the existence of such $p$ for any $n$. 
%In other words, for a given rate $R > 1 - 1 / (\ln p)$ and sufficiently large $n$, CRS codes achieve the average covering radius of random codes up to a constant factor. 

\subsection{Simulation Results and Comparison with Random Coding Bound}
\label{sec:simulation-crs}

\autoref{fig:chordal-covering-radius} shows the Monte Carlo simulated average \hide{chordal }covering radius of \hide{the $(6, k)_{7}$ CP code }$\CC = \CRS_{6, 1}(\CF(k, 7))$ for $1 \le k < 6$ computed using \autoref{alg:CP-cover} with $\GRSdecode \in \{\hbox{BW}, \hbox{GS}\}$, together with theoretical bounds 
(compare this with \autoref{fig:covering-radius}). 
The input vector $y \in \BC^{n}$ for each simulation was chosen to have i.i.d. coordinates drawn from a complex normal distribution so that the amplitudes follow a Rayleigh distribution with mean $1$\hide{$y_{i} \in \BC$ with $\arg(y_{i}) \in [0, 2 \pi)$ chosen uniformly randomly 
for each $i$}. 
The average chordal distance from $y$ to the output codeword was computed over $500$ data points for each $k$. 
The random coding bound is given by \eqref{eq:random-chordal} in \autoref{thm:random-chordal}. 
The Gooty et al. upper bound is given by the square root of their bound on the mean squared quantization error~\cite[Theorem~1]{gooty2025precodingdesignlimitedfeedbackmiso}, which is improved upon by \autoref{thm:gain}. 

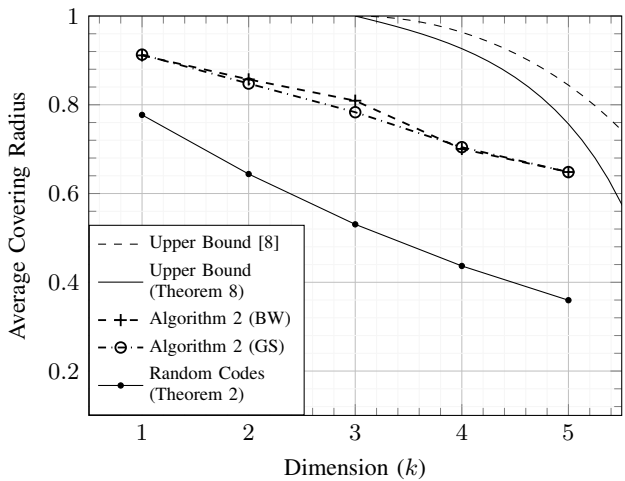
\begin{figure}[!htbp]
    \centering
        \begin{tikzpicture}
        
        \definecolor{black}{rgb}{0, 0.0, 0}%
        \definecolor{blue}{rgb}{0, 0, 1}%
        \definecolor{red}{rgb}{1, 0, 0}
        \definecolor{green}{rgb}{0, 0.5, 0}
        
        \begin{axis}[
            font=\small,
            width=0.8\linewidth,%7.5cm,
            height=0.6\linewidth,%5.5cm,
            scale only axis,
            xmin=0.5,
            xmax=5.5,
            xtick={0,1,2,3,...,6},
            xlabel={Dimension ($k$)},  
            % xmajorgrids,
            ymin=0.1,
            ymax=1,
            ylabel={Average Covering Radius},
            ylabel near ticks,
            minor tick num=4,
            % ymajorgrids,
            grid=both,
            grid style={
                line width=.1pt, 
                draw=gray!7
            },
            major grid style={
                line width=.2pt,
                draw=gray!50,
                %dashed
            },
            legend style={
                font=\scriptsize, 
                at={(0,0)},
                anchor=south west, 
                draw=black,
                fill=white,
                legend cell align=left, 
                cells={align=left},
                row sep=0.1pt
            }
        ]
        
        \addplot[
            color=black, 
            solid, 
            dashed, 
            domain=3:6
        ] {sqrt(1 - (x/6*(1+sqrt(cos(2*pi/7)))-1)^2 * 1/(1 + (8/pi-2))};
        \addlegendentry{Upper Bound~\cite{gooty2025precodingdesignlimitedfeedbackmiso}
        };
        
        %\draw[black] (0,6) -- (6,0);
        \addplot[
            color=black, 
            solid,
            solid, 
            domain=3:6
        ] {sqrt(1 - (x/6*(1+sqrt(cos(2*pi/7)))-1)/sqrt(cos(2*pi/7))*((x/6*(1+sqrt(cos(2*pi/7)))-1)^2*1 + (x/6*(sqrt(cos(2*pi/7))^2-1)+1)^2*(8/pi-2))/(1+(8/pi-2))};
        \addlegendentry{Upper Bound \\
        (\autoref{thm:gain})
        };
        
        \addplot[
            color=black, 
            solid, 
            dashed, 
            line width=0.7pt,
            mark=+, 
            mark size=2.3pt,
            mark options={solid}
        ] table[row sep=crcr]{%
            1 0.911234742664266 \\
            2 0.857170733964536 \\
            3 0.809156953633996 \\
            4 0.700476494648494 \\
            5 0.648471972207539 \\
        };
        \addlegendentry{\autoref{alg:CP-cover} (BW)};
        
        \addplot[
            color=black, 
            solid, 
            dash dot, 
            mark=o, 
            line width=0.7pt,
            mark options={solid}
        ] table[row sep=crcr]{%
            1 0.912941213038983 \\
            2 0.847771926293615 \\
            3 0.783229628604138 \\
            4 0.704410619850270 \\
            5 0.648308254845208 \\
        };
        \addlegendentry{\autoref{alg:CP-cover} (GS)};
        
        \addplot[
            color=black, 
            solid, 
            solid, 
            mark=*, 
            mark size=1pt,
            mark options={solid}
        ] table[row sep=crcr]{%
            1 0.777167527593263 \\
            2 0.643925432693930 \\
            3 0.530568745885669 \\
            4 0.436808931976995 \\
            5 0.359575615705036 \\
        };
        \addlegendentry{Random Codes 
        \\
        (\autoref{thm:random-chordal})};
        
        \end{axis}
        
        \end{tikzpicture}
    \caption{Comparison of simulated average covering radii of a $(6, k)_{7}$ CRS code and random codes for $1 \le k < 6$}
    \label{fig:chordal-covering-radius}
\end{figure}

The behavior observed in \autoref{fig:chordal-covering-radius} highlights an important distinction between covering in the Hamming and Grassmann spaces. While Reed–Solomon-based constructions can outperform random codes in the Hamming space under the average covering radius metric, the corresponding CRS constructions fall short of random codebooks in the Grassmann space. This performance gap can be attributed to a structural constraint inherent to CRS codes: all coordinates of the basis vector have equal amplitude, as defined in \eqref{eq:crs}. Although this constant-modulus property is highly desirable in wireless communication applications, where it enables power-efficient transmission and simplifies hardware implementation, it restricts the geometric flexibility of the codebook in the Grassmann space. In contrast, random codebooks implicitly exploit variations in coordinate amplitudes, which allows for more effective space covering. We believe that achieving or surpassing the performance of non-asymptotic random codes in the Grassmann space requires the design of structured codes with non-uniform amplitude profiles across coordinates. The exploration of such amplitude-adaptive constructions, while preserving practical constraints, is left for future work.

\autoref{fig:asymptotic-bounds-comparison} compares the upper bound on the average covering radius of an $(n,k)_p$ CRS code and the random coding bound in the extreme high-rate regime $k/n \to 1$, where $k = n - 1$ and $p$ is the smallest prime $> n$, for $n \le 10^4$\hide{70571[31, 48523]}. The bounds are evaluated under a source distribution with mean $\mu = n$ and variance $\sigma^2 = 1$. As $n \to \infty$, both the random-coding bound and the upper bound derived in \autoref{thm:gain} vanish, while the ratio between them approaches $2$. In particular, the average covering radius of the CRS code remains sandwiched between the two bounds and within a fixed constant factor of the random-coding bound, in accordance with \autoref{cor:random-coding-bound}.

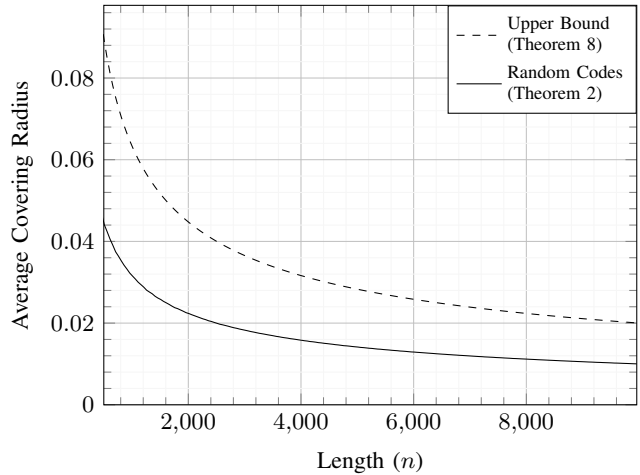
\begin{figure}[!htbp]
    \centering
    \begin{tikzpicture}
        \definecolor{black}{rgb}{0,0,0}%
        \definecolor{blue}{rgb}{0,0,1}%
        \definecolor{red}{rgb}{1,0,0}
        \definecolor{green}{rgb}{0,0.5,0}
        
        \begin{axis}[
           font=\small,
           width=0.8\linewidth,%7.5cm,
           height=0.6\linewidth,%5.5cm,
           scale only axis,
           xmin=500,
           xmax=9960,
           xlabel={Length ($n$)},
           % xmajorgrids,
           ymin=0,
           ytick={0,0.02,0.04,0.06,0.08,0.1},
           yticklabel style={
              /pgf/number format/fixed,
              %/pgf/number format/precision=2,
              %/pgf/number format/fixed zerofill
           },
           ylabel={Average Covering Radius},
           ylabel near ticks,
           minor tick num=4,
           % ymajorgrids,
           scaled ticks=false,
           grid=both,
           grid style={
              line width=.1pt,
              draw=gray!7
           },
           major grid style={
              line width=.2pt,
              draw=gray!50,
              %dashed
           },
           legend style={
              font=\scriptsize,
              at={(1,1)},
              anchor=north east,
              draw=black,
              fill=white,
              legend cell align=left,
              cells={align=left},
              row sep=0.1pt
           }
        ]
        
        \addplot[
           color=black,
           solid,
           dashed,
        ] table[row sep=crcr]{%
          10 0.692019867522471 \\ 
          60 0.266179637858962 \\ 
          110 0.193948874447777 \\ 
          160 0.160033715805730 \\ 
          210 0.139341931234788 \\ 
          260 0.124988336483352 \\ 
          310 0.114341499965425 \\ 
          360 0.105986827856851 \\ 
          410 0.0992461789361650 \\ 
          460 0.0936680731611766 \\ 
          510 0.0889036828710934 \\ 
          560 0.0848245771987251 \\ 
          610 0.0812500375736749 \\ 
          660 0.0780940298779198 \\ 
          710 0.0752719211727588 \\ 
          760 0.0727454808310105 \\ 
          810 0.0704527634419962 \\ 
          860 0.0683630859823506 \\ 
          910 0.0664506005133665 \\ 
          960 0.0646873277116655 \\ 
          1010 0.0630605258434957 \\ 
          1060 0.0615499522640577 \\ 
          1110 0.0601410312724033 \\ 
          1160 0.0588268638432212 \\ 
          1210 0.0575944101807857 \\ 
          1260 0.0564339190983036 \\ 
          1310 0.0553444226622596 \\ 
          1360 0.0543156134271143 \\ 
          1410 0.0533394448387585 \\ 
          1460 0.0524158458501399 \\ 
          1510 0.0515394825054252 \\ 
          1560 0.0507039654501636 \\ 
          1610 0.0499087810579788 \\ 
          1660 0.0491495022217364 \\ 
          1710 0.0484232107523848 \\ 
          1760 0.0477283821787274 \\ 
          1810 0.0470641789244158 \\ 
          1860 0.0464258566681775 \\ 
          1910 0.0458126980673381 \\ 
          1960 0.0452227586052389 \\ 
          2010 0.0446562468098860 \\ 
          2060 0.0441097655558204 \\ 
          2110 0.0435830645718041 \\ 
          2160 0.0430746962964300 \\ 
          2210 0.0425836264721853 \\ 
          2260 0.0421089009844776 \\ 
          2310 0.0416500945894450 \\ 
          2360 0.0412053775159557 \\ 
          2410 0.0407753080943323 \\ 
          2460 0.0403578942792005 \\ 
          2510 0.0399531128254964 \\ 
          2560 0.0395601601030687 \\ 
          2610 0.0391791437652754 \\ 
          2660 0.0388087042954966 \\ 
          2710 0.0384485224124500 \\ 
          2760 0.0380979883117916 \\ 
          2810 0.0377569690072292 \\ 
          2860 0.0374251715938938 \\ 
          2910 0.0371016637817681 \\ 
          2960 0.0367866127504593 \\ 
          3010 0.0364794115188238 \\ 
          3060 0.0361797408681733 \\ 
          3110 0.0358871906309588 \\ 
          3160 0.0356018731429878 \\ 
          3210 0.0353230831991089 \\ 
          3260 0.0350507439272414 \\ 
          3310 0.0347847963368871 \\ 
          3360 0.0345247186052620 \\ 
          3410 0.0342703297470544 \\ 
          3460 0.0340215391931429 \\ 
          3510 0.0337780617657312 \\ 
          3560 0.0335396083039947 \\ 
          3610 0.0333063634251307 \\ 
          3660 0.0330777218041913 \\ 
          3710 0.0328538445314707 \\ 
          3760 0.0326345197073380 \\ 
          3810 0.0324193300215246 \\ 
          3860 0.0322085352407679 \\ 
          3910 0.0320017359361852 \\ 
          3960 0.0317987892786455 \\ 
          4010 0.0315997533160797 \\ 
          4060 0.0314042769461689 \\ 
          4110 0.0312125852815043 \\ 
          4160 0.0310241131313626 \\ 
          4210 0.0308392902169431 \\ 
          4260 0.0306575961321240 \\ 
          4310 0.0304789475563264 \\ 
          4360 0.0303036230402132 \\ 
          4410 0.0301311203245904 \\ 
          4460 0.0299616498645418 \\ 
          4510 0.0297949483607636 \\ 
          4560 0.0296310127576617 \\ 
          4610 0.0294696716854204 \\ 
          4660 0.0293110583804552 \\ 
          4710 0.0291548746345834 \\ 
          4760 0.0290011376503040 \\ 
          4810 0.0288500036611225 \\ 
          4860 0.0287010988121837 \\ 
          4910 0.0285544176095988 \\ 
          4960 0.0284100201968908 \\ 
          5010 0.0282678134149281 \\ 
          5060 0.0281276017123954 \\ 
          5110 0.0279896151569351 \\ 
          5160 0.0278535459244795 \\ 
          5210 0.0277194125334113 \\ 
          5260 0.0275873264292355 \\ 
          5310 0.0274569755066851 \\ 
          5360 0.0273284749599432 \\ 
          5410 0.0272018807196405 \\ 
          5460 0.0270769124087011 \\ 
          5510 0.0269536949789721 \\ 
          5560 0.0268321614503980 \\ 
          5610 0.0267121900822560 \\ 
          5660 0.0265938714249490 \\ 
          5710 0.0264771269332470 \\ 
          5760 0.0263618046302003 \\ 
          5810 0.0262481069755630 \\ 
          5860 0.0261358146463156 \\ 
          5910 0.0260248999129233 \\ 
          5960 0.0259154005915907 \\ 
          6010 0.0258073703293583 \\ 
          6060 0.0257005894799483 \\ 
          6110 0.0255951572598977 \\ 
          6160 0.0254909986036966 \\ 
          6210 0.0253881077766595 \\ 
          6260 0.0252864400736159 \\ 
          6310 0.0251859966241199 \\ 
          6360 0.0250867344882665 \\ 
          6410 0.0249886069056698 \\ 
          6460 0.0248916575024942 \\ 
          6510 0.0247958162536074 \\ 
          6560 0.0247011018888319 \\ 
          6610 0.0246074255670122 \\ 
          6660 0.0245148452906028 \\ 
          6710 0.0244232591423793 \\ 
          6760 0.0243327340808515 \\ 
          6810 0.0242431565053904 \\ 
          6860 0.0241546169296383 \\ 
          6910 0.0240670202404881 \\ 
          6960 0.0239803647502412 \\ 
          7010 0.0238946338112752 \\ 
          7060 0.0238098065012576 \\ 
          7110 0.0237258856823964 \\ 
          7160 0.0236428370509012 \\ 
          7210 0.0235607035798486 \\ 
          7260 0.0234793366104199 \\ 
          7310 0.0233988804938701 \\ 
          7360 0.0233192244277445 \\ 
          7410 0.0232403888179172 \\ 
          7460 0.0231622980846056 \\ 
          7510 0.0230850419651931 \\ 
          7560 0.0230085456419291 \\ 
          7610 0.0229327734369048 \\ 
          7660 0.0228577681757964 \\ 
          7710 0.0227834940346067 \\ 
          7760 0.0227098948803525 \\ 
          7810 0.0226370848704839 \\ 
          7860 0.0225649308555643 \\ 
          7910 0.0224934588921632 \\ 
          7960 0.0224226757601621 \\ 
          8010 0.0223525496972002 \\ 
          8060 0.0222830605681919 \\ 
          8110 0.0222142422729843 \\ 
          8160 0.0221460444160919 \\ 
          8210 0.0220784579336678 \\ 
          8260 0.0220115088696176 \\ 
          8310 0.0219451588051706 \\ 
          8360 0.0218793989425499 \\ 
          8410 0.0218142206883277 \\ 
          8460 0.0217496425209984 \\ 
          8510 0.0216856196228447 \\ 
          8560 0.0216221617371672 \\ 
          8610 0.0215592433945209 \\ 
          8660 0.0214968994841662 \\ 
          8710 0.0214350792586626 \\ 
          8760 0.0213737920711619 \\ 
          8810 0.0213130140415313 \\ 
          8860 0.0212527730050430 \\ 
          8910 0.0211930135193832 \\ 
          8960 0.0211337839772056 \\ 
          9010 0.0210750379220051 \\ 
          9060 0.0210167689352988 \\ 
          9110 0.0209589757436071 \\ 
          9160 0.0209016886782784 \\ 
          9210 0.0208448371702644 \\ 
          9260 0.0207884519722210 \\ 
          9310 0.0207325478144340 \\ 
          9360 0.0206770618730575 \\ 
          9410 0.0206220397973040 \\ 
          9460 0.0205674477635234 \\ 
          9510 0.0205132847463468 \\ 
          9560 0.0204595188162571 \\ 
          9610 0.0204062279753431 \\ 
          9660 0.0203533275850964 \\ 
          9710 0.0203008258591935 \\ 
          9760 0.0202487388853213 \\ 
          9810 0.0201970549078456 \\ 
          9860 0.0201457482974589 \\ 
          9910 0.0200948388764249 \\ 
          9960 0.0200443214329036 \\ 
        };
        \addlegendentry{Upper Bound 
        \\
        (\autoref{thm:gain})};
        
        \hide{
        \addplot[
          color=black,
          solid,
          dash dot,
        ] table[row sep=crcr]{%
          31 0.385097848942280 \\
          73 0.241897763211972 \\
          127 0.180968523412823 \\
          179 0.151598660307930 \\
          233 0.132454524631041 \\
          283 0.119959709127033 \\
          353 0.107221214198948 \\
          419 0.0983045269904235 \\
          467 0.0930578466593114 \\
          547 0.0859158187708564 \\
          607 0.0815216610590769 \\
          661 0.0780940298779198 \\
          739 0.0738277277311686 \\
          811 0.0704527634419962 \\
          877 0.0677338251767658 \\
          947 0.0651683470382580 \\
          1019 0.0628118488213274 \\
          1087 0.0608058731141923 \\
          1153 0.0590319274805029 \\
          1229 0.0571696946066386 \\
          1297 0.0556447546130825 \\
          1381 0.0539193976238007 \\
          1453 0.0525616502429260 \\
          1523 0.0513353476506035 \\
          1597 0.0501279562121122 \\
          1663 0.0491200215853842 \\
          1741 0.0480037494986593 \\
          1823 0.0469085826411913 \\
          1901 0.0459334855610506 \\
          1993 0.0448579717262804 \\
          2063 0.0440884274234651 \\
          2131 0.0433775658778507 \\
          2221 0.0424875146415709 \\
          2293 0.0418136175596968 \\
          2371 0.0411185621540784 \\
          2437 0.0405567488429469 \\
          2539 0.0397320591477272 \\
          2621 0.0391043527469940 \\
          2689 0.0386057856830423 \\
          2749 0.0381813541064502 \\
          2833 0.0376099984588591 \\
          2909 0.0371145672406389 \\
          3001 0.0365402441636364 \\
          3083 0.0360502050705740 \\
          3187 0.0354561516947378 \\
          3259 0.0350616759380375 \\
          3343 0.0346176842966358 \\
          3433 0.0341602075153168 \\
          3517 0.0337491944159301 \\
          3581 0.0334458110325590 \\
          3659 0.0330868944885838 \\
          3733 0.0327568502760334 \\
          3823 0.0323684499846403 \\
          3911 0.0320017359361852 \\
          4001 0.0316392888108090 \\
          4073 0.0313580337549633 \\
          4153 0.0310541576079800 \\
          4241 0.0307298876393150 \\
          4327 0.0304226050038539 \\
          4421 0.0300970597932223 \\
          4507 0.0298081966470075 \\
          4591 0.0295339323821665 \\
          4663 0.0293047781339332 \\
          4759 0.0290073753272802 \\
          4861 0.0287010988121837 \\
          4943 0.0284617890800056 \\
          5009 0.0282734620832225 \\
          5099 0.0280225751587006 \\
          5189 0.0277782506385354 \\
          5281 0.0275349867727019 \\
          5393 0.0272472908613876 \\
          5449 0.0271067838703316 \\
          5527 0.0269146511333324 \\
          5641 0.0266410494766947 \\
          5701 0.0265003592930950 \\
          5801 0.0262707488387506 \\
          5861 0.0261358146463156 \\
          5953 0.0259328947811172 \\
          6067 0.0256878877523685 \\
          6143 0.0255283557364561 \\
          6229 0.0253513685902277 \\
          6311 0.0251859966241199 \\
          6373 0.0250630857073536 \\
          6481 0.0248532134135563 \\
          6577 0.0246710201340711 \\
          6679 0.0244817634412761 \\
          6763 0.0243291331405953 \\
          6841 0.0241899315062559 \\
          6947 0.0240045338903621 \\
          7001 0.0239117102814891 \\
          7109 0.0237292484032141 \\
          7211 0.0235607035798486 \\
          7307 0.0234053102049764 \\
          7417 0.0232309801682133 \\
          7507 0.0230912077486838 \\
          7573 0.0229902965394718 \\
          7649 0.0228757186269718 \\
          7727 0.0227598888488926 \\
          7841 0.0225937204361038 \\
          7927 0.0224707461212597 \\
          8039 0.0223135634358660 \\
          8117 0.0222060252505720 \\
          8221 0.0220650301888226 \\
          8293 0.0219689767902884 \\
          8389 0.0218428354494855 \\
          8513 0.0216830735207113 \\
          8599 0.0215743074612060 \\
          8681 0.0214721100785181 \\
          8747 0.0213909002481090 \\
          8837 0.0212816300366226 \\
          8933 0.0211669011763051 \\
          9013 0.0210726981553735 \\
          9127 0.0209406061351691 \\
          9203 0.0208539124152511 \\
          9293 0.0207526281479012 \\
          9391 0.0206440018771082 \\
          9461 0.0205674477635234 \\
          9539 0.0204831397437747 \\
          9643 0.0203723250784041 \\
          9739 0.0202716149540080 \\
          9817 0.0201908786422852 \\
          9901 0.0201050015926493 \\
          10009 0.0199961815840053 \\
          10103 0.0199028923785787 \\
          10181 0.0198264655442021 \\
          10273 0.0197374428284566 \\
          10357 0.0196571998520610 \\
          10463 0.0195573239030035 \\
          10589 0.0194405607625765 \\
          10663 0.0193729528042995 \\
          10753 0.0192916697790561 \\
          10861 0.0191954675746925 \\
          10957 0.0191111518890630 \\
          11069 0.0190141739464532 \\
          11159 0.0189373060544924 \\
          11257 0.0188546565721783 \\
          11351 0.0187763888264464 \\
          11447 0.0186974534048300 \\
          11549 0.0186146657601078 \\
          11677 0.0185123147470484 \\
          11779 0.0184319515713403 \\
          11839 0.0183851652321158 \\
          11939 0.0183079735261502 \\
          12037 0.0182332612451078 \\
          12113 0.0181759466761737 \\
          12227 0.0180909792973584 \\
          12301 0.0180364586337037 \\
          12409 0.0179577652989220 \\
          12491 0.0178986998079348 \\
          12569 0.0178430530398393 \\
          12647 0.0177879220798106 \\
          12743 0.0177207649882266 \\
          12841 0.0176529873872980 \\
          12941 0.0175846220506252 \\
          13009 0.0175385847535401 \\
          13121 0.0174635405432822 \\
          13217 0.0173999779062339 \\
          13313 0.0173371043078542 \\
          13417 0.0172697544764368 \\
          13513 0.0172082770355439 \\
          13627 0.0171361183576506 \\
          13709 0.0170847724595529 \\
          13789 0.0170351211076555 \\
          13883 0.0169773302585448 \\
          13997 0.0169080264133073 \\
          14083 0.0168563025787765 \\
          14207 0.0167825527409847 \\
          14327 0.0167120958276713 \\
          14423 0.0166563647731726 \\
          14533 0.0165931866301762 \\
          14621 0.0165431584800706 \\
          14713 0.0164913371364208 \\
          14771 0.0164589163628654 \\
          14867 0.0164056720182586 \\
          14951 0.0163595046577556 \\
          15077 0.0162909786384222 \\
          15161 0.0162457701469552 \\
          15263 0.0161913768203996 \\
          15329 0.0161564709043230 \\
          15413 0.0161123699784704 \\
          15511 0.0160613725138093 \\
          15619 0.0160057282783985 \\
          15683 0.0159730255773833 \\
          15787 0.0159203085776035 \\
          15887 0.0158701082613837 \\
          15973 0.0158273136936724 \\
          16073 0.0157779852351630 \\
          16187 0.0157223095021005 \\
          16273 0.0156806963569300 \\
          16411 0.0156146068276864 \\
          16487 0.0155785646529679 \\
          16607 0.0155221606536127 \\
          16693 0.0154821125803074 \\
          16823 0.0154221589376645 \\
          16921 0.0153774207309458 \\
          17011 0.0153366757448324 \\
          17099 0.0152971477481305 \\
          17203 0.0152508245570906 \\
          17321 0.0151987717625273 \\
          17393 0.0151672713574432 \\
          17483 0.0151281699063706 \\
          17579 0.0150867932136032 \\
          17681 0.0150432003462568 \\
          17789 0.0149974525886878 \\
          17903 0.0149496132207692 \\
          17977 0.0149188035278122 \\
          18061 0.0148840601481002 \\
          18149 0.0148479214151224 \\
          18251 0.0148063609222253 \\
          18329 0.0147748137937482 \\
          18433 0.0147330629311885 \\
          18521 0.0146980103735281 \\
          18661 0.0146427569326650 \\
          18757 0.0146052270175148 \\
          18911 0.0145456209268806 \\
          19013 0.0145065409820957 \\
          19139 0.0144586978622844 \\
          19231 0.0144240622687924 \\
          19333 0.0143859513850311 \\
          19427 0.0143510957521452 \\
          19483 0.0143304507350459 \\
          19577 0.0142959960007098 \\
          19709 0.0142480296009629 \\
          19801 0.0142148825706581 \\
          19913 0.0141748402049083 \\
          19993 0.0141464448071581 \\
          20089 0.0141125945362803 \\
          20161 0.0140873656970867 \\
          20269 0.0140497748540736 \\
          20359 0.0140186779719843 \\
          20477 0.0139782176060072 \\
          20563 0.0139489492641691 \\
          20707 0.0139003507846338 \\
          20773 0.0138782455942087 \\
          20899 0.0138363359937346 \\
          21001 0.0138026858537572 \\
          21089 0.0137738507929054 \\
          21179 0.0137445465221272 \\
          21283 0.0137109156371033 \\
          21391 0.0136762512382557 \\
          21493 0.0136437528958505 \\
          21569 0.0136196885020680 \\
          21649 0.0135944946173991 \\
          21757 0.0135607035953083 \\
          21851 0.0135314970841078 \\
          21961 0.0134975577061444 \\
          22051 0.0134699782696270 \\
          22129 0.0134462123459105 \\
          22247 0.0134104966727070 \\
          22343 0.0133816488728915 \\
          22447 0.0133506062115888 \\
          22549 0.0133203694168238 \\
          22651 0.0132903371390447 \\
          22739 0.0132645895147103 \\
          22853 0.0132314560638933 \\
          22961 0.0132002944231565 \\
          23039 0.0131779252253726 \\
          23117 0.0131556693639494 \\
          23209 0.0131295632376200 \\
          23327 0.0130963056869855 \\
          23459 0.0130594001501768 \\
          23563 0.0130305417928269 \\
          23633 0.0130112252721051 \\
          23747 0.0129799499825539 \\
          23831 0.0129570488074704 \\
          23911 0.0129353504801012 \\
          24019 0.0129062299111090 \\
          24097 0.0128853202803493 \\
          24179 0.0128634475713900 \\
          24317 0.0128268875341617 \\
          24419 0.0128000643657780 \\
          24527 0.0127718459770445 \\
          24671 0.0127345101408277 \\
          24781 0.0127062092718925 \\
          24889 0.0126786057350072 \\
          24979 0.0126557397001364 \\
          25111 0.0126224255250930 \\
          25189 0.0126028631082241 \\
          25307 0.0125734408004041 \\
          25409 0.0125481733215383 \\
          25523 0.0125201126744034 \\
          25609 0.0124990682478091 \\
          25703 0.0124761871458633 \\
          25801 0.0124524656695271 \\
          25919 0.0124240818266652 \\
          26003 0.0124039942328884 \\
          26113 0.0123778357930775 \\
          26209 0.0123551413417020 \\
          26309 0.0123316335258146 \\
          26407 0.0123087255655171 \\
          26501 0.0122868721506443 \\
          26641 0.0122545392153519 \\
          26713 0.0122380099304621 \\
          26813 0.0122151631553287 \\
          26893 0.0121969775777069 \\
          26993 0.0121743594153675 \\
          27091 0.0121523152528493 \\
          27239 0.0121192498779562 \\
          27337 0.0120975031903308 \\
        };
        \addlegendentry{Upper Bound~\cite{gooty2025precodingdesignlimitedfeedbackmiso}};
        }
        
        \addplot[
           color=black,
           solid,
           solid,
        ] table[row sep=crcr]{%
          10 0.330004772 \\ 
          60 0.131860269 \\ 
          110 0.0958628943 \\ 
          160 0.0794391346 \\ 
          210 0.0696297929 \\ 
          260 0.0622577291 \\ 
          310 0.0571788857 \\ 
          360 0.0525872618 \\ 
          410 0.0491795881 \\ 
          460 0.0468567556 \\ 
          510 0.0440553646 \\ 
          560 0.0423621190 \\ 
          610 0.0405834527 \\ 
          660 0.0390702143 \\ 
          710 0.0374516304 \\ 
          760 0.0363947160 \\ 
          810 0.0352476744 \\ 
          860 0.0341629835 \\ 
          910 0.0332452145 \\ 
          960 0.0322634696 \\ 
          1010 0.0315180302 \\ 
          1060 0.0307929470 \\ 
          1110 0.0300077520 \\ 
          1160 0.0294051524 \\ 
          1210 0.0287899102 \\ 
          1260 0.0280567697 \\ 
          1310 0.0276040518 \\ 
          1360 0.0271725821 \\ 
          1410 0.0265721561 \\ 
          1460 0.0261332002 \\ 
          1510 0.0257832394 \\ 
          1560 0.0253168227 \\ 
          1610 0.0249517558 \\ 
          1660 0.0245724525 \\ 
          1710 0.0241536288 \\ 
          1760 0.0237688552 \\ 
          1810 0.0235435253 \\ 
          1860 0.0232240725 \\ 
          1910 0.0229052992 \\ 
          1960 0.0225534556 \\ 
          2010 0.0223384658 \\ 
          2060 0.0220543378 \\ 
          2110 0.0218013941 \\ 
          2160 0.0215469841 \\ 
          2210 0.0212916517 \\ 
          2260 0.0210359015 \\ 
          2310 0.0208340574 \\ 
          2360 0.0205681928 \\ 
          2410 0.0203962854 \\ 
          2460 0.0201629435 \\ 
          2510 0.0199453306 \\ 
          2560 0.0197193076 \\ 
          2610 0.0195751393 \\ 
          2660 0.0194048888 \\ 
          2710 0.0192319094 \\ 
          2760 0.0190359061 \\ 
          2810 0.0188591341 \\ 
          2860 0.0187198139 \\ 
          2910 0.0185389041 \\ 
          2960 0.0183940883 \\ 
          3010 0.0182465531 \\ 
          3060 0.0180965987 \\ 
          3110 0.0179272529 \\ 
          3160 0.0178018259 \\ 
          3210 0.0176515054 \\ 
          3260 0.0175049284 \\ 
          3310 0.0173933472 \\ 
          3360 0.0172684441 \\ 
          3410 0.0171361459 \\ 
          3460 0.0170166637 \\ 
          3510 0.0168948337 \\ 
          3560 0.0167520844 \\ 
          3610 0.0166542115 \\ 
          3660 0.0165219214 \\ 
          3710 0.0164147515 \\ 
          3760 0.0163226412 \\ 
          3810 0.0161937967 \\ 
          3860 0.0161053364 \\ 
          3910 0.0160060219 \\ 
          3960 0.0158924772 \\ 
          4010 0.0158009599 \\ 
          4060 0.0156839876 \\ 
          4110 0.0156111687 \\ 
          4160 0.0154871973 \\ 
          4210 0.0154243920 \\ 
          4260 0.0153334827 \\ 
          4310 0.0152159580 \\ 
          4360 0.0151529102 \\ 
          4410 0.0150530548 \\ 
          4460 0.0149819245 \\ 
          4510 0.0148985736 \\ 
          4560 0.0148198459 \\ 
          4610 0.0147232031 \\ 
          4660 0.0146566267 \\ 
          4710 0.0145662048 \\ 
          4760 0.0144713817 \\ 
          4810 0.0144260953 \\ 
          4860 0.0143545866 \\ 
          4910 0.0142696071 \\ 
          4960 0.0142003889 \\ 
          5010 0.0141378065 \\ 
          5060 0.0140455197 \\ 
          5110 0.0139958880 \\ 
          5160 0.0139224699 \\ 
          5210 0.0138422449 \\ 
          5260 0.0137973516 \\ 
          5310 0.0137166805 \\ 
          5360 0.0136424744 \\ 
          5410 0.0136020030 \\ 
          5460 0.0135296325 \\ 
          5510 0.0134705896 \\ 
          5560 0.0134171333 \\ 
          5610 0.0133452845 \\ 
          5660 0.0132909556 \\ 
          5710 0.0132419190 \\ 
          5760 0.0131637141 \\ 
          5810 0.0131250880 \\ 
          5860 0.0130711635 \\ 
          5910 0.0130025068 \\ 
          5960 0.0129392442 \\ 
          6010 0.0129068472 \\ 
          6060 0.0128470805 \\ 
          6110 0.0127985900 \\ 
          6160 0.0127465067 \\ 
          6210 0.0126970976 \\ 
          6260 0.0126442196 \\ 
          6310 0.0125959858 \\ 
          6360 0.0125463273 \\ 
          6410 0.0124875177 \\ 
          6460 0.0124410483 \\ 
          6510 0.0123912924 \\ 
          6560 0.0123515263 \\ 
          6610 0.0122991150 \\ 
          6660 0.0122602271 \\ 
          6710 0.0122071475 \\ 
          6760 0.0121691228 \\ 
          6810 0.0121136607 \\ 
          6860 0.0120782595 \\ 
          6910 0.0120361957 \\ 
          6960 0.0119928452 \\ 
          7010 0.0119482559 \\ 
          7060 0.0119007922 \\ 
          7110 0.0118572179 \\ 
          7160 0.0118108265 \\ 
          7210 0.0117829060 \\ 
          7260 0.0117244797 \\ 
          7310 0.0116939702 \\ 
          7360 0.0116557822 \\ 
          7410 0.0116226672 \\ 
          7460 0.0115712197 \\ 
          7510 0.0115403530 \\ 
          7560 0.0115066876 \\ 
          7610 0.0114612664 \\ 
          7660 0.0114253060 \\ 
          7710 0.0113896819 \\ 
          7760 0.0113368848 \\ 
          7810 0.0113165261 \\ 
          7860 0.0112804728 \\ 
          7910 0.0112433420 \\ 
          7960 0.0112122026 \\ 
          8010 0.0111785291 \\ 
          8060 0.0111382497 \\ 
          8110 0.0111093423 \\ 
          8160 0.0110752271 \\ 
          8210 0.0110360500 \\ 
          8260 0.0110065970 \\ 
          8310 0.0109747368 \\ 
          8360 0.0109405347 \\ 
          8410 0.0109040594 \\ 
          8460 0.0108769330 \\ 
          8510 0.0108436343 \\ 
          8560 0.0108119018 \\ 
          8610 0.0107741942 \\ 
          8660 0.0107492637 \\ 
          8710 0.0107183501 \\ 
          8760 0.0106889214 \\ 
          8810 0.0106536891 \\ 
          8860 0.0106283844 \\ 
          8910 0.0105913709 \\ 
          8960 0.0105676853 \\ 
          9010 0.0105394770 \\ 
          9060 0.0105068555 \\ 
          9110 0.0104722398 \\ 
          9160 0.0104527638 \\ 
          9210 0.0104186782 \\ 
          9260 0.0103871621 \\ 
          9310 0.0103681562 \\ 
          9360 0.0103348888 \\ 
          9410 0.0103117837 \\ 
          9460 0.0102855702 \\ 
          9510 0.0102584770 \\ 
          9560 0.0102177147 \\ 
          9610 0.0102038651 \\ 
          9660 0.0101784641 \\ 
          9710 0.0101480275 \\ 
          9760 0.0101230401 \\ 
          9810 0.0101002945 \\ 
          9860 0.0100695315 \\ 
          9910 0.0100430934 \\ 
          9960 0.0100208813 \\ 
        };
        \addlegendentry{Random Codes 
        \\
        (\autoref{thm:random-chordal})};
        
        \end{axis}
        
    \end{tikzpicture}
    \caption{Bounds on the average covering radius of an $(n, n - 1)_{p}$ CRS code for $p = n + o(n)$}
    \label{fig:asymptotic-bounds-comparison}
\end{figure}

\section{Conclusion and Future Directions}
\label{sec:conclusion}

In this work, we developed efficient covering algorithms for both block and subspace codes based on RS codes. Central to our approach is a puncturing-based covering algorithm that leverages existing decoding techniques and enables efficient identification of codewords within the covering radius of the code. For block codes, we showed that when combined with the Guruswami–Sudan list decoder, the proposed algorithm achieves an average covering radius that closely approaches the optimal performance of the MAP decoder, despite operating with polynomial-time complexity.

From an information-theoretic perspective, we introduced the notion of the average covering radius for both block and subspace codes as a meaningful measure of average distortion in quantization problems. Using tools from one-shot compression and rate-distortion theory, we derived explicit random-coding bounds on the average covering radius in both Hamming and Grassmann spaces. These bounds provide fundamental benchmarks against which structured code constructions can be evaluated. In particular, our numerical results demonstrated that the RS-based constructions outperform random codebooks in various fixed block-length scenarios. The analysis provides new insights into the role of algebraic structures in covering problems and applications.

Several directions for future work naturally emerge from this study. First, it would be of interest to extend the present analysis to CP/CRS codes over arbitrary finite fields, as well as to more general Grassmannian code constructions. Second, the performance of the CRS covering algorithm may be further improved by refining the mapping step in \autoref{step:CRScover:psi} of \autoref{alg:CP-cover}, potentially through more deterministic or structure-aware choices. Additionally, the upper bound in \autoref{thm:gain} could be tightened by replacing Jensen’s inequality with sharper concentration bounds or by incorporating additional sources of randomness into the analysis.

A notable limitation of CRS codes is the constant-modulus constraint on codeword coordinates. This, while advantageous for practical wireless implementations, restricts geometric flexibility and likely contributes to their inferior performance relative to random codes in short-length regimes. This observation motivates the exploration of multi-level CRS codes with variable coordinate amplitudes, and a systematic study of their impact on average covering performance.

Finally, many of the concepts and techniques developed in this work extend naturally beyond the Hamming and Grassmann spaces. In particular, investigating average covering properties and efficient covering algorithms for codes in other metric spaces, such as rank-metric codes, is another promising avenue for future research.

\bibliographystyle{IEEEtran}
\bibliography{references}

@BOOK{ecc,
    url = {https://cse.buffalo.edu/faculty/atri/courses/coding-theory/book/},
    author = {Guruswami, Venkatesan and Rudra, Atri and Sudan, Madhu},
    title = {Essential Coding Theory},
    year = {2022},
    publisher = {online}
}

@ARTICLE{Hessam22,
    author = {Soleymani, Mahdi and Mahdavifar, Hessam},
    journal = {IEEE Transactions on Information Theory}, 
    title = {Analog Subspace Coding: A New Approach to Coding for Non-Coherent Wireless Networks}, 
    year = {2022},
    volume = {68},
    number = {4},
    pages = {2349-2364},
    publisher = {IEEE},
    doi = {10.1109/TIT.2021.3133071},
    URL = {https://doi.org/10.1109/TIT.2021.3133071},
    eprint = {https://doi.org/10.1109/TIT.2021.3133071}
}

@ARTICLE{Conway96,
    author = {John H. Conway and Ronald H. Hardin and Neil J. A. Sloane},
    title = {{Packing Lines, Planes, etc.: Packings in Grassmannian Spaces}},
    journal = {Experimental Mathematics},
    volume = {5},
    number = {2},
    pages = {139-159},
    year = {1996},
    publisher = {Taylor & Francis},
    doi = {10.1080/10586458.1996.10504585},
    URL = {https://doi.org/10.1080/10586458.1996.10504585},
    eprint = {https://doi.org/10.1080/10586458.1996.10504585}
}

@INPROCEEDINGS{Hessam21,
    author = {Soleymani, Mahdi and Mahdavifar, Hessam},
    booktitle = {2021 IEEE International Symposium on Information Theory (ISIT)}, 
    title = {New Packings in {Grassmannian} Space}, 
    year = {2021},
    volume = {},
    number = {},
    pages = {807-812},
    doi = {10.1109/ISIT45174.2021.9517770}
}

@ARTICLE{Guruswami06,
    author = {Venkatesan Guruswami},
    title = {Algorithmic Results in List Decoding},
    journal = {Foundations and Trends in Theoretical Computer Sciences},
    volume = {2},
    number = {2},
    pages = {107-195},
    year = {2006},
    publisher = {},
    doi = {10.1561/0400000007},
    URL = {https://www.cs.cmu.edu/~venkatg/pubs/papers/listdecoding-NOW.pdf},
    eprint = {https://www.cs.cmu.edu/~venkatg/pubs/papers/listdecoding-NOW.pdf}
}

@article{Roth00,
    author = {Roth, R.M. and Ruckenstein, G.},
    journal = {IEEE Transactions on Information Theory}, 
    title = {{Efficient decoding of Reed--Solomon codes beyond half the minimum distance}}, 
    year = {2000},
    volume = {46},
    number = {1},
    pages = {246-257},
    doi = {10.1109/18.817522}
}

@article{McEliece03,
    author = {McEliece, R. J.},
    journal = {IPN Progress Report 42-153}, 
    title = {{The Guruswami–Sudan Decoding Algorithm for Reed–Solomon Codes}}, 
    year = {2003},
    url = {https://tmo.jpl.nasa.gov/progress_report/42-153/153F.pdf},
    eprint = {https://tmo.jpl.nasa.gov/progress_report/42-153/153F.pdf}
}

@inproceedings{McEliece03-1,
    title = {{On the average list size for the Guruswami--Sudan decoder}},
    author = {McEliece, Robert J},
    booktitle = {7th International Symposium on Communications Theory and Applications (ISCTA)},
    volume = {194},
    year = {2003}
}

@InProceedings{Shokrollahi00,
    author = "Gao, Shuhong
    and Shokrollahi, M. Amin",
    editor = "Joyner, David",
    title = "{Computing Roots of Polynomials over Function Fields of Curves}",
    booktitle = "Coding Theory and Cryptography",
    year = "2000",
    publisher = "Springer Berlin Heidelberg",
    address = "Berlin, Heidelberg",
    pages = "214--228",
    abstract = "We design algorithms for finding roots of polynomials over function fields of curves. Such algorithms are useful for list decoding of Reed--Solomon and algebraic-geometric codes. In the first half of the paper we will focus on bivariate polynomials, i.e., polynomials over the coordinate ring of the affine line. In the second half we will design algorithms for computing roots of polynomials over the function field of a nonsingular absolutely irreducible plane algebraic curve. Several examples are included.",
    isbn = "978-3-642-59663-6"
}

@article{KK,
    title={Coding for errors and erasures in random network coding},
    author={Koetter, Ralf and Kschischang, Frank R},
    journal={IEEE Transactions on Information Theory},
    volume={54},
    number={8},
    pages={3579--3591},
    year={2008},
    publisher={IEEE}
}

@article{barg2002bounds,
    title={Bounds on packings of spheres in the {Grassmann} manifold},
    author={Barg, Alexander and Nogin, D {\relax Yu}},
    journal={IEEE Transactions on Information Theory},
    volume={48},
    number={9},
    pages={2450--2454},
    year={2002},
    publisher={IEEE}
}

@book{Huffman03, 
    place={Cambridge}, 
    title={Fundamentals of Error-Correcting Codes}, 
    publisher={Cambridge University Press}, 
    author={Huffman, W. Cary and Pless, Vera}, 
    year={2003}
}

@book{Macwilliams77,
    title={The Theory of Error-Correcting Codes},
    author={MacWilliams, F.J. and Sloane, N.J.A.},
    isbn={9780444850102},
    lccn={76041296},
    series={Mathematical Library},
    year={1977},
    publisher={North-Holland Publishing Company}
}

@book{Cohen97,
    title={Covering Codes},
    author={Cohen, G{\'e}rard and Honkala, Iiro and Litsyn, Simon and Lobstein, Antoine},
    year={1997},
    publisher={Elsevier}
}

@article{Brualdi98,
    title={{Covering Radius}},
    author={Brualdi, RA and Litsyn, S and Pless, VS},
    journal={Handbook of Coding Theory},
    volume={1},
    pages={755--826},
    year={1998},
    publisher={Elsevier Amsterdam}
}

@ARTICLE{Cohen86,
    author={Cohen, G. and Lobstein, A. and Sloane, N.},
    journal={IEEE Transactions on Information Theory}, 
    title={{Further Results on the Covering Radius of Codes}}, 
    year={1986},
    volume={32},
    number={5},
    pages={680-694},
    keywords={},
    doi={10.1109/TIT.1986.1057227}
}

@ARTICLE{Cohen85,
    author={Cohen, G. and Karpovsky, M. and Mattson, H. and Schatz, J.},
    journal={IEEE Transactions on Information Theory}, 
    title={{Covering Radius---Survey and Recent Results}}, 
    year={1985},
    volume={31},
    number={3},
    pages={328-343},
    keywords={},
    doi={10.1109/TIT.1985.1057043}
}

@InProceedings{vanLint88,
    author="van Lint, J. H.",
    editor="Mora, Teo",
    title="Recent results on covering problems",
    booktitle="Applied Algebra, Algebraic Algorithms and Error-Correcting Codes",
    year="1989",
    publisher="Springer Berlin Heidelberg",
    address="Berlin, Heidelberg",
    pages="7--21",
    isbn="978-3-540-46152-4"
}

@inproceedings{gooty2025precodingdesignlimitedfeedbackmiso,
    title={{Precoding Design for Limited-Feedback MISO Systems via Character-Polynomial Codes}}, 
    author={Siva Aditya Gooty and Samin Riasat and Hessam Mahdavifar and Robert W. {Heath Jr}},
    year={2025},
    note = {{(extended version)}}, 
    url = {https://arxiv.org/abs/2501.06178}, 
    eprint={2501.06178},
    archivePrefix={arXiv},
    primaryClass={eess.SP},
    booktitle = {IEEE International Conference on Communications (ICC): Communication Theory Symposium}
}

@inproceedings{riasat2024decodinganalogsubspacecodes,
    title={{Decoding Analog Subspace Codes: Algorithms for Character-Polynomial Codes}}, 
    author={Samin Riasat and Hessam Mahdavifar},
    year={2024},
    eprint={2407.03606},
    archivePrefix={arXiv},
    primaryClass={cs.IT},
    url={https://arxiv.org/abs/2407.03606}, 
    booktitle = {IEEE International Symposium on Information Theory (ISIT)}, 
}

@inproceedings{riasat2025covering,
    title={{Efficient Covering Using Reed--Solomon Codes}}, 
    author={Samin Riasat and Hessam Mahdavifar},
    year={2025},
    eprint={2502.01984},
    archivePrefix={arXiv},
    primaryClass={cs.IT},
    url={https://arxiv.org/abs/2502.01984}, 
    booktitle = {IEEE International Symposium on Information Theory (ISIT)}, 
}

@article{torquato10,
    title = {Reformulation of the covering and quantizer problems as ground states of interacting particles},
    author = {Torquato, S.},
    journal = {Phys. Rev. E},
    volume = {82},
    issue = {5},
    pages = {056109},
    numpages = {22},
    year = {2010},
    month = {Nov},
    publisher = {American Physical Society},
    doi = {10.1103/PhysRevE.82.056109},
    url = {https://link.aps.org/doi/10.1103/PhysRevE.82.056109}
}

@misc{toth22,
    title={Packing and covering in higher dimensions}, 
    author={Gábor Fejes Tóth},
    year={2022},
    eprint={2202.11358},
    archivePrefix={arXiv},
    primaryClass={math.MG},
    url={https://arxiv.org/abs/2202.11358}, 
}

@INPROCEEDINGS{elkayam20,
    author={Elkayam, Nir and Feder, Meir},
    booktitle={2020 IEEE International Symposium on Information Theory (ISIT)}, 
    title={One shot approach to lossy source coding under average distortion constraints}, 
    year={2020},
    volume={},
    number={},
    pages={2389-2393},
    keywords={Codes;Source coding;Distortion;Channel coding;Plugs},
    doi={10.1109/ISIT44484.2020.9173943}, 
    url={https://ieeexplore.ieee.org/document/9173943}
}

@article{kus88,
    doi = {10.1088/0305-4470/21/22/006},
    url = {https://doi.org/10.1088/0305-4470/21/22/006},
    year = {1988},
    month = {nov},
    publisher = {},
    volume = {21},
    number = {22},
    pages = {L1073},
    author = {M Kus and J Mostowski and F Haake},
    title = {Universality of eigenvector statistics of kicked tops of different symmetries},
    journal = {Journal of Physics A: Mathematical and General},
    abstract = {The authors show that the eigenvectors of the Floquet operators of periodically kicked tops with orthogonal, unitary and symplectic canonical transformations conform to the predictions of the respective circular ensembles of random matrices.}
}

@misc{gooty2025efficientdecoderssensingsubspace,
      title={Efficient Decoders for Sensing Subspace Code}, 
      author={Siva Aditya Gooty and Hessam Mahdavifar},
      year={2025},
      eprint={2512.05028},
      archivePrefix={arXiv},
      primaryClass={eess.SP},
      url={https://arxiv.org/abs/2512.05028}, 
}

@ARTICLE{McLoughlin84,
    author={McLoughlin, A.},
    journal={IEEE Transactions on Information Theory}, 
    title={The complexity of computing the covering radius of a code}, 
    year={1984},
    volume={30},
    number={6},
    pages={800-804},
    keywords={Vectors;Upper bound;Polynomials;Codes;NP-complete problem;Linear codes;Decoding;Power measurement;Measurement uncertainty;Hands},
    doi={10.1109/TIT.1984.1056978}
}

@article{Dur94,
    title = {On the covering radius of {Reed--Solomon} codes},
    journal = {Discrete Mathematics},
    volume = {126},
    number = {1},
    pages = {99-105},
    year = {1994},
    issn = {0012-365X},
    doi = {https://doi.org/10.1016/0012-365X(94)90256-9},
    author = {Arne Dür},
    abstract = {For doubly-extended Reed-Solomon codes over GF(q) with minimum distance d the covering radius ϱ is either d − 1 or d − 2. For 3⩽d⩽q, it is proved that ϱ=d − 2 if and only if the (q+1)-arc consisting of the points of a normal rational curve in PG(d − 2,q) is complete. For ϱ=d − 1 a characterization of the deep holes of the sphere packing in Hamming space defined by the code is given in terms of their syndromes.}
}

@article{Dur91,
    title = {The decoding of extended {Reed--Solomon} codes},
    journal = {Discrete Mathematics},
    volume = {90},
    number = {1},
    pages = {21-40},
    year = {1991},
    issn = {0012-365X},
    doi = {https://doi.org/10.1016/0012-365X(91)90093-H},
    author = {Arne Dür},
    abstract = {Cauchy codes are a class of maximum distance separable codes that include Reed-Solomon codes, singly- and doubly-extended Reed-Solomon codes, and reversible BCH over GF(2m) of length 2m+1. The decoding problem for Cauchy codes is studied by using an analogue of the classical theory of apolarity of binary forms, and Berlekamp's decoding algorithm for Reed-Solomon codes is extended to Cauchy codes. The covering radius of a Cauchy code over GF(q) of length n and minimum distance d is shown to be either d−2 or d−1, and the exact value is determined unless n = q + 1 and q⧸2+3<d<q. If n = q + 1 and d = q is even, the covering radius is q − 1, and the determination of all cosets with leaders of weight q − 1 is equivalent to the determination of all ovals with q + 2 points in the projective plane.}
}

@misc{qian2022coveringgrassmanniancodesbounds,
    title={{Covering Grassmannian Codes: Bounds and Constructions}}, 
    author={Bingchen Qian and Xin Wang and Chengfei Xie and Gennian Ge},
    year={2022},
    eprint={2207.09277},
    archivePrefix={arXiv},
    primaryClass={cs.IT},
    url={https://arxiv.org/abs/2207.09277}, 
}

@misc{blackburn2011asymptoticbehaviorgrassmanniancodes,
    title={{The asymptotic behavior of Grassmannian codes}}, 
    author={Simon R. Blackburn and Tuvi Etzion},
    year={2011},
    eprint={1111.2713},
    archivePrefix={arXiv},
    primaryClass={cs.DM},
    url={https://arxiv.org/abs/1111.2713}, 
}

@article{reed1960polynomial,
    title={Polynomial codes over certain finite fields},
    author={Reed, Irving S and Solomon, Gustave},
    journal={Journal of the society for industrial and applied mathematics},
    volume={8},
    number={2},
    pages={300--304},
    year={1960},
    publisher={SIAM}
}

@inproceedings{mahdavifar2024subspace,
    title={Subspace Coding for Spatial Sensing},
    author={Mahdavifar, Hessam and Rajam{\"a}ki, Robin and Pal, Piya},
    booktitle={2024 IEEE International Symposium on Information Theory (ISIT)},
    pages={2394--2399},
    year={2024},
    organization={IEEE}
}

@article{shi2022covering,
    title={Covering radius of {Melas} codes},
    author={Shi, Minjia and Helleseth, Tor and {\"O}zbudak, Ferruh and Sol{\'e}, Patrick},
    journal={IEEE Transactions on Information Theory},
    volume={68},
    number={7},
    pages={4354--4364},
    year={2022},
    publisher={IEEE}
}

@article{shi2023covering,
    title={Covering radius of generalized {Zetterberg} type codes over finite fields of odd characteristic},
    author={Shi, Minjia and Helleseth, Tor and {\"O}zbudak, Ferruh},
    journal={IEEE Transactions on Information Theory},
    volume={69},
    number={11},
    pages={7025--7048},
    year={2023},
    publisher={IEEE}
}

@article{shi2025covering,
    title={The covering radius of {Euclidean} codes},
    author={Shi, Minjia and Xu, Xingxing and Sol{\'e}, Patrick},
    journal={Journal of Applied Mathematics and Computing},
    year={2025}
}

@article{shi2025determining,
    title={Determining the covering radius of all generalized {Zetterberg} codes in odd characteristic},
    author={Shi, Minjia and Li, Shitao and Helleseth, Tor and {\"O}zbudak, Ferruh},
    journal={IEEE Transactions on Information Theory},
    year={2025},
    publisher={IEEE}
}

@article{zhu2025dual,
    title={Dual and Covering Radii of Extended Algebraic Geometry Codes},
    author={Zhu, Yunlong and Zhao, Chang-An},
    journal={arXiv preprint arXiv:2509.21773},
    year={2025}
}

@article{Gautschi1959SomeEI,
    title={Some Elementary Inequalities Relating to the Gamma and Incomplete Gamma Function},
    author={Walter Gautschi},
    journal={Journal of Mathematics and Physics},
    year={1959},
    volume={38},
    pages={77-81},
}

@article{BHP01,
    author = {Baker, R. C. and Harman, G. and Pintz, J.},
    title = {{The Difference Between Consecutive Primes, II}},
    journal = {Proceedings of the London Mathematical Society},
    volume = {83},
    number = {3},
    pages = {532-562},
    keywords = {primes, Watt's mean value theorem, Dirichlet polynomial, sieves},
    doi = {https://doi.org/10.1112/plms/83.3.532},
    url = {https://londmathsoc.onlinelibrary.wiley.com/doi/abs/10.1112/plms/83.3.532},
    eprint = {https://londmathsoc.onlinelibrary.wiley.com/doi/pdf/10.1112/plms/83.3.532},
    abstract = {The authors sharpen a result of Baker and Harman (1995), showing that [x, x + x0.525] contains prime numbers for large x. An important step in the proof is the application of a theorem of Watt (1995) on a mean value containing the fourth power of the zeta function. 2000 Mathematical Subject Classification: 11N05.},
    year = {2001}
}

@misc{ordentlich2026voronoisphericalcdflattices,
    title={{The Voronoi Spherical CDF for Lattices and Linear Codes: New Bounds for Quantization and Coding}}, 
    author={Or Ordentlich},
    year={2026},
    eprint={2506.19791},
    archivePrefix={arXiv},
    primaryClass={cs.IT},
    url={https://arxiv.org/abs/2506.19791}, 
}

\end{document}